\definecolor{xdxdff}{rgb}{0.5,0.5,1}
\pgfplotsset{compat=1.15}
\newtheorem{theorem}{Theorem}
\pgfplotsset{compat=1.15} \usepackage{mathrsfs}
\renewcommand*{\backref}[1]{}
\renewcommand*{\backrefalt}[4]{
    \ifcase #1 (Not cited.)
    \or        (#2)
    \else      (#2)
    \fi
}
\newtheorem{clm}{Claim}
\newtheorem{lemma}{Lemma}
\newtheorem{definition}{Definition}
\newtheorem{corollary}{Corollary}
\newtheorem{prop}{Proposition}
\newtheorem{example}{Example}[section]
\newcommand{\points}{\mathcal{P}}
\newcommand{\strat}{\mathcal{S}}
\newcommand{\cand}{\mathcal{C}}
\newcommand{\mcV}{\mathcal{V}}
\newcommand{\mcI}{\mathcal{I}}
\newcommand{\integers}{\mathbb{Z}}
\newcommand{\rationals}{\mathbb{Q}}
\DeclareMathOperator{\poly}{poly}
\DeclareMathOperator{\near}{near}
\DeclareMathOperator{\util}{u}
\DeclareMathOperator{\utilout}{u^{\text{new}}}
\DeclareMathOperator{\mone}{\mathbbm{1}}
\DeclareMathOperator{\bestr}{best-r}
\DeclareMathOperator{\bestu}{best-u}
\DeclareMathOperator{\intr}{int-r}
\DeclareMathOperator{\intu}{int-u}
\DeclareMathOperator{\minu}{min-u}
\DeclareMathOperator{\maxd}{max-d}
\DeclareMathOperator{\bit}{bit}
\DeclareMathOperator{\mean}{\mu}
\begin{document}
\title{Equilibrium Computation in the Hotelling-Downs Model \\ of Spatial Competition}
%
%
\author{Umang Bhaskar\thanks{umang@tifr.res.in} }
\author{Soumyajit Pyne\thanks{soumyajit.pyne@tifr.res.in}}
\affil{Tata Institute of Fundamental Research, Mumbai}
\maketitle              

\begin{abstract}

    The Hotelling-Downs model is a natural and appealing model for understanding strategic positioning by candidates in elections. In this model, voters are distributed on a line, representing their ideological position on an issue. Each candidate then chooses as a strategy a position on the line to maximize her vote share. Each voter votes for the nearest candidate, closest to their ideological position. This sets up a game between the candidates, and we study pure Nash equilibria in this game. The model and its variants are an important tool in political economics, and are studied widely in computational social choice as well.

    Despite the interest and practical relevance, most prior work focuses on the existence and properties of pure Nash equilibria in this model, ignoring computational issues. Our work gives algorithms for computing pure Nash equilibria in the basic model. We give three algorithms, depending on whether the distribution of voters is continuous or discrete, and similarly, whether the possible candidate positions are continuous or discrete. In each case, our algorithms return either an exact equilibrium or one arbitrarily close to exact, assuming existence. We believe our work will be useful, and may prompt interest, in computing equilibria in the wide variety of extensions of the basic model as well.
\end{abstract}

\section{Introduction}

The Hotelling-Downs model of spatial competition, along with its variants and extensions, is possibly the most widely studied model of strategic positioning by candidates in social choice theory~\cite{Hotelling29,Downs57}. The basic model considers voters to be distributed arbitrarily in some bounded interval (say $[0,R]$) on a line. A set of $m$ candidates then choose their positions on the line, with each candidate trying to maximize her vote, with the assumption that each voter votes for the nearest candidate. While Hotelling introduced the model to study pricing and the location of firms in a market, Downs noted its application to political competition as well.

The model thus sets up a game between the candidates, each choosing as a strategy her location in $[0,R]$ to maximize her vote share, and thus a pure Nash equilibrium is the natural outcome studied for this model. Initially, the model was studied for just two candidates, and it was noted that in this case, both candidates would position themselves at the median of the set of voters. For three candidates, however, except in very special cases, an equilibrium does not exist~\cite{EatonL75,Osborne}. For more candidates, necessary conditions for the existence of equilibria are known~\cite{EatonL75}.


Despite the nonexistence of equilibrium, the model remains a fundamental and intuitive tool for understanding and analysing strategic positioning by candidates and firms. To remedy the nonexistence and extend the model to more situations, a large number of variants and extensions of the model have been studied, including when policy positions are chosen in a multidimensional space, rather than a line~\cite{EatonL75,Plott67}; when the set of candidates is not fixed, and candidates can choose to enter~\cite{FedersenSW90,Palfrey84} or leave~\cite{SenguptaS08} the election; when voting is costly, and voters may not vote if all candidates are far from their position~\cite{CallanderW06,JonesSF22,Smithies41}; or when some candidates are fixed in their positions and cannot strategise~\cite{JonesSF22,Ronayne18}. 

The spatial competition model has been studied in practice as well. The strategic behavior of candidates, and its effect on voters, is noted in multiple real-world elections~\cite{Adams12,AdamsI00}. 

Despite the importance of the model, most of the literature is focused on the existence, characterization, and various properties of equilibria, including the social welfare. Possibly due to the difficulty of computing equilibria given the continuous nature of the voter distribution, as well as the possible non-existence of equilibria, the task of deciding the existence of an equilibrium and computing an equilibrium, if it exists, has been largely ignored. Our work addresses the problem of computing equilibria in the basic Hotelling-Downs model. We present three different algorithms, depending on whether the set of possible candidate and voter locations are continuous or discrete. We first formally describe the model, and then state our results.

\subsection{Preliminaries}
\label{sec:preliminaries}

In general, the set of voters is a distribution of unit mass in a bounded interval $\points_v = [0,R]$, with $F$ being the measure of the set. We will only be interested in the mass of voters in an interval, and hence define $F(a,b)$ for $a,b \in [0,R]$ as the mass (or number) of voters in the interval $[a,b]$. We include the possibility of the measure having atoms (i.e., the measure is non-zero for a point), and hence $F(a,a)$ is the voter mass at the point $a \in [0,R]$. We will place further restrictions on the set of voters $\points_v$ in each of the following sections, and modify the definition accordingly.

We are further given $\points_c$ as a set of possible locations for $m$ candidates. A strategy profile $\strat = \{s_1, \ldots, s_m\}$ consists of $m$ \emph{distinct} positions in $\points_c$ for the candidates. If $\points_c$ is a continuous interval, we will insist that $|s_i - s_j| \ge \delta$ for all $i \neq j$ for some small constant $\delta$. It is useful to consider $\delta$ as approaching zero. If $\points_c$ is a discrete set, we just require that $s_i \neq s_j$ for all $i \neq j$. For ease of notation, we include two additional candidates that take fixed positions $s_0 = - \infty$ and $s_{m+1} = +\infty$. These are dummy candidates, and for any strategy profile $\strat$ for the remaining candidates, will get zero utility. For two points $s, s' \in \points_c$, define $\mu(s,s') = (s + s')/2$ as the point where the voters switch from voting from the candidate at $s$ to the candidate at $s'$.

Since each voter votes for the nearest candidate, given a strategy profile $\strat$ and $s \in \strat$, the utility for a candidate at a position $s \in \strat$ is

\[
\util(s,\strat) := \displaystyle F(\mean(s_l,s),\mean(s,s_r)) - \frac{F(\mean(s_l,s),\mean(s_l,s))}{2} - \frac{F(\mean(s,s_r),\mean(s,s_r))}{2} \, ,
\]

\noindent for $i \in [m]$, where $s_l$, $s_r \in \strat$ are the positions of the candidates immediately to the left and right of $s$.\footnote{These could possibly be $-\infty$ and $+\infty$ respectively.} The terms subtracted in the expression account for votes at $\mu(s_l,s)$ being evenly split between candidates at $s_l$ and $s$, and votes at $\mu(s,s^r)$ being evenly split between candidates at $s$ and $s^r$. Note that the utility for a candidate depends only on her position and that of her immediate neighbors to the left and right, and not the entire strategy profile.

In Section~\ref{sec:discrete-candidates} since we will iteratively compute a pure Nash equilibrium, we will often need to compute the utility of a candidate in a strategy profile with fewer candidates, thus we will relax the assumption that $|\strat| = m$. Further, given a strategy profile $\strat$ and a point $s \in \points_c \setminus \strat$, we define $\utilout(s,\strat)$ as the utility of a \emph{new} candidate at $s$.

\[
\utilout(s,\strat) = \util(s,\strat \cup \{s\}) \, .
\]

Finally, given points $p$, $q$ in $\points_c$ so that $p < q$, we use $\intr(p,q)$ and $\intu(p,q)$ to denote the best response and best utility for a candidate in the set $(p,q) \cap \points_c$, if there already exist candidates at positions $p$ and $q$. Thus, let $\strat' = \{p,q\}$. Then

\[
    \intr(p,q) = \arg \max_{z \in (p,q) \cap \points_c} \utilout(z,\strat') \quad \text{and} \quad \intu(p,q) = \max_{z \in (p,q) \cap \points_c} \utilout(z,\strat')\, .
\]

Note that if $\points_v$ and $\points_c$ are finite sets, all of the functions defined above can be computed in polynomial time with respect to $|\points_v|$ and $|\points_c|$. 

A strategy profile $\strat$ is a \emph{pure Nash equilibrium} (or simply an equilibrium) if for all candidates $i$, and all locations $s_i' \in \points_c$ that satisfy $s_i' \neq s_j$ for $j \neq i$,\footnote{If $\points_c$ is a continuous set, then we require that $|s_i' - s_j| \ge \delta$ for some small $\delta$ and all $j \neq i$.} $\util(s_i,\strat) \ge \util(s_i', \strat_{-i})$. It is known that for three candidates, no equilibrium exists except in trivial instances, and may not exist for more candidates as well~\cite{EatonL75}. Our objective in this paper is to compute a pure Nash equilibrium when it exists. In~\Cref{sec:bothcontinuous}, when $\points_c = \points_v = [0,R]$, we will be interested in approximate equilibria.

\begin{definition}[$\epsilon$-equilibrium]
    Given an $\epsilon \ge 0$, $\strat = (s_1,s_2,...,s_m)$ is an $\epsilon$-equilibrium if for any candidate $i\in[m]$ and any location $s_i'\in[0,1]$ such that $\forall j\neq i$, $|s_i'-s_j|\geq\delta$,  \[\lim_{\delta\to0} (\util(s_i',\strat')-\util(s_i,\strat))\leq\epsilon\] where $\strat'=(s_i',\strat_{-i})$.
\end{definition}

Note that $\strat$ is an equilibrium if the condition is satisfied with $\epsilon=0$.

\subsection{Our Contribution}

We give algorithms for computing equilibria, if it exists, in the model described. An equilibrium may not exist, and in fact, for three candidates, an equilibrium does not exist except for very special voter distributions. We give three different algorithms, depending on whether $\points_c$ and $\points_v$ are continuous or discrete. If either $\points_c$ or $\points_v$ is a discrete set, our algorithms compute an exact equilibrium if it exists. If both $\points_c$ and $\points_v$ are continuous, if there exists an $\epsilon$-equilibrium for $\epsilon > 0$, our algorithm returns a $4 \epsilon$-equilibrium in time polynomial in $1/\epsilon$ and other input parameters.

\paragraph*{$\points_c$ discrete.} If $\points_c$ is a discrete set, and $n = |\points_c|$, we give an algorithm that runs in polynomial time and returns an exact equilibrium. The voter distribution in this case can be discrete, continuous, or mixed. We only require to be able to compute the utility of candidates, given a strategy profile. Further, given $\epsilon \ge 0$, the algorithm can be extended to compute an $\epsilon$-approximate equilibria as well. Thus, if there does not exist an equilibrium, the algorithm can be used to compute an approximate equilibrium.

Our algorithm is based on a somewhat technical dynamic program that looks at \emph{triples} of candidates from right to left --- for each candidate $i$, it considers possible locations for $i$ as well as the neighboring candidates on the left and right as well. Moving from the rightmost position to the leftmost, it tries to place each triple appropriately, keeping track of the minimum utility obtained by a candidate, as well as the best deviation for any candidate. We note that while dynamic programming is a natural candidate for problems on a path, it is not clear that it should work. There are numerous problems that are NP-hard even on paths (e.g., finding rainbow matchings~\cite{LeP14} and fair division~\cite{MisraN22}.)



\paragraph*{$\points_c$, $\points_v$ continuous.} We then consider a model where $\points_c$ and $\points_v$ are a continuous interval $[0,R]$. Here we require that the voter distribution be nonatomic, and there is a bounded voter density over $[0,R]$. Let $f$ be the voter density function, and $M$ be an upper bound on $f$. In this case, we show that for any $\epsilon > 0$, if the given instance has an $\epsilon$-approximate equilibrium, then the algorithm returns an $4\epsilon$-approximate equilibrium in time $\poly(m,M,R,1/\varepsilon)$. Our algorithm builds upon the algorithm for approximate equilibria from the previous discrete setting, appropriately discretizing the continuous set $\points_c$, and then running the previous dynamic program. 

Prior work gave necessary and sufficient conditions for a given strategy profile to be an equilibrium~\cite{EatonL75}. We point out an error in the conditions for sufficiency.

\paragraph*{$\points_c$ continuous, $\points_v$ discrete.} Lastly, we consider the model where $\points_v$ is a finite set of locations in $\integers_+ \cap [0,R]$, while $\points_c$ is a continuous interval (that contains $\points_v$). Equilibrium computation in this model turns out to be the most interesting and technically challenging. For this case, we give an algorithm that decides the existence of an exact equilibrium in time $\poly(2^m, R, |\points_v|)$. If $m$ is small, and $R$ is polynomially bounded, then this gives a polynomial time algorithm for deciding equilibrium existence. We believe this to be a surprising result, since the space of possible candidate locations $\points_c$ is a continuous set, and a priori, it is possible that at any equilibrium, some candidate must be at an irrational point. 

Our algorithm shows that this is not the case. We show that if there is an equilibrium, then there is one where all candidates are located at rational points. Further, the algorithm bounds the bit complexity of any equilibrium. It shows that, if $\points_v \subseteq \integers_+$, then there is an equilibrium where each candidate is at a rational point where the denominator has at most $m$ bits. This allows to discretize the continuous space $\points_c$, considering only points in $[0,R]$ that are multiples of $1/2^m$. Given this discretized set of possible candidate locations, we can then use the dynamic program from the first model to find an equilibrium if it exists.

It is an open question if there is a polynomial time algorithm for computing an equilibrium for general $m$. It is also possible that our bound on the bit complexity of strategy profiles at equilibria is too pessimistic, and whenever there exists an equilibrium, there exists one with all candidates at rational points where the denominators have $o(m)$ bits. However, we give two examples to suggest this may not be the case.

Firstly, one could conjecture that the points of interest for candidates at equilibrium are either voter positions $\points_v$, or positions exactly in between two voters. That is, if there is an equilibrium, then there is an equilibrium where all candidates are at these positions. The first example shows that this is not correct. It gives an instance with a unique equilibrium, where some candidate is neither at $\points_v$, nor exactly between two points in $\points_v$.

The second example gives an instance of an equilibrium where there are candidates at rational positions with $m/8$-bit denominators, and no unilateral \emph{local} deviation by a candidate --- i.e., a deviation which does not change the order of the candidates --- can improve the bit complexity of the deviating candidate's strategy.


\subsection{Related Work.} We briefly discuss prior work closely related work to equilibrium computation in the basic Hotelling-Downs model. As mentioned, many variants and extensions of the basic model have also been studied. A recent survey is presented by Drezner and Eiselt~\cite{DreznerE24}.

Eaton and Lipsey study equilibria in the Hotelling-Downs model in various settings, including on the line and on a circle~\cite{EatonL75}. For the model on a line, they show that for 3 candidates, an equilibrium does not exist under general conditions. They also claim to show necessary and sufficient conditions for a given strategy profile. However, we give an example to show that the given conditions are not sufficient. Since exact equilibria may not exist, Bhaskar and Pyne~\cite{bhaskar2024nearly} study approximate equilibria in the Hotelling-Downs model. They show that for $m>3 $ candidates, there is always a $ \frac{1}{m+1} $-approximate equilibrium, which is simply the strategy profile that places each of the $m$ candidates at the $(m+1)$th quantiles of the voter distribution. Fournier \cite{fournier2019general} also studies approximate equilibria where the voters are distributed in a graphical network and the candidates choose their locations on the graph to maximize their votes, assuming the voters vote for the closest candidate (according to underlying metric). In this model, the author showed that if the number of candidates is large enough, then an approximate equilibrium always exists.


A model closely related to ours in spatial competition on a line is the discrete Voronoi game, introduced by D\"{u}rr and Thang in~\cite{Durr}. This model considers an undirected graph where vertices represent voters, and edges indicate distances between them. Candidates simultaneously select vertices as their locations, aiming to maximize their vote share, with each voter choosing the nearest candidate. The set of vertices nearest to a candidate $i$ is called $i$'s Voronoi region. 
Determining the existence of a Nash equilibrium in this setup is $\mathcal{NP}$-complete. Necessary and sufficient conditions for equilibrium existence on cycle graphs for discrete Voronoi games are known~\cite{mavronicolas2008voronoi}. Discrete Voronoi games on path graphs are also studied by Enomoto et al.~\cite{enomoto2018pure}. Note that these are closely related to the model studied in~\Cref{sec:discrete-candidates} with all distances equal to one and with a single voter at each point. However, discrete Voronoi games allow multiple candidates at the same position, which we do not, and hence the results are incomparable.


\section{Discrete Candidate Positions}
\label{sec:discrete-candidates}

We consider the case where $\points_c \subseteq \integers_+$ is a finite set where $m$ candidates can choose positions, while $\points_v$ is an arbitrary distribution in $[0,R]$ with measure given by $F$. An instance is thus denoted $(\points_c,m)$. Let $n:= |\points_c|$. We will assume oracle access to $F$. Note that given a strategy profile $\strat$, all the required functions --- $\util$, $\utilout$, $\intr$, and $\intu$ --- can be computed in polynomial time (in $m$ and $n$). We give a polynomial-time algorithm for determining the existence of an exact equilibrium in this case.




Our algorithm is a dynamic program and uses the following characterization of equilibria.

\begin{restatable}{prop}{equilibriumcharacterization}
\label{prop:equilibrium-characterization}
    Strategy profile $\strat = \{s_1, \ldots, s_m\}$ where $s_1 < s_2 < \ldots < s_m$ is an equilibrium if and only if:\footnote{Recall that we assume there are dummy candidates positioned at $s_0 = -\infty$ and $s_{m+1} = +\infty$.} 
    \begin{enumerate}[label=(\roman*)]
        \item \label{prop.interval} for all $j \in [m]$, $s_j \in \intr(s_{j-1},s_{j+1})$, and
        \item the minimum utility for any candidate $\min_{j \in [m]} u(s_j, \strat)$ is at least the maximum utility that can be obtained by a new candidate in any interval between two candidates $\max_{k \in \{0, \ldots, m\}}$ $ \intu(s_k, s_{k+1})$.
    \end{enumerate}
\end{restatable}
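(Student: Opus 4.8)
The plan is to establish the two directions of the equivalence separately, both built on the principle (already observed) that a candidate's utility is determined solely by the positions of her immediate left and right neighbours, together with one monotonicity fact: for $a<z<b\le b'$ we have $\utilout(z,\{a,b\})\le\utilout(z,\{a,b'\})$, and symmetrically pushing the left neighbour further left never decreases $\utilout(z,\cdot)$. This fact I would verify by unfolding the definition of $\util$: replacing the right neighbour $b$ by $b'>b$ enlarges the cell of the candidate at $z$ from $(\mu(a,z),\mu(z,b))$ to $(\mu(a,z),\mu(z,b'))$, and the resulting gain in utility is exactly the voter mass strictly inside $(\mu(z,b),\mu(z,b'))$ plus half the atoms at $\mu(z,b)$ and $\mu(z,b')$, which is non-negative. (I would also fix the convention $\intu(s_k,s_{k+1}):=0$ when $(s_k,s_{k+1})\cap\points_c=\emptyset$; since $\util(\cdot,\cdot)\ge 0$ always, condition (ii) is then vacuously true on empty gaps.)

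For the ``only if'' direction, suppose $\strat$ is an equilibrium. Condition (i): for any $z\in(s_{j-1},s_{j+1})\cap\points_c$, relocating candidate $j$ to $z$ is legal and leaves $z$ with neighbours $s_{j-1},s_{j+1}$, so the deviation utility is $\utilout(z,\{s_{j-1},s_{j+1}\})$; equilibrium makes this at most $\util(s_j,\strat)=\utilout(s_j,\{s_{j-1},s_{j+1}\})$, and as $s_j$ itself lies in the interval this is precisely $s_j\in\intr(s_{j-1},s_{j+1})$. Condition (ii): fix a candidate $j$ and a gap $(s_k,s_{k+1})$; I would show $\util(s_j,\strat)\ge\intu(s_k,s_{k+1})$ by cases. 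If $j\notin\{k,k+1\}$, candidate $j$ can relocate to the maximiser of $\utilout(\cdot,\{s_k,s_{k+1}\})$ in the gap---removing $s_j$ does not move that gap's endpoints---so equilibrium immediately gives the bound. If $j=k$, then every $z\in(s_k,s_{k+1})\cap\points_c$ also lies in $(s_{k-1},s_{k+1})\cap\points_c$, so condition (i) gives $\utilout(z,\{s_{k-1},s_{k+1}\})\le\util(s_k,\strat)$, and the monotonicity fact (pushing the left neighbour from $s_{k-1}$ rightward to $s_k$ only lowers $\utilout$) gives $\utilout(z,\{s_k,s_{k+1}\})\le\utilout(z,\{s_{k-1},s_{k+1}\})$; maximising over $z$ yields $\intu(s_k,s_{k+1})\le\util(s_k,\strat)$. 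The case $j=k+1$ is symmetric, using condition (i) for $s_{k+1}$ and moving its right neighbour inward from $s_{k+2}$ to $s_{k+1}$. Ranging over all $j,k$ gives (ii).

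For the ``if'' direction, assume (i) and (ii) and take any legal deviation of candidate $i$ to $z\in\points_c$. If $z\in(s_{i-1},s_{i+1})$, then after the move $z$ has neighbours $s_{i-1},s_{i+1}$, so the deviation utility is $\utilout(z,\{s_{i-1},s_{i+1}\})\le\intu(s_{i-1},s_{i+1})=\util(s_i,\strat)$ by (i). Otherwise, since the candidates are strictly ordered and $z\ne s_\ell$ for $\ell\ne i$, the point $z$ falls in some gap $(s_k,s_{k+1})$ with $k\notin\{i-1,i\}$; deleting $s_i$ does not touch this gap, so the deviation utility is $\utilout(z,\{s_k,s_{k+1}\})\le\intu(s_k,s_{k+1})\le\min_{\ell\in[m]}\util(s_\ell,\strat)\le\util(s_i,\strat)$ by (ii). Hence no deviation is profitable, and $\strat$ is an equilibrium.

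The main obstacle is the pair of sub-cases $j\in\{k,k+1\}$ in the ``only if'' direction: these are exactly the gaps adjacent to candidate $j$, which she cannot probe by a single relocation without also displacing one of the gap's endpoints, so the direct ``deviate into the gap'' argument fails there. The resolution, as above, is to feed condition (i) for $s_k$ (respectively $s_{k+1}$) through the monotonicity of $\utilout$ in the neighbour positions. Everything else is routine accounting with the ``utility depends only on immediate neighbours'' principle and the monotonicity fact, including the boundary gaps involving the dummy candidates at $\pm\infty$.
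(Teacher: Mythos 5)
Your proof is correct, and at the top level it follows the same route as the paper: both directions reduce to the observation that a candidate's utility depends only on her immediate neighbours, with the ``if'' direction splitting a hypothetical profitable deviation into ``inside $(s_{i-1},s_{i+1})$'' (killed by condition (i)) versus ``into some other gap'' (killed by condition (ii)). The difference is in the ``only if'' direction for condition (ii). The paper simply asserts that since the minimum-utility candidate $i_{\min}$ is at a best response, $\util(s_{i_{\min}},\strat)\ge\max_k\intu(s_k,s_{k+1})$; this elides exactly the subtlety you isolate, namely that when the maximising gap is adjacent to $i_{\min}$ (i.e.\ $k\in\{i_{\min}-1,i_{\min}\}$), deviating into that gap removes one of its endpoints, so the deviation utility is $\utilout(z,\{s_{k-1},s_{k+1}\})$ rather than $\utilout(z,\{s_k,s_{k+1}\})$. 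Your monotonicity lemma (widening a neighbour's position never decreases $\utilout$) closes this gap cleanly --- and note it can be applied even more directly than you do: the actual deviation utility $\utilout(z,\{s_{k-1},s_{k+1}\})$ dominates $\utilout(z,\{s_k,s_{k+1}\})$ by monotonicity, so the equilibrium condition for candidate $k$ already yields $\util(s_k,\strat)\ge\intu(s_k,s_{k+1})$ without routing through condition (i). Either way, your write-up is a more rigorous version of the paper's argument rather than a different one, and the convention $\intu(s_k,s_{k+1})=0$ on empty gaps is a sensible housekeeping detail the paper leaves implicit.
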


\begin{proof}
For the if direction, for the sake of contradiction, assume that $\strat = \{s_1, \ldots, s_m\}$ is not an equilibrium. Therefore, there exists a candidate $i$ whose best response is to move to another location $s_i' \in \points_c\setminus\strat$, which would strictly increase her utility. If $s_{i-1} < s_i' < s_{i+1}$, this contradicts the first condition of the proposition, since $s_i \in \intr(s_{i-1}, s_{i+1})$. Now, suppose $s_i' \notin [s_{i-1}, s_{i+1}]$. According to the second condition of the proposition, we have $\util(s_i, \strat) \geq \max_{k \in \{0, \ldots, m\}} \intu(s_k, s_{k+1})$, which contradicts the assumption that candidate $i$ can strictly increase her utility by moving to position $s_i'$.

For the other direction, let $\strat$ be an equilibrium and consider an arbitrary candidate $i \in [m]$. Then, $s_i$ must be a best response for candidate $i$, which means $s_i \in \intr(s_{i-1}, s_{i+1})$. Therefore, for all $j\in[m]$, $s_j\in\intr(s_{j-1},s_{j+1})$. Now, let $i_{\min} = \arg \min_{j \in [m]} u(s_j, \strat)$ be a candidate with the minimum utility. Since $\strat = \{s_1, \ldots, s_m\}$ is an equilibrium, $s_{i_{\min}}$ is also a best response for candidate $i_{\min}$. Thus, $\util(s_{i_{\min}}, \strat) \geq \max_{k \in \{0, \ldots, m\}} \intu(s_k, s_{k+1})$, which is the second condition of the proposition.
\end{proof}

The next proposition shows that the number of possible different utility values is quadratically bounded. Given an instance, $(\points_c,m)$ and an oracle access to $F$, let $\mcV$ be the set of all possible utility values obtainable by a candidate in any strategy profile. Thus $\mcV=\{\util(s,\strat)\mid \strat \text{ is a strategy profile, } s\in\strat\}$.

\begin{restatable}{prop}{differentutilities}
\label{prop:different-utilities}
   Given an instance $(\points_c, m)$, $|\mcV|\le4\binom{n}{2}=\mathcal{O}(n^2)$, where $n = |\points_c|$.
\end{restatable}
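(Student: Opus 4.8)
The plan is to recall the formula for $\util(s,\strat)$ and observe that, because the utility of a candidate depends only on the positions of her immediate left and right neighbors $s_l$ and $s_r$ (with the convention that these may be the dummy candidates at $\pm\infty$), the value $\util(s,\strat)$ is actually a function of the triple $(s_l, s, s_r)$ alone. So rather than ranging over all strategy profiles, I would range over all admissible triples $(a,b,c)$ with $a < b < c$, $b \in \points_c$, and $a,c \in \points_c \cup \{-\infty,+\infty\}$, and bound the number of distinct values $\util(b, \{a,b,c\})$ that arise.

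Next I would unpack the formula
\[
\util(b,\{a,b,c\}) = F(\mean(a,b),\mean(b,c)) - \tfrac{1}{2}F(\mean(a,b),\mean(a,b)) - \tfrac{1}{2}F(\mean(b,c),\mean(b,c))\,.
\]
The key structural observation is that this expression depends on $(a,b,c)$ only through the unordered pair of ``boundary points'' $\{\mean(a,b), \mean(b,c)\}$ — the left boundary $\mean(a,b)$ and the right boundary $\mean(b,c)$ of $b$'s Voronoi interval. Write $x = \mean(a,b)$ and $y = \mean(b,c)$; then the utility equals $g(x,y) := F(x,y) - \tfrac12 F(x,x) - \tfrac12 F(y,y)$, a fixed function of the pair $(x,y)$ once $F$ is fixed. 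Hence $|\mcV|$ is at most the number of distinct pairs $(x,y)$ that can occur as $(\mean(a,b),\mean(b,c))$ for an admissible triple. Now $x = \mean(a,b)$ with $a \in \points_c \cup \{-\infty\}$ and $b \in \points_c$: if $a = -\infty$ the left boundary is effectively $-\infty$ (so $F(x,x)=0$ and $F(x,y)=F(0,y)$, i.e. the ``left-unbounded'' case), and otherwise $x$ is the midpoint of two elements of $\points_c$, of which there are $\binom{n}{2}$ choices. Symmetrically for $y = \mean(b,c)$. A priori this bounds the pairs by roughly $(\binom n2 + 1)^2$, which is $\mathcal O(n^4)$, not $\mathcal O(n^2)$ — so there must be a sharper accounting, and getting the factor down to $4\binom n2$ is the main obstacle.

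The sharpening I expect to use: $x$ and $y$ are not independent, because they share the middle point $b$. For a fixed $b \in \points_c$, the left boundary $x = \mean(a,b)$ is determined by the choice of $a < b$ (at most $n-1$ choices, or the unbounded case), and the right boundary $y = \mean(b,c)$ by the choice of $c > b$ (at most $n-1$ choices, or the unbounded case). More usefully, I would re-parametrize by the pair of \emph{neighboring Voronoi boundaries}: every utility value is $g(x,y)$ where $x \le y$ are both ``half-integer'' points of the form $\mean(p,q)$ for $p,q \in \points_c$ (including the degenerate $p=q$ giving a voter-cell split at a single candidate position, and the two unbounded ends). One counts: the endpoints of a candidate's Voronoi interval form a consecutive pair of such midpoints sharing the candidate's own location, so the relevant data is really ``(candidate location $b$, left-neighbor $a$, right-neighbor $c$)'' but collapsed through the two midpoints, and a careful case split on whether the left end is bounded/unbounded and similarly the right gives four ``types'' of utility value, each contributing at most $\binom n2$ (choose the two $\points_c$-points defining the single relevant finite midpoint, together with $b$; when one side is unbounded the other finite midpoint plus $b$ is $\binom n2$, when both are finite the two midpoints already pin down $b$ as their ``reflection'' partner). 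Summing the four types yields $|\mcV| \le 4\binom n2$. I would verify the case bookkeeping carefully — this is exactly where an off-by-a-constant-factor slip would occur — and then conclude. The corollary $\mathcal O(n^2)$ is immediate since $4\binom n2 = 2n(n-1)$.
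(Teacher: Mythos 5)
Your reduction to triples $(a,b,c)$ and to the boundary pair $(x,y)=(\mean(a,b),\mean(b,c))$ is correct, and you are right to be suspicious of the counting; but the sharpening you propose is exactly where the argument breaks, and it cannot be repaired along these lines. In the case where both neighbours are finite, the realizable boundary pairs are indexed by the $\binom{n}{3}$ triples $a<b<c$ in $\points_c$, and distinct triples generically produce distinct pairs $(\mean(a,b),\mean(b,c))$: take $\points_c=\{2,4,8,\dots,2^n\}$, and uniqueness of binary representations makes the map from triples to midpoint pairs injective. Your remark that ``the two midpoints already pin down $b$'' is precisely this injectivity, so it yields $\Theta(n^3)$ distinct pairs, not $\binom{n}{2}$ per type. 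This is not an off-by-a-constant slip: no bookkeeping of Voronoi-cell endpoints alone can give an $O(n^2)$ bound, because for a generic nonatomic voter measure (e.g.\ density proportional to $e^x$) the $\Theta(n^3)$ distinct cells genuinely carry distinct masses.

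The paper's proof counts a different object: not the geometric interval $[\mean(a,b),\mean(b,c)]$, but the \emph{set of voter locations} the candidate captures. Since each voter votes for the nearest candidate, the captured locations form a contiguous block determined by its leftmost and rightmost members $p\le q$, and the utility is one of only four values, $F(p,q)$ minus zero, half the atom at $p$, half the atom at $q$, or both, according to which extreme atoms are shared with a neighbour; ranging over pairs $p\le q$ gives the $4\binom{n}{2}$ bound. The collapse from triples to pairs happens because many distinct Voronoi cells contain the same block of voters. Note, however, that this step is only available when the voters sit on finitely many positions (the paper's proof implicitly places them on the $n$ points of $\points_c$); your instinct that the midpoints themselves cannot be tamed to $O(n^2)$ was sound, and for a genuinely continuous voter distribution $F$ the stated bound on $|\mcV|$ in fact fails. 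So the missing idea is to parametrize utilities by the captured voter set rather than by the cell, together with the (unstated) discreteness assumption on $\points_v$ that makes that parametrization quadratic.
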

\begin{proof}
    Recall that voters vote for the closest candidate. Therefore, for any strategy profile, a candidate's votes are from an interval. Suppose $\points_c = \{p_1,p_2,\dots,p_n\}$ where $p_1<p_2<...<p_n$. Let $I = \{(p_i, p_j) \mid i < j\}$ be the set of all intervals that can be formed using pairs of points from $\points$. For each $p\le q\in\points$, we add the following values to $\mcV$. If all the voters from the interval $[p,q]$ votes for a candidate then, the utility of the candidate is $F(p,q)$. If only the voters from $p$ split their votes, the utility of the candidate is $F(p,q)-(F(p)/2)$.\footnote{Recall that for any strategy profile $\strat$ and a voter's location $p$, there are at most two candidates equidistant from $p$.} If only the voters from $q$ split their votes, the utility of the candidate is $F(p,q)-(F(q)/2)$. If the voters from both the locations $p,q$ split their votes, the utility of the candidate is $F(p,q)-(F(p)/2)-(F(q)/2)$. Observe that given an interval $[p,q]$, these are the only four possible utilities a candidate can receive from that interval. Note that $|I|=\binom{|\points_c|}{2}$. Therefore, $\mcV\le4\binom{|\points_c|}{2}=\mathcal{O}(|\points_c|^2)$.
\end{proof}
We first describe the outline of the algorithm. Our dynamic program works backwards from the last three candidates $m-1$, $m$, and $m+1$ (the last of which is fixed at $+\infty$). For every position of three successive candidates $i-1$, $i$, and $i+1$, we keep track of the minimum utility (termed $\minu$) for candidates $i$, $i+1$, $\ldots$, $m$; as well as the maximum utility (termed $\maxd$) obtainable by a \emph{new} candidate choosing a position after candidate $i-1$. If we can find positions $s_1, \ldots, s_m$ for all the candidates so that $\minu \ge \maxd$ (i.e., the minimum utility for any candidate is at least the maximum utility obtainable by a new candidate) and each candidate $i$ chooses the best position between $s_{i-1}$ and $s_{i+1}$, then by~\Cref{prop:equilibrium-characterization} these positions are an equilibrium.

The dynamic program thus fills up a \emph{binary-valued table} $T(i, s_{i-1}, s_i, s_{i+1}, \minu, \maxd)$ indexed by (i) $i\in[m]$, the current candidate being considered; (ii) $s_{i-1}$, $s_i$, $s_{i+1}\in\points_c$ are positions for candidates $i-1$, $i$, and $i+1$; (iii) $\minu\in\mcV$ is a lower bound on the utility for candidates $i$, $i+1$, $\ldots$, $m$; and (iv) $\maxd\in\mcV$ is an upper bound on the utility obtainable by a new candidate that deviates to a position after $s_{i-1}$.

Note that the size of the table is $m \times (n+2)^3 \times 4n^2 \times 4n^2$, since $\minu$ and $\maxd$ can take at most $4\binom{|\points_c|}{2}$ different values (\Cref{prop:different-utilities}). 

To set the value of $T(i,s_{i-1},s_i,s_{i+1},\minu, \maxd)$, let $\strat' = \{s_{i-1}, s_i, s_{i+1}\}$. Then the table entry is 1 if the following conditions are satisfied:

\begin{enumerate}
    \item[C1:] $\util(s_i, \strat') = \intu(s_{i-1},s_{i+1}) \ge \minu \ge \maxd  \ge \max \{\intu(s_{i-1},s_i), \intu(s_i, s_{i+1})\}$, and
    \item[C2:] there exist $s_{i+2}  \in \points_C$ and $\minu', \maxd' \in \mcV$ so that $s_{i+2}> s_{i+1}$, $\minu' \ge \minu \ge \maxd \ge\maxd'$, and $T(i+1,s_i, s_{i+1}, s_{i+2}$ $, \minu', \maxd') = 1$.
\end{enumerate}

The entry is 0 otherwise. If $i = m$ (i.e., in the base case), then the second condition is not checked. Finally, if there exists $s_1$, $s_2$, $s_3$, $\minu$, $\maxd$ so that $T(1, s_1, s_2, s_3, \minu, \maxd) =1$, the algorithm returns that the instance has an equilibrium, else it returns that there is no equilibrium.

Now we give the full algorithm. For the algorithm, we will assume throughout that $s_{i-1} < s_i < s_{i+1}$, and that $\minu \ge \maxd$. If either of these is violated, the corresponding table entry is $0$.

\begin{algorithm}
    \caption{Initialization($\points_c,m,T,s_{m+1},s_0$)}\label{algo:dp-init}
    \begin{algorithmic}[1]
        \For{\text{each~}$s_{m-1}<s_m\in\points_c$, $\minu\ge\maxd\in\mcV$}
            \State $\strat' = \{s_{m-1},s_m\}$
            \If{\{$\util(s_m,\strat')\ge \intu(s_{m-1},s_{m+1})$,\newline \hphantom{0.25cm}$\util(s_m,\strat')\ge\minu$,\newline \hphantom{0.25cm}$\maxd\ge\{\intu(s_{m-1},s_m),\intu(s_m,s_{m+1})\}$\}}
                \State $T(m,s_{m-1},s_m,s_{m+1},\minu,\maxd)=1$
                
            \EndIf
        \EndFor
    \end{algorithmic}
\end{algorithm}

\begin{algorithm}[H]
    \caption{Discrete-PNE($\points_c,m$)}\label{algo:dp}
    \begin{algorithmic}[1]
        \State $s_{m+1} = +\infty$, $s_0 = -\infty$
        \State Initialize a table $T$ to 0 of dimension $m\times (n+2)\times (n+2)\times (n+2)\times 4n^2\times 4n^2$
        \State Initialization($\points_c,m,T,s_{m+1},s_0$)
        \For{each $i$ from $m-1$ to $1$}\Comment{Table entry $T(i,s_{i-1},s_i,s_{i+1},\minu,\maxd)$}
            \For{each $s_{i-1}<s_i<s_{i+1}\in\points_c,\minu\ge\maxd\in\mcV$}
                \State $\strat' = \{s_{i-1},s_i,s_{i+1}\}$
                \If{\{$\util(s_i,\strat')\ge \intu(s_{i-1},s_{i+1})$,\newline \hphantom{1cm}\hspace{1.2cm}$\util(s_i,\strat')\ge\minu$,\newline \hphantom{1cm}\hspace{1.2cm}$\maxd\ge\{\intu(s_{i-1},s_i),\intu(s_i,s_{i+1})\}$\}}
                   \For{each $s_{i+2}\in\points_c,\minu'\ge\maxd'$}
                        \If{\{$s_{i+2}>s_{i+1}$,\newline \hphantom{1cm}\hspace{2.4cm}$T(i+1,s_i,s_{i+1},s_{i+2},\minu',\maxd')=1$,\newline \hphantom{1cm}\hspace{2.4cm}$\minu'\ge\minu\ge\maxd\ge\maxd'$\}}
                            \State $T(i,s_{i-1},s_i,s_{i+1},\minu,\maxd)=1$
                            \State Break
                        \EndIf
                    \EndFor
                \EndIf
            \EndFor
        \EndFor
    \end{algorithmic}
\end{algorithm}

\begin{restatable}{lemma}{discretetablefilling}
\label{lem:discrete-table-filling}
    For $i = 1, \ldots, m$, $T(i,s_{i-1},s_i,s_{i+1},\minu,\maxd) = 1$ iff $\minu\ge\maxd$, and there exist positions $s_{i+2} <  \ldots < s_m \in \points_c$ so that for $\strat' = \{s_{i-1}, s_i, \ldots, s_m\}$ and $\points_c' = \{p \in \points_c :p \ge s_{i-1}\}$,
\begin{enumerate}[label=(\alph*)]
    \item \label{property:interval} for all $j \ge i$, $\util(s_j, \strat') = \intu(s_{j-1}, s_{j+1})$ (thus $s_j$ is a best response in the interval $(s_{j-1}, s_{j+1})$ given the positions of candidates $j-1$ and $j+1$),
    \item \label{property:minu} the minimum utility of any candidate $j \ge i$ is at least $\minu$, and
    \item \label{property:maxd} given the strategy profile $\strat'$, the maximum utility obtainable by a new candidate in $\points_c'\setminus\strat'$ is at most $\maxd$.
\end{enumerate}
\end{restatable}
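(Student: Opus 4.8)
The plan is to prove the biconditional by induction on $i$, running downward from $i = m$ to $i = 1$, matching the structure of the dynamic program. The statement of \Cref{lem:discrete-table-filling} is precisely an invariant: $T(i, s_{i-1}, s_i, s_{i+1}, \minu, \maxd) = 1$ iff the triple $(s_{i-1}, s_i, s_{i+1})$ can be extended to the right to a partial strategy profile $\{s_{i-1}, s_i, \ldots, s_m\}$ satisfying the three local conditions \ref{property:interval}--\ref{property:maxd} relative to the truncated candidate set $\points_c' = \{p \in \points_c : p \ge s_{i-1}\}$. So the proof is really a verification that the recurrence (conditions C1 and C2) exactly encodes this invariant.

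First I would do the base case $i = m$. Here the ``extension'' $s_{i+2} < \cdots < s_m$ is empty, so $\strat' = \{s_{m-1}, s_m\}$ (plus the dummy $s_{m+1} = +\infty$), and $\points_c' = \{p \in \points_c : p \ge s_{m-1}\}$. I would check that the three conditions in Initialization (\Cref{algo:dp-init}) — namely $\util(s_m, \strat') \ge \intu(s_{m-1}, s_{m+1})$, $\util(s_m, \strat') \ge \minu$, and $\maxd \ge \max\{\intu(s_{m-1}, s_m), \intu(s_m, s_{m+1})\}$ — are equivalent, one by one, to \ref{property:interval} (only the $j = m$ instance, and since $\util(s_m, \strat') \le \intu(s_{m-1}, s_{m+1})$ always holds as $s_m$ is a candidate position in that open interval, ``$\ge$'' forces equality), to \ref{property:minu} (the only candidate with index $\ge m$ is $m$ itself), and to \ref{property:maxd} (a new candidate in $\points_c' \setminus \strat'$ lies either in $(s_{m-1}, s_m)$ or in $(s_m, s_{m+1}) = (s_m, \infty)$, so its best utility is $\max\{\intu(s_{m-1}, s_m), \intu(s_m, s_{m+1})\}$; here I use that no candidate can profit by landing to the \emph{left} of $s_{m-1}$ because that region is outside $\points_c'$, which is exactly why the lemma is stated with $\points_c'$ rather than all of $\points_c$). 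Together with the standing assumption $\minu \ge \maxd$ (violations give entry $0$), this gives the base case.

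For the inductive step, assume the claim for $i+1$ and prove it for $i$. For the forward direction, suppose $T(i, s_{i-1}, s_i, s_{i+1}, \minu, \maxd) = 1$. By the recurrence there exist $s_{i+2} > s_{i+1}$ and $\minu' \ge \minu \ge \maxd \ge \maxd'$ with $T(i+1, s_i, s_{i+1}, s_{i+2}, \minu', \maxd') = 1$; by the inductive hypothesis this yields positions $s_{i+3} < \cdots < s_m$ so that $\strat'' = \{s_i, s_{i+1}, \ldots, s_m\}$ satisfies \ref{property:interval}--\ref{property:maxd} at level $i+1$ with parameters $\minu', \maxd'$ over $\points_c'' = \{p \ge s_i\}$. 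Set $\strat' = \{s_{i-1}\} \cup \strat''$. Property \ref{property:interval} for $j \ge i+1$ is inherited (the utility of candidate $j$ depends only on its two neighbors, which are unchanged — here I would invoke the locality remark from the preliminaries); the $j = i$ instance is exactly C1's equality $\util(s_i, \strat') = \intu(s_{i-1}, s_{i+1})$. For \ref{property:minu}: candidates $j \ge i+1$ have utility $\ge \minu' \ge \minu$, and candidate $i$ has utility $= \intu(s_{i-1}, s_{i+1}) \ge \minu$ by C1. For \ref{property:maxd}: a new candidate in $\points_c' \setminus \strat'$ either lies in $(s_{i-1}, s_{i+1})$ — two sub-intervals $(s_{i-1}, s_i)$ and $(s_i, s_{i+1})$, giving utility $\le \max\{\intu(s_{i-1}, s_i), \intu(s_i, s_{i+1})\} \le \maxd$ by C1 — or lies in $\points_c'' \setminus \strat''$ with $\points_c'' = \{p \ge s_i\}$, giving utility $\le \maxd' \le \maxd$ by the inductive hypothesis and C2. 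This is the one delicate bookkeeping point: the regions $\points_c' \setminus (\text{interval } (s_{i-1}, s_{i+1}))$ and $\points_c''$ must be checked to cover all of $\points_c' \setminus \strat'$ without gaps, which they do since $\points_c' = \{p \ge s_{i-1}\}$ splits as $(s_{i-1}, s_i) \cup \{s_i\} \cup \{p \ge s_i\}$. The reverse direction is the same argument read backwards: given an extension $s_{i+1} < \cdots < s_m$ witnessing the conditions at level $i$, I set $\minu' = \min_{j \ge i+1} \util(s_j, \strat')$ and $\maxd' = $ the max utility of a new candidate in $\{p \ge s_i\} \setminus \{s_i, \ldots, s_m\}$; verify $\minu' \ge \minu \ge \maxd \ge \maxd'$ and that the level-$(i+1)$ conditions hold, invoke the inductive hypothesis to get $T(i+1, \ldots) = 1$, and check that C1 holds, so the recurrence sets $T(i, \ldots) = 1$.

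The main obstacle I anticipate is not any single hard computation but getting the interval/region bookkeeping exactly right — in particular being careful that \ref{property:maxd} is stated over the truncated set $\points_c' = \{p \ge s_{i-1}\}$ and not all of $\points_c$, so that when we glue level $i$ onto level $i+1$ the ``new candidate can deviate anywhere to the right of $s_{i-1}$'' guarantee composes correctly with the level-$(i+1)$ guarantee about deviations to the right of $s_i$. One must also double-check the direction of all the inequalities among $\minu, \minu', \maxd, \maxd'$ (the monotonicity $\minu' \ge \minu$ and $\maxd \ge \maxd'$ encodes that, going right to left, the minimum-utility bound can only tighten and the deviation bound can only loosen), and confirm that $\util(s_j, \strat') \le \intu(s_{j-1}, s_{j+1})$ always, so that condition \ref{property:interval}'s equality is equivalent to the single inequality ``$\ge$'' that the algorithm actually tests.
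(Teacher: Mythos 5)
Your proposal is correct and follows essentially the same route as the paper: downward induction on $i$ from $m$ to $1$, with the base case given by the Initialization conditions and the inductive step splicing condition C1 at level $i$ onto the inductive hypothesis at level $i{+}1$ via C2. Your version is somewhat more explicit about the region bookkeeping for property~\ref{property:maxd} and about why the tested inequality $\util(s_i,\strat')\ge\intu(s_{i-1},s_{i+1})$ forces equality, and in the reverse direction you instantiate $\minu',\maxd'$ at their tightest values where the paper simply reuses $\minu,\maxd$; both choices are valid.
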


\begin{proof}
We first show that if $T(i,s_{i-1},s_i,s_{i+1},\minu,\maxd) = 1$, then the given conditions must be satisfied. We prove this by using induction on $i$ from $m$ to 1.

If $i=m$, then $\points_c' = \{p \in \points_c: p \ge s_{m-1}\}$ and $\strat' = \{s_{m-1},s_m\}$. The if condition at line 3 of Algorithm \ref{algo:dp-init} captures all the constraints from \cref{property:interval,property:minu,property:maxd}.\footnote{Recall that $s_{m+1}=+\infty$. So we ignore the utility of the dummy candidate $m+1$.} 


If $i<m$, then $\points_c' = \{p \in \points_c: p \ge s_{i-1}\}$, and by the if condition in line 9 of Algorithm \ref{algo:dp} there exist $s_{i+2} > s_{i+1}$, and $\minu'$, $\maxd'$ so that $\minu' \ge \minu \ge \maxd \ge \maxd'$ and $T(i+1, s_i, s_{i+1}, s_{i+2}, \minu', \maxd') = 1$. By the inductive hypothesis,~\cref{property:minu,property:maxd} hold for $T(i+1, s_i, s_{i+1}, s_{i+2}, \minu', \maxd')$. That is, there exist $s_{i+3}$, $\ldots$, $s_m$ so that each candidate $j \ge i+1$ is in a best position in the interval $(s_{j-1}, s_{j+1})$, each candidate $j \ge i+1$ gets at least $\minu'$ utility, and no new candidate can get more than $\maxd'$ utility at any position after $s_i$.

The if condition of line 7 of Algorithm \ref{algo:dp} first checks that candidate $i$ is in a best position in the interval $(s_{i-1},s_{i+1})$, hence this now holds for all $j \ge i$, satisfying~\cref{property:interval}. Secondly, candidate $i$'s utility is at least $\minu$, hence each candidate $j \ge i$ has utility at least $\minu$ (since $\minu' \ge \minu$). Thirdly, it checks that $\maxd \ge \max \{ \intu(s_{i-1},s_i), \intu(s_i, s_{i+1})\}$. Hence any new candidate that takes a position in $\points_c'$ gets utility at most $\maxd$ (since $\maxd' \le \maxd)$, satisfying~\cref{property:maxd}.

We next show that if the conditions of Lemma~\ref{lem:discrete-table-filling} are satisfied, then $T(i,s_{i-1},s_i,s_{i+1},\minu,\maxd) = 1$. We prove this by using induction on $i$ from $m$ to $1$. Let $s_{i-1}$, $s_i$, $\ldots$, $s_m$ be candidate positions satisfying~\cref{property:interval,property:minu,property:maxd}.

For the base case $i=m$. Then the if condition of line 3 of Algorithm \ref{algo:dp-init} is satisfied which implies $T(m,s_{m-1},s_m,s_{m+1},\minu,\maxd)=1$.

By the induction hypothesis, $T(i+1,s_i,s_{i+1},s_{i+2},\minu,\maxd)=1$, since clearly the positions $s_{i}, \ldots, s_m$ and $\minu,\maxd$ satisfy~\cref{property:interval,property:minu,property:maxd} for $i+1$. The conditions in the lemma also stipulate that candidate $i$ is in a best position in $(s_{i-1}, s_{i+1})$, $i$'s utility is at least $\minu$, and a new candidate that takes a position in either $(s_{i-1},s_i)$ or $(s_i, s_{i+1})$ gets utility at most $\maxd$. Therefore the if condition at line 10 of Algorithm \ref{algo:dp} is satisfied which implies $T(i,s_{i-1},s_i,s_{i+1},\minu,\maxd)=1$.    
\end{proof}

\begin{restatable}{theorem}{thmdp}
\label{thm:dp}
    Algorithm \ref{algo:dp} decides and finds an equilibrium if it exists in polynomial time.
\end{restatable}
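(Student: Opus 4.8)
The plan is to prove the three assertions of the theorem in turn: Algorithm~\ref{algo:dp} answers ``an equilibrium exists'' correctly, it can then read off an actual equilibrium profile, and it runs in time $\poly(m,n)$ with $n=|\points_c|$ (counting each oracle call to $F$ and each evaluation of $\util,\intu,\intr$ as unit cost). The conceptual core---why it is legitimate to sweep triples of consecutive candidates from right to left while threading only a single lower bound $\minu$ and a single upper bound $\maxd$---is already packaged in Lemma~\ref{lem:discrete-table-filling}. So the remaining work is to match the level-$1$ entries of the table $T$ with conditions (i) and (ii) of Proposition~\ref{prop:equilibrium-characterization}, to argue that backtracking recovers a witness, and to bound the size of $T$ and the cost per entry using Proposition~\ref{prop:different-utilities}.

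\emph{Decision correctness.} The algorithm reports ``equilibrium exists'' exactly when some level-$1$ entry of $T$ equals $1$. By Lemma~\ref{lem:discrete-table-filling} with $i=1$ (recalling that $s_0=-\infty$ is a dummy, so $\strat'=\{s_0,s_1,\dots,s_m\}$ carries the same utilities as $\{s_1,\dots,s_m\}$ and $\points_c'=\points_c$), such an entry exists, for some choice of its arguments, iff there is a profile $\strat=\{s_1<\dots<s_m\}$ and values $\minu\ge\maxd$ so that every $s_j$ is a best response in $(s_{j-1},s_{j+1})$, i.e.\ $\util(s_j,\strat)=\intu(s_{j-1},s_{j+1})$, every candidate's utility is at least $\minu$, and every deviation utility $\utilout(z,\strat)$ with $z\in\points_c\setminus\strat$ is at most $\maxd$. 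I would then show this is equivalent to the existence of an equilibrium. If $\strat$ is an equilibrium, set $\minu:=\min_j\util(s_j,\strat)$ and $\maxd:=\max_k\intu(s_k,s_{k+1})$ (both in $\mcV$, adopting the convention that $\intu$ on an empty interval is the least element of $\mcV$); Proposition~\ref{prop:equilibrium-characterization}(ii) gives $\minu\ge\maxd$, condition~(i) is precisely the best-response condition because a candidate's utility depends only on her two neighbours (so $s_j\in\intr(s_{j-1},s_{j+1})$ iff $\util(s_j,\strat)=\intu(s_{j-1},s_{j+1})$), the $\minu$ bound holds by choice, and the $\maxd$ bound holds since each $z\in\points_c\setminus\strat$ lies in a single interval $(s_k,s_{k+1})$ with $\utilout(z,\strat)=\utilout(z,\{s_k,s_{k+1}\})\le\intu(s_k,s_{k+1})\le\maxd$. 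Conversely, any profile $\strat$ as above has $\min_j\util(s_j,\strat)\ge\minu\ge\maxd\ge\max_{z\in\points_c\setminus\strat}\utilout(z,\strat)\ge\max_k\intu(s_k,s_{k+1})$ and meets the best-response condition, hence is an equilibrium by Proposition~\ref{prop:equilibrium-characterization}. This settles that the final test answers correctly.

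\emph{Extraction and running time.} When a level-$1$ entry is $1$, I would backtrack in the standard DP fashion: given $T(i,s_{i-1},s_i,s_{i+1},\minu,\maxd)=1$ with $i<m$, scan for $s_{i+2}>s_{i+1}$ and $\minu'\ge\minu\ge\maxd\ge\maxd'$ with $T(i+1,s_i,s_{i+1},s_{i+2},\minu',\maxd')=1$; such a tuple exists precisely because that is how the entry was set (condition~C2). Iterating recovers $s_3,\dots,s_m$, and by the previous paragraph the resulting profile is an equilibrium. For the running time, Proposition~\ref{prop:different-utilities} gives $|\mcV|\le4\binom n2=O(n^2)$, so $T$ has $m(n+2)^3|\mcV|^2=O(mn^7)$ entries; each is computed by a loop over $O(n\,|\mcV|^2)=O(n^5)$ tuples $(s_{i+2},\minu',\maxd')$, every iteration doing a constant number of evaluations of $\util,\intu,\intr$ (each $\poly(m,n)$ time, as noted before Proposition~\ref{prop:equilibrium-characterization}) and one table lookup. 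Hence the table is filled, and then backtracked, in $\poly(m,n)$ time.

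\emph{Main obstacle.} Since Lemma~\ref{lem:discrete-table-filling} already justifies the triple-based recursion, nothing deep remains; the one point to get right is verifying that the level-$1$ entries capture \emph{exactly} conditions (i) and (ii) of Proposition~\ref{prop:equilibrium-characterization}---in particular that a single pair $(\minu,\maxd)$ can simultaneously lower-bound all candidate utilities and upper-bound all deviation utilities, which is exactly where the monotone threading of $\minu$ and $\maxd$ across the levels is used---together with the degenerate cases of $\intu$ on empty intervals and of small $m$.
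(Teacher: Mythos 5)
Your proposal is correct and follows essentially the same route as the paper: combine Lemma~\ref{lem:discrete-table-filling} (at level $i=1$) with Proposition~\ref{prop:equilibrium-characterization} to show the level-$1$ entries witness exactly the equilibria, recover a profile by standard backtracking, and bound the runtime by the table size times the per-entry work using Proposition~\ref{prop:different-utilities}. The paper's own proof is just a terser version of this; your added details (the equivalence $s_j\in\intr(s_{j-1},s_{j+1})\iff\util(s_j,\strat)=\intu(s_{j-1},s_{j+1})$, the choice of $\minu,\maxd$ from a given equilibrium, and the empty-interval convention) are correct elaborations, not deviations.
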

\begin{proof}
It follows from~\Cref{lem:discrete-table-filling} and~\Cref{prop:equilibrium-characterization} that there exist $s_0 = -\infty$, $s_1$, $s_2$, $\minu \ge \maxd$ so that $T(1, s_1, s_2, s_3, \minu, \maxd) = 1$ if and only if the instance has an equilibrium. Note that there are at most $m\times (n+2)^3\times |\mcV|^2$ table entries. For each cell in the table—specified by \( i, s_{i-1}, s_i, s_{i+1}, \minu, \maxd \)—Algorithm \ref{algo:dp} determines values \( s_{i+2}, \minu', \maxd' \) such that \( s_{i+2} \) is a valid candidate location, \( \minu' \geq \minu \geq \maxd \geq \maxd' \), and $ T(i+1, s_i, s_{i+1}, s_{i+2}, \minu',$ $\maxd') = 1 $. Therefore Algorithm \ref{algo:dp}, decides and finds a PNE in $\mathcal{O}(m\times (n+2)^3\times |\mcV|^2\times n\times |\mcV|^2)$.
\end{proof}

Given $\epsilon \ge 0$, the algorithm is easily modified to return an $\epsilon$-equilibrium if it exists. Table entry $T(i,s_{i-1},s_i,s_{i+1},\minu, \maxd)$ is set to 1 if the following conditions are satisfied, with $\strat' = \{s_{i-1}, s_i, s_{i+1}\}$:

\begin{enumerate}
    \item[C1':] $\util(s_i, \strat') \ge \intu(s_{i-1},s_{i+1}) - \epsilon$, $\util(s_i, \strat')$ $ \ge \minu$ $\ge \maxd - \epsilon$,  \\ $\maxd \ge \max \{\intu(s_{i-1},s_i), \intu(s_i, s_{i+1})\}$, and
    \item[C2':] there exist $s_{i+2}  \in \points_C$ and $\minu', \maxd' \in \mcV$ so that $s_{i+2}> s_{i+1}$, $\minu' \ge \minu \ge \maxd - \epsilon \ge\maxd' - \epsilon$, and $T(i+1,s_i, s_{i+1}, s_{i+2}$ $, \minu', \maxd') = 1$.
\end{enumerate}

We first modify~\Cref{prop:equilibrium-characterization} appropriately.

\begin{prop}
\label{prop:equilibrium-characterization-approx}
    Strategy profile $\strat = \{s_1, \ldots, s_m\}$ where $s_1 < s_2 < \ldots < s_m$ is an $\epsilon$-equilibrium if and only if:\footnote{Recall that we assume there are dummy candidates positioned at $s_0 = -\infty$ and $s_{m+1} = +\infty$.} 
    \begin{enumerate}[label=(\roman*)]
        \item for all $j \in [m]$, $\util(s_j,\strat) \ge \intu(s_{j-1},s_{j+1}) - \epsilon$, and
        \item  $\min_{j \in [m]} u(s_j, \strat) \ge \max_{k \in \{0, \ldots, m\}} \intu(s_k, s_{k+1}) - \epsilon$.
    \end{enumerate}
\end{prop}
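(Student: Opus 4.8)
The plan is to mirror the proof of Proposition~\ref{prop:equilibrium-characterization} essentially line by line, inserting an additive slack of $\epsilon$ wherever that proof used a ``best response'' inequality. Throughout, I treat ``$\strat$ is an $\epsilon$-equilibrium'' as ``no candidate $i$ has a deviation $s_i'\in\points_c$ with $s_i'\neq s_j$ for all $j\neq i$ and $\util(s_i',\strat_{-i}\cup\{s_i'\})>\util(s_i,\strat)+\epsilon$'' (when $\points_c$ is discrete the $\delta\to0$ limit in the definition is vacuous, so this is exactly the negation of the defining condition).

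For the if direction, suppose (i) and (ii) hold but, for contradiction, some candidate $i$ has a deviation to $s_i'$ improving her utility by more than $\epsilon$. If $s_{i-1}<s_i'<s_{i+1}$, then in $\strat_{-i}\cup\{s_i'\}$ the neighbours of $s_i'$ are $s_{i-1}$ and $s_{i+1}$, so the deviation payoff is $\utilout(s_i',\{s_{i-1},s_{i+1}\})\le\intu(s_{i-1},s_{i+1})$, which by (i) is at most $\util(s_i,\strat)+\epsilon$ --- a contradiction. Otherwise $s_i'$ lies strictly between a pair of consecutive candidates $s_k,s_{k+1}$ of $\strat$ (possibly a dummy) with $k,k+1\neq i$, so the deviation payoff equals $\utilout(s_i',\{s_k,s_{k+1}\})\le\intu(s_k,s_{k+1})\le\max_{k'\in\{0,\dots,m\}}\intu(s_{k'},s_{k'+1})$, which by (ii) together with $\util(s_i,\strat)\ge\min_{j\in[m]}\util(s_j,\strat)$ is again at most $\util(s_i,\strat)+\epsilon$.

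For the other direction, suppose $\strat$ is an $\epsilon$-equilibrium. Condition (i) is immediate: for fixed $j$, every $z\in(s_{j-1},s_{j+1})\cap\points_c$ with $z\neq s_j$ is a legal deviation for $j$ with payoff $\utilout(z,\{s_{j-1},s_{j+1}\})\le\util(s_j,\strat)+\epsilon$, while $z=s_j$ gives payoff exactly $\util(s_j,\strat)$; maximising over $z$ yields $\util(s_j,\strat)\ge\intu(s_{j-1},s_{j+1})-\epsilon$. For condition (ii), let $i_{\min}\in\arg\min_{j\in[m]}\util(s_j,\strat)$ and fix $k\in\{0,\dots,m\}$. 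If $i_{\min}\notin\{k,k+1\}$, deleting $i_{\min}$ keeps candidates at both $s_k$ and $s_{k+1}$, so $i_{\min}$ can deviate anywhere in $(s_k,s_{k+1})$ for payoff up to $\intu(s_k,s_{k+1})$, whence $\util(s_{i_{\min}},\strat)\ge\intu(s_k,s_{k+1})-\epsilon$. If $i_{\min}\in\{k,k+1\}$, I would first record the monotonicity fact that $\utilout(z,\{p,q\})$ is nondecreasing as $p$ decreases or $q$ increases (pushing a neighbour away only transfers voters to the candidate at $z$; a short check from the definition of $\util$, tracking the half-splits at $\mu(z,p)$ and $\mu(z,q)$). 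Then a deviation of $i_{\min}$ into $(s_k,s_{k+1})$, where one endpoint is now $i_{\min}$'s former, farther neighbour rather than $s_{i_{\min}}$ itself, still achieves at least $\intu(s_k,s_{k+1})$, so $\util(s_{i_{\min}},\strat)\ge\intu(s_k,s_{k+1})-\epsilon$ once more. Maximising over $k$ gives (ii).

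The only part that is not a direct transcription of the proof of Proposition~\ref{prop:equilibrium-characterization} is the monotonicity fact invoked in the case $i_{\min}\in\{k,k+1\}$ --- equivalently, that when $i_{\min}$ is deleted and its two adjacent intervals merge, the best deviation payoff over the merged interval is at least the larger of the two $\intu$ values of the pieces. I expect this minor point to be the only real obstacle; everything else is the original argument with $\epsilon$ carried along.
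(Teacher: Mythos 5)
Your proof is correct and follows the same structure as the paper's own argument: for sufficiency, a contradiction split into the two cases $s_i'\in(s_{i-1},s_{i+1})$ versus $s_i'$ outside that interval, and for necessity, the observation that each $s_j$ (in particular $s_{i_{\min}}$) must be an $\epsilon$-best response. The one place you go beyond a transcription --- explicitly invoking monotonicity of $\utilout(z,\{p,q\})$ as a neighbour moves away, to handle the case where the minimum-utility candidate deviates into an interval adjacent to her own position --- is a point the paper's proof passes over silently, so this is a correct refinement of the same approach rather than a different one.
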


\begin{proof} For the if direction, for the sake of contradiction, assume that $\strat = \{s_1, \ldots, s_m\}$ is not an $\epsilon$-equilibrium. Therefore, there exists a candidate $i$ who can improve her utility by more than $\epsilon$ at another location $s_i' \in \points_c\setminus\strat$. If $s_{i-1} < s_i' < s_{i+1}$, this contradicts the first condition of the Proposition, since $\util(s_j,\strat) \ge \intu(s_{i-1}, s_{i+1}) - \epsilon$. Now, suppose $s_i' \notin [s_{i-1}, s_{i+1}]$. According to the second condition of the proposition, $\util(s_i, \strat) \geq \max_{k \in \{0, \ldots, m\}} \intu(s_k, s_{k+1}) - \epsilon$, which contradicts the assumption that candidate $i$ can strictly increase her utility by more than $\epsilon$ by moving to position $s_i'$.

        For the other direction, let $\strat$ be an $\epsilon$-equilibrium and consider an arbitrary candidate $i \in [m]$. Then, $s_i$ must be an $\epsilon$-best response for candidate $i$, which means $\util(s_i,\strat) \ge \intu(s_{i-1}, s_{i+1}) - \epsilon$. Now, let $i_{\min} = \arg \min_{j \in [m]} u(s_j, \strat)$ be a candidate with the minimum utility. Since $\strat = \{s_1, \ldots, s_m\}$ is an $\epsilon$-equilibrium, $s_{i_{\min}}$ is also an $\epsilon$-best response for candidate $i_{\min}$. Thus, $\util(s_{i_{\min}}, \strat) \geq \max_{k \in \{0, \ldots, m\}} \intu(s_k, s_{k+1}) - \epsilon$, which is the second condition of the proposition.
\end{proof}

Recall that table entry $T(i,s_{i-1},s_i,s_{i+1},\minu, \maxd)$ is set to 1 if the following conditions are satisfied, with $\strat' = \{s_{i-1}, s_i, s_{i+1}\}$:

\begin{enumerate}
    \item[C1':] \label{cond:c1prime} $\util(s_i, \strat') \ge \intu(s_{i-1},s_{i+1}) - \epsilon$, $\util(s_i, \strat')$ $ \ge \minu$ $\ge \maxd - \epsilon$,  \\ $\maxd \ge \max \{\intu(s_{i-1},s_i), \intu(s_i, s_{i+1})\}$, and
    \item[C2':] \label{cond:c2prime} there exist $s_{i+2}  \in \points_C$ and $\minu', \maxd' \in \mcV$ so that $s_{i+2}> s_{i+1}$, $\minu' \ge \minu \ge \maxd - \epsilon \ge\maxd' - \epsilon$, and $T(i+1,s_i, s_{i+1}, s_{i+2}$ $, \minu', \maxd') = 1$.
\end{enumerate}

If $i = m$, the second condition is not checked. If there exist $s_1$, $s_2$, $s_3$, $\minu$, $\maxd$ so that $T(1, s_1, s_2, s_3, \minu, \maxd) =1$, the algorithm returns that the instance has an equilibrium, else it returns that there is no equilibrium.

\begin{lemma}
\label{lem:discrete-table-filling-approx}
    For $i = 1, \ldots, m$, $T(i,s_{i-1},s_i,s_{i+1},\minu,\maxd) = 1$ iff $\minu\ge\maxd - \epsilon$, and there exist positions $s_{i+2} <  \ldots < s_m \in \points_c$ so that for $\strat' = \{s_{i-1}, s_i, \ldots, s_m\}$ and $\points_c' = \{p \in \points_c :p \ge s_{i-1}\}$,
\begin{enumerate}[label=(\alph*)]
    \item \label{property:interval-approx} for all $j \ge i$, $\util(s_j, \strat') \ge \intu(s_{j-1}, s_{j+1}) - \epsilon$,
    \item \label{property:minu-approx} the minimum utility of any candidate $j \ge i$ is at least $\minu$, and
    \item \label{property:maxd-approx} given the strategy profile $\strat'$, the maximum utility obtainable by a new candidate in $\points_c'\setminus\strat'$ is at most $\maxd$.
\end{enumerate}
\end{lemma}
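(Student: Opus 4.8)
The plan is to mirror the proof of Lemma~\ref{lem:discrete-table-filling}, carrying the $-\epsilon$ slack through the same induction on $i$ from $m$ down to $1$, and adjusting only where the exact equalities in conditions C1, C2 have been relaxed to the approximate inequalities in C1$'$, C2$'$. I would split the argument into the two directions of the ``iff'' exactly as before.

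For the forward direction (if $T(i,s_{i-1},s_i,s_{i+1},\minu,\maxd)=1$ then the three properties hold), I would start with the base case $i=m$: here $\strat'=\{s_{m-1},s_m\}$ and $\points_c'=\{p\in\points_c:p\ge s_{m-1}\}$, and the modified if-condition (the analogue of line~3 of Algorithm~\ref{algo:dp-init} with the $-\epsilon$ terms) directly encodes \cref{property:interval-approx,property:minu-approx,property:maxd-approx}, together with $\minu\ge\maxd-\epsilon$. For the inductive step $i<m$, C2$'$ gives us $s_{i+2}>s_{i+1}$ and $\minu',\maxd'$ with $\minu'\ge\minu\ge\maxd-\epsilon\ge\maxd'-\epsilon$ and $T(i+1,s_i,s_{i+1},s_{i+2},\minu',\maxd')=1$; the induction hypothesis then supplies $s_{i+3},\ldots,s_m$ such that every candidate $j\ge i+1$ is an $\epsilon$-best response in $(s_{j-1},s_{j+1})$, every such candidate has utility at least $\minu'$, and no new candidate placed after $s_i$ gets more than $\maxd'$. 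C1$'$ adds that candidate $i$ is an $\epsilon$-best response in $(s_{i-1},s_{i+1})$, giving \cref{property:interval-approx} for all $j\ge i$; that $\util(s_i,\strat')\ge\minu$, which combined with $\minu'\ge\minu$ gives \cref{property:minu-approx}; and that $\maxd\ge\max\{\intu(s_{i-1},s_i),\intu(s_i,s_{i+1})\}$, which combined with $\maxd\ge\maxd'$ gives \cref{property:maxd-approx} over all of $\points_c'\setminus\strat'$.

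For the converse direction, I would again induct from $i=m$ downward. In the base case, given positions satisfying the three properties for $i=m$, the inequalities unwind to exactly the modified if-condition of the initialization step, so $T(m,\cdots)=1$. For $i<m$, given $s_{i-1},\ldots,s_m$ and $\minu,\maxd$ satisfying \cref{property:interval-approx,property:minu-approx,property:maxd-approx}, the truncated profile $s_i,\ldots,s_m$ with the same $\minu,\maxd$ satisfies the properties for index $i+1$ (note $\points_c'$ only shrinks, so \cref{property:maxd-approx} is preserved), so by the induction hypothesis $T(i+1,s_i,s_{i+1},s_{i+2},\minu,\maxd)=1$; taking $\minu'=\maxd'=\minu=\maxd$... more carefully, taking $\minu'=\minu$ and $\maxd'=\maxd$ satisfies $\minu'\ge\minu\ge\maxd-\epsilon\ge\maxd'-\epsilon$ trivially, and the properties at $i$ directly verify C1$'$, so the algorithm sets $T(i,s_{i-1},s_i,s_{i+1},\minu,\maxd)=1$.

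The step I expect to require the most care is keeping track of where the $\epsilon$ slack accumulates in \cref{property:maxd-approx}: one must check that the bound on a deviating candidate's utility is genuinely $\maxd$ (not $\maxd-\epsilon$ or $\maxd+\epsilon$), since C1$'$ constrains $\maxd$ against the \emph{exact} interior-deviation utilities $\intu(s_{i-1},s_i),\intu(s_i,s_{i+1})$ while the chain $\maxd\ge\maxd'$ passed down by C2$'$ only involves $\maxd-\epsilon\ge\maxd'-\epsilon$, i.e.\ $\maxd\ge\maxd'$. Verifying that these combine to give the clean statement ``maximum deviation utility $\le\maxd$'' — rather than something off by $\epsilon$ — and that the interplay with the $\minu\ge\maxd-\epsilon$ relation is consistent, is the one genuinely non-mechanical point; everything else is a routine transcription of the proof of Lemma~\ref{lem:discrete-table-filling}.
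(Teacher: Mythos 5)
Your proposal is correct and follows essentially the same route as the paper's proof: the same induction on $i$ from $m$ down to $1$, with the base case read off from the modified initialization condition and the inductive step combining C1$'$ with the guarantees inherited through C2$'$, including the observation that $\maxd-\epsilon\ge\maxd'-\epsilon$ yields $\maxd\ge\maxd'$ so that \cref{property:maxd-approx} holds with a clean $\maxd$ bound. The only difference is that you spell out the converse direction (with $\minu'=\minu$, $\maxd'=\maxd$), which the paper omits as "very similar"; your sketch of it is correct.
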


\begin{proof}
We first show that if $T(i,s_{i-1},s_i,s_{i+1},\minu,\maxd) = 1$, then the given conditions must be satisfied. We prove this by using induction on $i$ from $m$ to 1.

If $i=m$, then $\points_c' = \{p \in \points_c: p \ge s_{m-1}\}$ and $\strat' = \{s_{m-1},s_m\}$. If $T(m,s_{m-1},s_m,s_{m+1},\minu,\maxd) = 1$, then from condition~\ref{cond:c1prime}, $\util(s_m, \strat') \ge \intu(s_{m-1}, s_{m+1}) - \epsilon$, satisfying the first condition in the lemma. Further from the condition, $\util(s_m, \strat') \ge \minu$, and $\maxd \ge \max \{\intu(s_{m-1},s_m), \intu(s_m, s_{m+1})\}$, satisfying the other conditions as well.


If $i<m$, then $\points_c' = \{p \in \points_c: p \ge s_{i-1}\}$, and $\strat' = \{s_{i-1}, s_i, \ldots, s_m\}$. If $T(i,s_{i-1},s_i,s_{i+1},\minu,\maxd) = 1$, then from condition~\ref{cond:c1prime}, $\util(s_i, \strat') \ge \intu(s_{i-1}, s_{i+1}) - \epsilon$, satisfying the first condition in the lemma. Further from the condition, $\util(s_i, \strat') \ge \minu$, and $\maxd \ge \max \{\intu(s_{i-1},s_i), \intu(s_i, s_{i+1})\}$. 

By the inductive hypothesis,~\cref{property:minu-approx,property:maxd-approx} hold for $T(i+1, s_i, s_{i+1}, s_{i+2}, \minu', \maxd')$. That is, there exist $s_{i+3}$, $\ldots$, $s_m$ so that each candidate $j \ge i+1$ is in an $\epsilon$-best position in the interval $(s_{j-1}, s_{j+1})$, each candidate $j \ge i+1$ gets at least $\minu'$ utility, and no new candidate can get more than $\maxd'$ utility at any position after $s_i$. Condition~\ref{cond:c2prime} then checks if $\minu' \ge \minu \ge \maxd - \epsilon \ge\maxd' - \epsilon$. It follows that for all $j \ge i$, $\util(s_j, \strat') \ge \intu(s_{j-1}, s_{j+1}) - \epsilon$, satisfying the first condition of the lemma. Further, $\util(s_j, \strat') \ge \minu$ (satisfying the second condition), and $\maxd$ is at least the utility to be gained by a new candidate in $\points_c' \setminus \strat'$ satisfying the third condition.

The proof of the opposite direction, to show that if the conditions of Lemma~\ref{lem:discrete-table-filling-approx} are satisfied, then $T(i,s_{i-1},s_i,s_{i+1},\minu,\maxd) = 1$, is very similar and is omitted.


\end{proof}

\begin{restatable}{theorem}{thmdpapprox}
\label{thm:dpapprox}
    Given $\epsilon \ge 0$, algorithm \ref{algo:dp} with the modified conditions for table-filling decides and finds an $\epsilon$-equilibrium if it exists in polynomial time. The time taken is independent of $\epsilon$.
\end{restatable}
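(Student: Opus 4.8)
The plan is to mirror the proof of Theorem~\ref{thm:dp} verbatim, substituting the $\epsilon$-versions of the two ingredients. First I would invoke Lemma~\ref{lem:discrete-table-filling-approx} at $i=1$: for some $s_1<s_2<s_3\in\points_c$ and $\minu,\maxd\in\mcV$ with $\minu\ge\maxd-\epsilon$ we have $T(1,s_1,s_2,s_3,\minu,\maxd)=1$ if and only if there are positions $s_4<\dots<s_m\in\points_c$ such that $\strat=\{s_1,\dots,s_m\}$ satisfies properties~\ref{property:interval-approx}--\ref{property:maxd-approx}. Taking $\minu=\min_{j\in[m]}\util(s_j,\strat)$ and $\maxd=\max_{k\in\{0,\dots,m\}}\intu(s_k,s_{k+1})$ (both in $\mcV$, see below), property~\ref{property:interval-approx} is exactly clause (i) of Proposition~\ref{prop:equilibrium-characterization-approx}, and properties~\ref{property:minu-approx}/\ref{property:maxd-approx} together with $\minu\ge\maxd-\epsilon$ are exactly clause (ii). Hence a $1$-entry of the claimed form exists iff the instance admits an $\epsilon$-equilibrium. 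When it does, one reads off $s_1,s_2,s_3$ and then follows the pointers recorded at the \texttt{Break} in Algorithm~\ref{algo:dp} to recover $s_4,\dots,s_m$, so the modified algorithm both decides existence and outputs a witnessing profile; for $\epsilon=0$ the conditions C1$'$, C2$'$ collapse to C1, C2 and this reduces to Theorem~\ref{thm:dp}.

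Second I would bound the running time. The table has the same dimensions as in the exact case, $m\times(n+2)^3\times|\mcV|^2$, and filling an entry scans $s_{i+2}\in\points_c$ together with $\minu',\maxd'\in\mcV$; by Proposition~\ref{prop:different-utilities}, $|\mcV|\le 4\binom{n}{2}=\mathcal{O}(n^2)$, so the total work is $\mathcal{O}\!\big(m(n+2)^3|\mcV|^2\cdot n|\mcV|^2\big)=\poly(m,n)$, assuming oracle access to $F$. The essential observation for the last sentence of the theorem is that $\epsilon$ appears \emph{only} inside the numerical comparisons in C1$'$ and C2$'$ (the occurrences of $\maxd-\epsilon$): it does not alter the index set $\mcV$ over which $\minu,\maxd,\minu',\maxd'$ range, nor any loop bound, nor the number of table cells. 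Consequently both the number of entries and the cost per entry are independent of $\epsilon$.

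The only point that is not purely mechanical is the membership $\intu(s_k,s_{k+1})\in\mcV$ that licenses choosing $\maxd$ to be exactly $\max_k \intu(s_k,s_{k+1})$ (so that clause (ii) of Proposition~\ref{prop:equilibrium-characterization-approx} translates cleanly into a comparison between two elements of $\mcV$). This holds because placing a new candidate at the maximizer $z=\intr(s_k,s_{k+1})$ in the profile $\strat\cup\{z\}$ gives that candidate utility exactly $\intu(s_k,s_{k+1})=\util(z,\strat\cup\{z\})$, so this value lies in $\mcV$ by definition; equivalently, one appeals to the explicit ``four values per interval'' description from the proof of Proposition~\ref{prop:different-utilities}. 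Everything else is a direct transcription of the proof of Theorem~\ref{thm:dp}, with Lemma~\ref{lem:discrete-table-filling} and Proposition~\ref{prop:equilibrium-characterization} replaced by Lemma~\ref{lem:discrete-table-filling-approx} and Proposition~\ref{prop:equilibrium-characterization-approx}. I do not expect any genuine obstacle; the main thing to state carefully is precisely why the $\epsilon$-dependence drops out, namely that the discretization of utilities into $\mcV$ is driven by $\points_c$ and $F$ alone, never refined by $\epsilon$.
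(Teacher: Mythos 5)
Your proposal is correct and follows essentially the same route as the paper's own (very terse) proof: combine Lemma~\ref{lem:discrete-table-filling-approx} with Proposition~\ref{prop:equilibrium-characterization-approx} to equate a $1$-entry at $i=1$ with the existence of an $\epsilon$-equilibrium, then reuse the runtime bound from Theorem~\ref{thm:dp}. Your added justifications---that $\intu(s_k,s_{k+1})\in\mcV$ and that $\epsilon$ enters only through comparisons rather than through the size of $\mcV$ or any loop bound---are details the paper leaves implicit, and they are accurate.
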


\begin{proof}
The proof is similar to the proof for~\Cref{thm:dp}. It follows from~\Cref{lem:discrete-table-filling-approx} and~\Cref{prop:equilibrium-characterization-approx} that there exist $s_0=-\infty$, $s_1$, $s_2$, $\minu \ge \maxd-\epsilon$ so that $T(1, s_0, s_1, s_2, \minu, \maxd) = 1$ if and only if the instance has an equilibrium. The runtime analysis is the same as earlier.  Therefore the modified algorithm decides and finds an $\epsilon$-equilibrium in time $\mathcal{O}(m\times (n+2)^3\times |\mcV|^2\times n\times |\mcV|^2)$.
\end{proof}


\section{Approximate Equilibria for Continuous Candidate and Voter Locations}
\label{sec:bothcontinuous}

We next consider the case where the sets $\points_c = \points_v  = [0,R]$. In this section, we require the assumption that the distribution of voters is nonatomic, and there exists a bounded density function $f:[0,R]\rightarrow\mathbb{R^+}\cup \{0\}$ such that $f(z) \le M$ for some finite $M$, and $\int_0^R f(z) \, dz = 1$. Let $F(y) = \int_0^y f(z) \, dz$ be the cumulative distribution function. Since $f(z) \le M$, for any interval of length $\delta$, the total voters in the interval is at most $M \delta$.

As mentioned, for any strategy profile $\strat$ we restrict the candidates to occupy distinct positions in the interval, so that $\min_{i,j} |s_i - s_j|$ $\ge \delta$ for some small $\delta$. It is useful to consider $\delta$ as approaching zero. Specifically, we will assume $M \delta < 10^{-3}$. For a strategy profile $\strat$, we let $\strat_{-i}$ represent the positions of all candidates except $i$.

For our algorithms to compute equilibria, in order to access the voter density function $f$, we will assume access to an oracle that supports the following queries:
\begin{itemize}
    \item $F(z)$: Returns $F(z)$, the total voters in the interval $[0,z]$.
    \item Cut($z,v$): Given a location $z \in [0,1]$ and a value $v \in [0,1]$, returns a location $y \ge z$ so that $F(y) - F(z) = v$, or returns $1$ if there is no such $y$. If there are multiple such locations, return one arbitrarily but consistently.
\end{itemize}

Our main result in this section is the following.

\begin{theorem}
In the model with continuous voter and candidate locations, if there exists an $\epsilon$-equilibrium for $\epsilon > 0$, then a $4\epsilon$-equilibrium can be computed in time $\poly(m,M/\epsilon)$.
\label{thm:main-cvcc}
\end{theorem}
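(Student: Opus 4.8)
The plan is to reduce the continuous problem to the discrete one solved in Section~\ref{sec:discrete-candidates}, by discretizing the candidate space $\points_c = [0,R]$ to a fine grid and invoking~\Cref{thm:dpapprox}. Concretely, I would fix a grid $G = \{0, \eta, 2\eta, \ldots\} \cap [0,R]$ for a step size $\eta$ chosen so that $M\eta$ is a small constant fraction of $\epsilon$ (e.g., $\eta = \epsilon/(cM)$ for a suitable constant $c$); note $|G| = O(R/\eta) = O(MR/\epsilon) = \poly(m, M/\epsilon)$ since $R$ is part of the input (and, in the continuous nonatomic normalized model, effectively bounded in terms of $M$ anyway). The discrete instance $(G, m)$ with voter measure $F$ inherited from the density can be solved by Algorithm~\ref{algo:dp} with the $\epsilon'$-approximate table-filling conditions for an appropriately chosen $\epsilon'$; the key is to pick $\epsilon'$ so that (i) every true $\epsilon$-equilibrium of the continuous instance has a nearby grid point strategy profile that is an $\epsilon'$-equilibrium of the discrete instance, guaranteeing the algorithm finds \emph{something}, and (ii) every $\epsilon'$-equilibrium of the discrete instance is a $4\epsilon$-equilibrium of the continuous instance.

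The two directions both rest on one elementary Lipschitz-type estimate: moving a single candidate (or a single voter-splitting boundary $\mu(\cdot,\cdot)$) by at most $\eta$ changes any candidate's utility, and changes $\intu$ over any interval, by at most $O(M\eta)$, because each $\util$ and $\utilout$ is a sum of terms of the form $F(a,b)$ with endpoints that are averages of candidate positions, and $|F(a,b) - F(a',b')| \le M(|a-a'| + |b-b'|)$ by the density bound $f \le M$. So for direction (i): given an $\epsilon$-equilibrium $\strat^*$ of the continuous instance, round each $s_i^*$ to its nearest grid point $\hat s_i$ (breaking the rare ties that collapse two candidates by nudging to adjacent grid points, legitimate since $\delta \to 0$); then every candidate's utility and every deviation value changes by $O(M\eta)$, so $\hat\strat$ satisfies the $\epsilon'$-approximate equilibrium conditions of~\Cref{prop:equilibrium-characterization-approx} over the grid with $\epsilon' = \epsilon + O(M\eta) \le 2\epsilon$ for appropriate $c$. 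Hence by~\Cref{lem:discrete-table-filling-approx} the table at $T(1,\cdot)$ has a $1$ entry and the algorithm returns a profile. For direction (ii): the returned profile $\strat$ is an $\epsilon'$-equilibrium of the discrete instance $(G,m)$; I must argue it is a $4\epsilon$-equilibrium of the continuous instance. The subtlety is that a deviating candidate in the continuous model may move to a point \emph{not} on the grid, including into an interval $(s_j, s_{j+1})$ where the best continuous response $\intr$ need not be a grid point. But the best continuous deviation value differs from the best \emph{grid} deviation value in the same interval by at most $O(M\eta)$ (round the optimal continuous deviation to the nearest grid point inside the interval, using that the interval has length $\ge \delta \gg \eta$ in the relevant non-degenerate cases, and handle narrow intervals separately since they yield utility $O(M\eta)$ anyway). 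Combining, any continuous deviation beats $\util(s_i,\strat)$ by at most $\epsilon' + O(M\eta) \le 3\epsilon + O(M\eta) \le 4\epsilon$.

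So the skeleton is: \textbf{(1)} define the grid $G$ with $\eta = \Theta(\epsilon/M)$ and note $|G| = \poly(m, M/\epsilon)$; \textbf{(2)} prove the utility-perturbation lemma ($O(M\eta)$ change under $O(\eta)$ movement of any candidate, for $\util$, $\utilout$, $\intu$, $\intr$); \textbf{(3)} show a rounded copy of any continuous $\epsilon$-equilibrium is a discrete $2\epsilon$-equilibrium, so the DP with parameter $\epsilon' = 2\epsilon$ succeeds whenever a continuous $\epsilon$-equilibrium exists; \textbf{(4)} show any discrete $2\epsilon$-equilibrium returned is a continuous $4\epsilon$-equilibrium, with the care above about off-grid deviations and narrow intervals; \textbf{(5)} invoke~\Cref{thm:dpapprox} for the running time, which is polynomial in $|G| = \poly(m,M/\epsilon)$ and $m$. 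The oracle queries $F(\cdot)$ and $\mathrm{Cut}(\cdot,\cdot)$ supply exactly the evaluations of $\util$, $\utilout$, $\intr$, $\intu$ on the grid that Algorithm~\ref{algo:dp} needs (for $\intr$/$\intu$ on an interval, the optimal split in a continuous subinterval between two fixed candidates is computable via $\mathrm{Cut}$, then rounded to the grid).

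The main obstacle is step~(4), specifically controlling deviations to non-grid points: the DP only certifies that no candidate gains much by moving to a \emph{grid} point, whereas an $\epsilon$-equilibrium must be robust to \emph{all} real deviations. The resolution is entirely via the perturbation lemma plus a careful case split on whether the deviation target lands in a "wide" inter-candidate gap (length $\ge$ some threshold, where rounding to the grid costs only $O(M\eta)$) or a "narrow" gap (where any candidate, new or old, can collect only $O(M\eta)$ voters, so the deviation is worthless up to $O(M\eta)$ anyway); with $\eta = \Theta(\epsilon/M)$ both cases lose only $O(\epsilon)$, and tuning the constant $c$ in $\eta = \epsilon/(cM)$ makes the total slack fit under $4\epsilon$. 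The $\delta \to 0$ convention also needs a brief remark: since $M\delta < 10^{-3}$ and we may take $\delta \ll \eta$, the minimum-separation constraint never obstructs the rounding argument, and the $\lim_{\delta \to 0}$ in the definition of $\epsilon$-equilibrium is handled by noting all the relevant utilities converge as $\delta \to 0$ with error absorbed into the $O(M\eta)$ terms.
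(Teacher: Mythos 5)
Your overall route is the same as the paper's: discretize $[0,R]$ to a grid of spacing $\Theta(\epsilon/M)$, use the Lipschitz estimate coming from $f \le M$ to show that rounding an $\epsilon$-equilibrium onto the grid yields a $2\epsilon$-equilibrium, run the approximate dynamic program of \Cref{thm:dpapprox} with parameter $2\epsilon$, and absorb off-grid deviations into another $O(M\eta)$ loss to reach $4\epsilon$. Indeed your step~(4), controlling deviations to non-grid points via a wide/narrow gap case split, is spelled out more carefully than in the paper, which leaves the grid-to-continuous direction essentially implicit.

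The genuine gap is in step~(3), in the phrase ``breaking the rare ties that collapse two candidates by nudging to adjacent grid points.'' You give no argument that at most two candidates round to the same grid point, and for $\epsilon$ that is not small relative to $1/m$ this is simply false: an $\epsilon$-equilibrium may pack many candidates into a single cell of width $\eta = \Theta(\epsilon/M)$ (for constant $\epsilon$, all $m$ of them can sit within pairwise distance $\delta$ of one another), and spreading $k$ colliding candidates over $k$ distinct grid points displaces some of them by $\Omega(k\eta)$, which your perturbation lemma cannot absorb. The theorem is claimed for every $\epsilon > 0$, so this case must be handled. The paper closes it in two steps that your proposal omits: first it disposes of $\epsilon \ge 1/(4m)$ outright by returning the $(m+1)$-quantile profile, which is a $1/(m+1)$-equilibrium and hence already a $4\epsilon$-equilibrium; then, for $\epsilon \le 1/(4m)$, it proves (\Cref{clm:alpha-interval}) that in an $\epsilon$-equilibrium every candidate earns at least $1/(4m) \ge \epsilon$ votes while each grid cell contains at most $M\alpha$ voter mass, a constant fraction of $\epsilon$, so no cell holds more than two candidates and two cells holding two candidates cannot be adjacent; this is exactly what makes the tie-breaking in \Cref{clm:alpha-discretize} cost only one extra grid step. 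Without this case split and packing bound your rounding step does not go through. (A shared, minor wrinkle: the grid has $R/\eta$ points, so the running time is really $\poly(m,R,M/\epsilon)$ as the paper's own proof states, not $\poly(m,M/\epsilon)$; the density bound gives $R \ge 1/M$, not an upper bound on $R$.)
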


We define the point $q_{i,m+1}$ as the $i$th $(m+1)$-quantile. That is, $F(q_{i,m+1})=i/(m+1)$. Then the strategy profile that places the $m$ candidates at the $(m+1)$th quantiles (i.e., for all $i\in[m]$, $s_i = q_{i,m+1}$) is a $1/(m+1)$-approximate equilibrium~\cite{bhaskar2024nearly}. Hence if $\epsilon \ge 1/(4m)$, then a $4 \epsilon$-equilibrium is easily obtained. In the remainder, we will assume that $\epsilon \le 1/(4m)$.

\subsection{Discretizing possible candidate locations}

Let $\alpha = \epsilon/(4M)$. We assume $1/\alpha$ is an integer. Define $\points_{\alpha} := \{0,\alpha, 2 \alpha, 3 \alpha, \ldots, R\}$. We call these points the $\alpha$-positions, and an interval $(k \alpha, (k+1)\alpha]$ for $k \in [R/\alpha]$ an $\alpha$-interval. We will show that if there exists an $\epsilon$-equilibrium, then there exists a $4\epsilon$-equilibrium where the candidates are located at $\alpha$-positions.

Let $\strat$ be an $\epsilon$-equilibrium. We first show a bound on the number of candidates in an $\alpha$-interval. Recall that by assumption, $\epsilon \le 1/(4m)$.

\begin{restatable}{clm}{alphainterval}
    In any $\epsilon$-equilibrium $\strat$, there are at most two candidates in any $\alpha$-interval. If there are two candidates in an $\alpha$-interval, then there are no candidates in the preceding and succeeding $\alpha$-intervals.
    \label{clm:alpha-interval}
\end{restatable}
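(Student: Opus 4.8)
The plan is to establish both parts of the claim by a counting argument that exploits the bounded density and the $\epsilon$-equilibrium condition, with $\epsilon \le 1/(4m)$. The key quantitative fact is that any $\alpha$-interval, having length $\alpha = \epsilon/(4M)$, contains voter mass at most $M\alpha = \epsilon/4$. I would reason about what a candidate crammed into such a tiny interval can possibly earn: if a candidate $i$ sits in an $\alpha$-interval together with other candidates, her left and right boundaries $\mu(s_{i-1},s_i)$ and $\mu(s_i,s_{i+1})$ are close, so her utility $\util(s_i,\strat)$ is small — comparable to the voter mass in a neighborhood of that interval.

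First I would handle the ``at most two'' part. Suppose for contradiction that three candidates $i-1, i, i+1$ lie in a single $\alpha$-interval. Then the middle candidate $i$ has $s_{i-1}, s_{i+1}$ both within distance $\alpha$ of $s_i$, so the interval $[\mu(s_{i-1},s_i), \mu(s_i,s_{i+1})]$ has length at most $\alpha$, hence $\util(s_i,\strat) \le F(\mu(s_{i-1},s_i),\mu(s_i,s_{i+1})) \le M\alpha = \epsilon/4$. On the other hand, consider candidate $i$'s best deviation. Since $\strat$ is an $\epsilon$-equilibrium, no deviation gains more than $\epsilon$, so $\util(s_i,\strat) \ge \bestu(\strat_{-i}) - \epsilon$ where $\bestu$ denotes the utility of the best response for a new candidate given the other $m-1$ positions. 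But with only $m-1$ other candidates fixed, there is always a deviation earning a reasonably large share: placing a new candidate optimally among $m-1$ others (using the quantile-type argument, or more simply, the largest gap between consecutive occupied boundaries must carry at least $1/m$ mass total, and a candidate can capture roughly half of the largest such block, or split the densest region). The cleanest route: among the $m-1$ other candidates there is some interval between consecutive candidates (or beyond the extreme ones) of voter mass at least $1/m$; a new candidate can enter and capture at least half of it, so $\bestu(\strat_{-i}) \ge 1/(2m)$ — actually one can get the full $1/(m)$-type bound via $\intu$ at the extreme intervals, but $1/(2m) > \epsilon/4 + \epsilon$ suffices since $\epsilon \le 1/(4m)$ gives $\epsilon/4 + \epsilon = 5\epsilon/4 \le 5/(16m) < 1/(2m)$. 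This contradicts $\util(s_i,\strat) \le \epsilon/4$.

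For the second part, suppose two candidates $i, i+1$ occupy one $\alpha$-interval $J$ and a third candidate occupies an adjacent ($\alpha$-)interval (say the one immediately to the left; the right case is symmetric). Then candidate $i$ (the left of the pair) has her left neighbor within one $\alpha$-interval to the left and her right neighbor $s_{i+1}$ within $J$, so $[\mu(s_{i-1},s_i),\mu(s_i,s_{i+1})]$ has length at most $2\alpha$, giving $\util(s_i,\strat) \le 2M\alpha = \epsilon/2$. Again the best-response lower bound $\bestu(\strat_{-i}) \ge 1/(2m)$ holds, so the $\epsilon$-equilibrium condition forces $\epsilon/2 \ge 1/(2m) - \epsilon$, i.e. $3\epsilon/2 \ge 1/(2m)$, i.e. $\epsilon \ge 1/(3m)$, contradicting $\epsilon \le 1/(4m)$.

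The main obstacle I anticipate is making the best-response lower bound $\bestu(\strat_{-i}) \gtrsim 1/m$ fully rigorous and uniform — in particular, being careful that when candidates coincide near each other the ``largest gap'' argument still yields an interval of mass $\ge 1/m$ among the remaining $m-1$ candidates (the total mass is $1$, there are at most $m$ ``slots'' including the two unbounded ends, so by averaging one slot has mass $\ge 1/m$), and that a new candidate can actually realize close to half (or, at an extreme slot, essentially all) of that mass given the $\delta$-separation constraint as $\delta \to 0$. I would also need to double-check the edge cases where the $\alpha$-interval in question is at the boundary $[0,\alpha]$ or $[R-\alpha,R]$, where a clustered candidate might face an unbounded dummy neighbor — but this only makes her utility larger on one side by at most the mass to the boundary, which is still controlled, or one simply notes the dummy contributes nothing and the relevant bound on the capture interval length still holds. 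Once the $\Omega(1/m)$ best-response bound is pinned down, both parts follow from the arithmetic above.
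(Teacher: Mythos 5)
Your proof is correct and takes essentially the same route as the paper's: both bound the clustered candidate's utility above by the voter mass of a constant number of $\alpha$-intervals ($O(M\alpha)=O(\epsilon)$), and bound every candidate's utility below by $1/(2m)-\epsilon\ge 1/(4m)\ge\epsilon$ via a pigeonhole/best-response deviation, yielding the contradiction. The paper folds both parts of the claim into a single statement (no three candidates in any three consecutive $\alpha$-intervals) and derives the $1/(2m)$ bound by deviating next to the candidate currently receiving at least $1/m$ votes rather than via the largest gap among the other $m-1$ candidates, but these are cosmetic differences.
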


\begin{proof}
We first show that every candidate is getting at least $1/(4m) \ge \epsilon$ votes. Since there is a unit total mass of voters, in any strategy profile, there is a voter $i$ that gets at least $1/m$ votes. By moving (arbitrarily close) to the immediate left or right of voter $i$, a voter can hence get at least $1/(2m)$ votes. Since $\strat$ is an $\epsilon$-equilibrium, each voter is hence getting at least $1/(2m) - \epsilon$ votes, which is at least $1/(4m) \ge epsilon$.

The total mass of voters in any $\alpha$-interval is at most $M \times \alpha = \epsilon/8$. Hence if there are 3 candidates in any set of 3 consecutive $\alpha$-intervals, i.e., in the interval $(k\alpha, (k+3)\alpha]$, the candidate in the middle will get at most $3\epsilon/8$ votes, and $\strat$ cannot be an $\epsilon$-equilibrium. This completes the proof.
\end{proof}

Given an $\epsilon$-equilibrium $\strat$, we now obtain a $2\epsilon$-equilibrium where the candidates are located at $\alpha$-positions. For this, in each $\alpha$-interval, if there is a single candidate, this candidate is moved to the next $\alpha$-position. If there are two candidates, they are moved to the closest $\alpha$-position. By the claim, there is at most one candidate at each $\alpha$-position, and the relative order of the candidates is maintained. Let $\strat' = (s_1', \ldots, s_m')$ be the resulting strategy profile.

\begin{restatable}{clm}{alphadiscretize}
    $\strat'$ is a $2 \epsilon$-equilibrium.
    \label{clm:alpha-discretize}
\end{restatable}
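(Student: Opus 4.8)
The plan is to bound, for each candidate $i$, how much the perturbation from $\strat$ to $\strat'$ can change (a) the utility candidate $i$ currently receives, and (b) the utility of any deviation available to candidate $i$; since $\strat$ is an $\epsilon$-equilibrium, combining these bounds with an additional slack of roughly $\epsilon$ will give the $2\epsilon$ guarantee. The key quantitative fact I would use repeatedly is that every candidate, every voter-mass cut point $\mu(s_j,s_k)$, and every $\alpha$-position moves by at most $\alpha$ under the perturbation (each candidate moves within its own $\alpha$-interval, or two candidates sharing an interval move to the two nearest $\alpha$-positions by Claim~\ref{clm:alpha-interval}). Since the density is bounded by $M$ and $\alpha = \epsilon/(4M)$, moving an endpoint of an integration interval by at most $\alpha$ changes the enclosed voter mass by at most $M\alpha = \epsilon/4$.

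First I would fix candidate $i$ with left and right neighbors $s_{i-1}, s_{i+1}$ in $\strat$ (which remain her neighbors in $\strat'$, since relative order is preserved). The two relevant cut points, $\mu(s_{i-1},s_i)$ and $\mu(s_i,s_{i+1})$, each shift by at most $\alpha$ (being averages of two points that each shift by at most $\alpha$). Hence $|\util(s_i',\strat') - \util(s_i,\strat)| \le 2 M\alpha = \epsilon/2$; accounting for the (negligible, $O(M\delta)$) atom-splitting terms, which vanish as $\delta \to 0$, we get $\util(s_i',\strat') \ge \util(s_i,\strat) - \epsilon/2$.

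Next I would bound deviations. Consider any location $s_i''$ available to candidate $i$ against $\strat'_{-i}$. I want to compare $\util(s_i'', \strat'_{-i})$ with the best deviation available against the original $\strat_{-i}$. The subtlety is that $s_i''$ need not be available against $\strat_{-i}$ (it could be within $\delta$ of some $s_j$, $j\neq i$), but since $\strat_{-i}$ and $\strat'_{-i}$ differ by at most $\alpha$ in each coordinate, I can find a nearby feasible location $s_i^\star$ against $\strat_{-i}$ with $|s_i'' - s_i^\star| = O(\alpha)$; then the two relevant cut points for a candidate at $s_i''$ in $\strat'_{-i}$ versus at $s_i^\star$ in $\strat_{-i}$ differ by $O(\alpha)$, so the utilities differ by $O(M\alpha)$. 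Being slightly careful with constants, the cleanest route is: $\util(s_i'',\strat'_{-i}) \le \util(s_i^\star, \strat_{-i}) + \epsilon/2 \le \util(s_i, \strat) + \epsilon + \epsilon/2$, where the last step uses that $\strat$ is an $\epsilon$-equilibrium. Combining with the first bound, $\util(s_i'',\strat'_{-i}) \le \util(s_i',\strat') + 2\epsilon$, which is exactly the $2\epsilon$-equilibrium condition. (One should also treat $\delta$-limits throughout: all atom-splitting corrections are $O(M\delta)$ and disappear in the $\lim_{\delta\to 0}$ in the definition of $\epsilon$-equilibrium, so they do not affect the bound.)

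I expect the main obstacle to be the deviation bound: producing, for an arbitrary feasible deviation $s_i''$ in the perturbed profile, a corresponding feasible deviation in the original profile whose utility is provably close, and doing the bookkeeping on the shifted cut points cleanly enough that the total slack is genuinely $2\epsilon$ rather than some larger multiple. In particular one must check the boundary cases — deviations to the extreme left or right (neighbor $s_0 = -\infty$ or $s_{m+1} = +\infty$), and deviations into an $\alpha$-interval that contained two candidates in $\strat$ — to be sure the "$O(\alpha)$ nearby feasible point" argument still goes through and the constant stays at $1/4$ per shifted endpoint.
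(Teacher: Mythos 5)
Your proposal is correct and follows essentially the same argument as the paper: every candidate (and hence every mid-point) shifts by at most $\alpha$, so each candidate's utility and the best deviation utility in any interval each change by at most $2M\alpha=\epsilon/2$, which combined with the $\epsilon$-equilibrium property of $\strat$ yields the $2\epsilon$ bound. Your extra care about feasibility of deviations and atom-splitting is harmless but unnecessary here, since the distribution in this section is nonatomic and the $\delta$-corrections vanish in the limit.
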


\begin{proof}
    Any candidate moves by at most $\alpha$ relative to $\strat$, and consequently, the mid-point between two candidates also moves by at most $\alpha$. Any candidate receives all the votes between their position and the mid-point to the neighboring candidate on either side; since both these locations change by at most $\alpha$, their total votes received change by at most $2 M \alpha = \epsilon / 2$. Further, for any candidates $i$, $j$, since $s_i$ and $s_j$ change by at most $\alpha$, $|\intu(s_i, s_j) - \intu(s_i',s_j')| \le 2M \alpha$ = $\epsilon/2$. Since $\strat$ is an $\epsilon$-equilibrium, it follows that in $\strat'$, any deviation can improve a candidate's utility by at most $\epsilon + 2 \times \epsilon/2$ $=2\epsilon$.
\end{proof}

We now use the dynamic program from Theorem \ref{thm:dp} and the properties shown to complete the proof of \Cref{thm:main-cvcc}.

\begin{proof}[Proof of~\Cref{thm:main-cvcc}.]
Assume we are given $\epsilon$. If $\epsilon \ge 1/(4m)$, then as we noted the strategy profile $\strat$ where $s_i = q_{i,m+1}$ is a $1/(m+1)$ equilibrium, and hence is a $4 \epsilon$-equilibrium as required. If $\epsilon < 1/(4m)$, we define $\alpha = \epsilon / 4M$, and obtain the $\alpha$-positions as shown. \Cref{clm:alpha-discretize} then shows that there exists a $2\epsilon$-equilibrium at the $\alpha$-positions. We can then use the dynamic program from Theorem~\ref{thm:dpapprox} to obtain such an equilibrium, in time $\poly(m,R,M/\epsilon)$.

If we are not given $\epsilon$, we can do a binary search. We start with $\epsilon = 1/(8m)$ and run the above procedure. If the dynamic program fails to produce a $2 \epsilon$-equilibrium, then again from~\Cref{clm:alpha-discretize}, there is no $\epsilon$-equilibrium, and we simply return $\strat = (q_{1,m+1}, \ldots, q_{m,m+1})$. Else, if the dynamic program succeeds, we successively halve $\epsilon$ until it fails to return a $2\epsilon$-equilibrium. At this point, in the previous iteration, our algorithm has returned a $4 \epsilon$-equilibrium, and hence there exists a $2 \epsilon$-equilibrium, but not an $\epsilon$-equilibrium.
\end{proof}

\paragraph*{A note on exact equilibrium computation.}
The problem of computing an exact equilibrium in polynomial time --- or even finite time --- remains open. Previous work provides conditions that are claimed to be sufficient for a strategy profile to be an exact equilibrium~\cite{EatonL75}. We state these conditions in the following subsection, and show that in fact the statement is incorrect, and the stated conditions are not sufficient for a given strategy profile to be an equilibrium.

\subsection{Sufficient conditions for exact equilibria}

In \cite{EatonL75}, Eaton and Lipsey provide four conditions that are claimed to be sufficient for the existence of an exact equilibrium when candidate and voter locations are continuous. We will however give an example to show that this is incorrect, and in fact the four conditions are insufficient for a given strategy profile to be an equilibrium.

To present these four conditions, we first need to define the following. Throughout this section, we will assume that given a strategy profile $\strat=\{s_1,\dots,s_m\}$, $s_1<s_2<\dots<s_m$.
\begin{definition}[Paired Candidate]
    A candidate $i\in[m]$ is called a paired candidate if either $|s_{i-1}-s_i|=\delta$ or $|s_i-s_{i+1}|=\delta$.
\end{definition}
Now we state the sufficiency conditions for an exact equilibrium.
\begin{enumerate}
    \item For all $i,j\in[m]$, $\util(s_i,\strat)\ge \max\{\util^L(s_j,\strat),\util^R(s_j,\strat)\}$
    \item The leftmost (candidate 1) and the rightmost (candidate $m$) candidates are paired.
    \item For any unpaired candidate $1<i<m$, $f((s_{i-1}+s_i)/2)=f((s_i+s_{i+1})/2)$.
    \item Suppose candidate $i$ is paired with candidate $i+1$, then $f((s_i+s_{i+1})/2)\ge\max\{f(\max\{(s_{i-1}+s_{i})/2,0\}),f(\min\{(s_{i+1}+s_{i+2})/2,1\})\}$.
\end{enumerate}

To show that these four conditions are insufficient for an exact equilibrium, consider the following example, shown in Figure~\ref{fig:violation}.

\begin{example}[Violation of sufficiency conditions]\label{ex:violation}
    The voters are distributed according to the following density function.
    \[
    f(x) = \begin{cases}
            \quad\frac{2}{5}x &\quad 0\leq x<1\\
            \quad\frac{2}{5}(-x) + \frac{4}{5}&\quad 1\leq x<2\\
            \quad0&\quad 2\leq x<3\\
            \quad\frac{1}{5}x -\frac{3}{5}&\quad 3\leq x<4\\
            \quad\frac{1}{5}(-x) + 1&\quad 4\leq x< 5\\
            \quad0&\quad 5\leq x<6\\
            \quad\frac{2}{5}x - \frac{12}{5}&\quad 6\leq x<7\\
            \quad\frac{2}{5}(-x) + \frac{16}{5}&\quad 7\leq x<8
           \end{cases}
    \]
    There are five candidates located at $\strat=\{1-(\delta/2), 1+(\delta/2), 4, 7-(\delta/2),7+(\delta/2)\}$. Note that the utility of all of the candidates is $(1/5)$. 
    
    \begin{itemize}
        \item It is easy to see that the first two conditions of the sufficiency are satisfied.
        \item The third condition is only applicable to the third candidate. Note that $f((5+\delta)/2)=f((11-\delta)/2)=0$, therefore, the third condition is also satisfied. 
        \item Candidate $1$ is paired with candidate $2$ and candidate $4$ is paired with candidate $5$. Note that $f(1)\ge\max\{f(0),f((5+\delta)/2)\}$ and $f(7)\ge\max\{f((11-\delta)/2),f(8)\}$. Therefore, the fourth condition is also satisfied. 
        \end{itemize}
        
        We show however that $\strat$ is not an equilibrium. Consider $\strat' = \{1-(\delta/2), 1+(\delta/2), 1+((3\delta)/2), 7-(\delta/2),7+(\delta/2)\}$, i.e., we shift candidate $3$ to $s_3' = 1+(3\delta)/2$.
    \begin{align*}
        \util(s_3',\strat') &=  \frac{1}{5} + \frac{1}{10} - \frac{3\delta}{10} + \frac{7\delta^2}{40}\\
        &> \frac{1}{10}\qquad(\text{for small enough $\delta$})\\
        &= \util(s_3,\strat)
    \end{align*}
    Therefore, $\strat$ is not an exact equilibrium.
    \begin{figure}
    \centering
    \begin{tikzpicture}[line cap=round,line join=round,>=triangle 45,x=1cm,y=1cm,scale=0.6]
    \clip(-1,-3) rectangle (25,5);
    \draw[line width = 1pt] (0,0)--(24,0);
    \draw[line width = 1pt] (0,0)--(3,2);
    \draw[line width = 1pt] (3,2)--(6,0);
    \draw[line width = 1pt] (9,0)--(12,1);
    \draw[line width = 1pt] (12,1)--(15,0);
    \draw[line width = 1pt] (18,0)--(21,2);
    \draw[line width = 1pt] (21,2)--(24,0);

    \draw[line width = 1pt, dashed] (3,0)--(3,2); 
    \draw[line width = 1pt, dashed] (12,0)--(12,1); 
    \draw[line width = 1pt, dashed] (21,0)--(21,2); 

    \draw[line width = 1pt, dashed, color = blue] (2,0)--(2,2.5);
    \draw[line width = 1pt, dashed, color = blue] (4,0)--(4,2.5);
    \draw[line width = 1pt, dashed, color = blue] (12,0)--(12,1.5);
    \draw[line width = 1pt, dashed, color = blue] (20,0)--(20,2.5);
    \draw[line width = 1pt, dashed, color = blue] (22,0)--(22,2.5);
    \begin{scriptsize}

    \draw[color=black] (1.5,-0.7) node {$(1-\frac{\delta}{2})$};
    \draw[color=black] (4.5,-0.7) node {$(1+\frac{\delta}{2})$};

    \draw[color=black] (19.5,-0.7) node {$(7-\frac{\delta}{2})$};
    \draw[color=black] (22.5,-0.7) node {$(7+\frac{\delta}{2})$};
\end{scriptsize}
    \draw[color=black] (0,-0.7) node {$0$};
    \draw[color=black] (3,-0.7) node {$1$};
    \draw[color=black] (6,-0.7) node {$2$};
    \draw[color=black] (9,-0.7) node {$3$};
    \draw[color=black] (12,-0.7) node {$4$};
    \draw[color=black] (15,-0.7) node {$5$};
    \draw[color=black] (18,-0.7) node {$6$};
    \draw[color=black] (21,-0.7) node {$7$};
    \draw[color=black] (24,-0.7) node {$8$};

    \draw[color=black] (2.5,1) node {$2\epsilon$};
    \draw[color=black] (20.5,1) node {$2\epsilon$};
    \draw[color=black] (11.5,.3) node {$\epsilon$};

    \node[isosceles triangle,
    draw,
    rotate=90,
    fill=black,
    minimum size =0.05cm, scale=0.4] (T2)at (2,3){};
    \node[isosceles triangle,
    draw,
    rotate=90,
    fill=black,
    minimum size =0.05cm, scale=0.4] (T2)at (4,3){};
    \node[isosceles triangle,
    draw,
    rotate=90,
    fill=black,
    minimum size =0.05cm, scale=0.4] (T2)at (12,2){};
    \node[isosceles triangle,
    draw,
    rotate=90,
    fill=black,
    minimum size =0.05cm, scale=0.4] (T2)at (20,3){};
    \node[isosceles triangle,
    draw,
    rotate=90,
    fill=black,
    minimum size =0.05cm, scale=0.4] (T2)at (22,3){};
    \end{tikzpicture}
    \caption{Distribution of voters and candidates for the Example \ref{ex:violation}. Candidates are indicated by black triangles and $\epsilon=\frac{1}{5}.$}
    \label{fig:violation}
\end{figure}
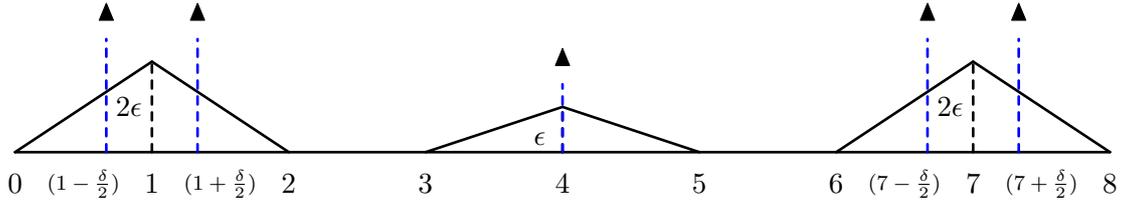
\end{example}


\section{Exact Equilibria for Continuous Candidate and Discrete Voter Locations}
\label{sec:continuous-candidates}

In this section, we consider the remaining case where the set of voter positions is discrete, while candidates choose positions in a continuous set.  Thus $\points_v \subseteq \integers_+$ are the voter positions. Let $n := |\points_v|$. We assume $0$ and $R$ are the smallest and largest positions in $\points_v$ respectively, and the $m$ candidates can choose distinct locations in the continuous set $\points_c = [0,R]$. For each $p \in \points_v$, $F(p)$ is the number of voters at $p$. Here we give an algorithm that returns an exact equilibrium if it exists, in time exponential in $m$ and polynomial in $R$ and $n$. Thus if the number of candidates is a constant, and $R$ is polynomial in $n$, we decide the existence of an equilibrium in polynomial time.

Formally, an input instance in this model is $\mcI = (\points_v,m,F,R)$. The $m$ candidates choose as strategies distinct points in $[0,R]$, and the function $F:\points_v \rightarrow \integers_+$ returns the number of voters at each point in $\points_v$.

To decide the existence of an exact pure Nash equilibrium, the first step is clearly to figure out which are the possible points --- say $\hat{\points}_c$ --- where candidates can choose positions at an equilibrium. If we can obtain a finite set $\hat{\points}_c$ which has the property that if there is an equilibrium, then there exists an equilibrium $\strat$ where $\strat \subseteq \hat{\points}_c$, then we can restrict our search to the points in $\hat{\points}_c$. In particular, we can use the results from~\Cref{sec:discrete-candidates} to compute an equilibrium in time polynomial in $|\points_v|$ and $|\hat{\points}_c|$.

A first conjecture for $\hat{\points}_c$ could be that each $p \in \hat{\points}_c$ is either a voter position in $\points_v$, or is a point in the middle of two voter positions. I.e., $\hat{\points}_c \subseteq \points_v \cup \{\mean(x,y) \, : \, x, y \in \points_v\}$. This turns out to be not true. \Cref{fig:B-pne} gives an instance with a unique equilibrium $\strat$ and a candidate $i$ such that $s_i \not \in \points_v \cup \{\mean(x,y):x, y \in \points_v\}$. Thus, restricting attention to points in $\points_v$ and in the middle of points in $\points_v$ is not sufficient to obtain an equilibrium.

\paragraph{Candidates may not be at mid-points at equilibrium:}
\label{sec:B-pne}

We give an instance with a unique equilibrium $\strat$ and a candidate $i$ such that $s_i \not \in \points_v \cup \{\mean(x,y):x, y \in \points_v\}$. Thus, restricting attention to points in $\points_v$ and in the middle of points in $\points_v$ is not sufficient to obtain an equilibrium.

\begin{figure}[ht]
 \centering
    \begin{tikzpicture}[line cap=round,line join=round,>=triangle 45,x=1cm,y=1cm,scale=0.6]
\draw [line width=1pt] (0,0) -- (20,0);

\draw [line width=1pt,color=blue,dashed] (4,-0.7) -- (4,0.7);
\draw [line width=1pt,color=blue,dashed] (8,-0.7) -- (8,0.7);
\draw [line width=1pt,color=blue, dashed] (14,-0.7) -- (14,0.7);
\begin{footnotesize}
\draw [fill=red] (0,0) circle (6pt);
\draw [fill=red] (2,0) circle (6pt);
\draw [fill=red] (6,0) circle (6pt);
\draw [fill=red] (11,0) circle (6pt);
\draw [fill=red] (17,0) circle (6pt);
\draw [fill=red] (20,0) circle (6pt);

\node[isosceles triangle,
    draw,
    rotate=90,
    fill=black,
    minimum size =0.05cm, scale=0.5] (T1)at (0,1.3){};

\node[isosceles triangle,
    draw,
    rotate=90,
    fill=black,
    minimum size =0.05cm, scale=0.5] (T2)at (2,1.3){};
\node[isosceles triangle,
    draw,
    rotate=90,
    fill=black,
    minimum size =0.05cm, scale=0.5] (T3)at (8,1.3){};
\node[isosceles triangle,
    draw,
    rotate=90,
    fill=black,
    minimum size =0.05cm, scale=0.5] (T4)at (17,1.3){};
\node[isosceles triangle,
    draw,
    rotate=90,
    fill=black,
    minimum size =0.05cm, scale=0.5] (T5)at (20,1.3){};


\end{footnotesize}
\draw[color=blue] (1,0.5) node {$2$};
\draw[color=blue] (3,0.5) node {$2$};
\draw[color=blue] (5,0.5) node {$2$};
\draw[color=blue] (7,0.5) node {$2$};
\draw[color=blue] (9.5,0.5) node {$3$};
\draw[color=blue] (12.5,0.5) node {$3$};
\draw[color=blue] (15.5,0.5) node {$3$};
\draw[color=blue] (18.5,0.5) node {$3$};

\draw[color=black] (0,2) node {$s_1$};
\draw[color=black] (2,2) node {$s_2$};
\draw[color=black] (8,2) node {$s_3$};
\draw[color=black] (17,2) node {$s_4$};
\draw[color=black] (20,2) node {$s_5$};

\foreach \x in {0,2,17,20} {
        \draw (\x,-0.8) node[circle, inner sep=1pt, draw, color=red, fill=white] {5}; 
    }

\foreach \x in {6,11} {
        \draw (\x,-0.8) node[circle, inner sep=1pt, draw, color=red, fill=white] {2}; 
    }

\foreach \x/\y in {0/a,2/b,4/c,6/d,8/e,11/f,14/g,17/h,20/i  } {
\draw[color=black] (\x,-1.8) node {$\y$};
}
    
\end{tikzpicture}
    \caption{An equilibrium with five candidates and 24 voters. The red circles indicate the voters' locations, with the numbers below each circle representing the number of voters at that location. The blue numbers correspond to the lengths of the respective line segments}
    \label{fig:B-pne}
\end{figure}
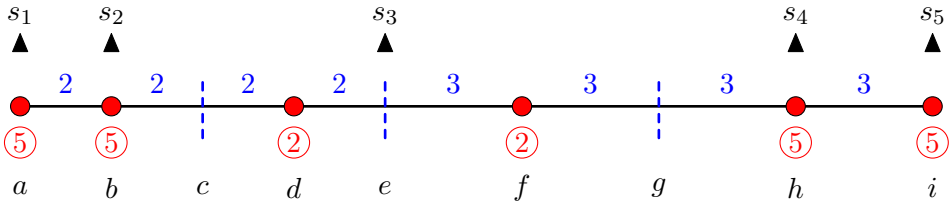

\begin{example}\label{example:1}
Figure~\ref{fig:B-pne} presents equilibrium in an instance with 5 candidates and 24 voters. Note that $s_3\notin B(s_2,\points_v)\cup B(s_4,\points_v)\cup\points$. If we attempt to move candidate 3 to any other location, whether to the left or right, it disrupts the equilibrium. Additionally, note that $s_3$ is not positioned at the midpoint of any pair of voters' locations.

Now we show that this is the only equilibrium possible in the instance of Figure \ref{fig:B-pne}. Note that there are 24 voters and 5 candidates, therefore in any strategy profile one of the candidates gets less than 5 votes. Therefore, in equilibrium, there are always some candidates on $a$, $b$, $h$, and $i$. Now if the fifth candidate chooses $e-\epsilon$ (with $\epsilon>0$) as her strategy, then the candidate on $h$ can increase her votes by deviating to $g+(\epsilon/2)$. Similarly, if the fifth candidate chooses $8+\epsilon$ ($\epsilon>0$) as her strategy, the candidate on $b$ can increase her votes by deviating to $c-(\epsilon/2)$. Therefore the equilibrium depicted in Figure \ref{fig:B-pne} is the only equilibrium possible.
\end{example}

In fact, a priori, it is not clear that $\hat{\points}_c$ is a finite, or even a countable set. It may be possible that at any equilibrium, some candidate is at an irrational position. We show algorithmically that in fact this is not the case, and we can obtain a finite set $\hat{\points}_c$. We define $\bit(i) := \{x= a/b \in \rationals_+ \cap [0,R] : \, b \le 2^i\}$ as the set of rationals in $[0,R]$ for which the denominator is at most $2^i$ (and hence can be represented with $i$ bits).

\begin{restatable}{theorem}{lowcomplexity}
\label{thm:lowcomplexity}
If there is an equilibrium for instance $\mcI=(\points_v,m,F)$, then there is an equilibrium $\strat$ where each strategy $s \in \strat$ is a rational number in $[0,R]$. Further, each $s \in \bit(m)$, and each strategy $s$ thus requires $\lceil \log_2 R \rceil + m$ bits to represent.
\end{restatable}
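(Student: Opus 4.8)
\emph{The plan.} Because every voter lies at an integer position, all the quantities that enter the two equilibrium conditions of~\Cref{prop:equilibrium-characterization} --- the utilities $\util(s_i,\strat)$ and the interval-best utilities $\intu(p,q)$ --- are \emph{locally constant} functions of $\strat$. Indeed, $\util(s_i,\strat)$ changes only when a midpoint $\mu(s_{i-1},s_i)$ or $\mu(s_i,s_{i+1})$ crosses an integer, i.e.\ along a hyperplane $s_i+s_j=2v$; and $\intu(s_j,s_{j+1})$ --- the best a deviating candidate can do in the gap $(s_j,s_{j+1})$, where its Voronoi window has the fixed width $(s_{j+1}-s_j)/2$ --- changes only when that width crosses the distance between two voters (a hyperplane $s_{j+1}-s_j=2(v'-v)$) or when an endpoint $s_j$ or $s_{j+1}$ crosses a voter (a hyperplane $s_i=v$). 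Together with the domain walls $s_i\in\{0,R\}$ and the ordering walls $s_i=s_{i+1}$, these are all the relevant ``events''; each is a hyperplane $\sum\pm s_i=c$ with at most two nonzero coefficients, every nonzero coefficient equal to $\pm1$, and $c$ an integer with $|c|\le 2R$. There are only $\poly(m,n,R)$ of them. On each relatively open face (``cell'') of their arrangement every one of the above quantities is constant, hence so is the predicate ``$\strat$ is an equilibrium''; therefore the equilibrium set $E$ is a union of cells of this arrangement.

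\emph{Reducing to a vertex.} Since $E\neq\emptyset$ by hypothesis, the given equilibrium lies in some cell $C_0$. I would then argue that one can move $\strat$ inside $E$ until it is pinned to a $0$-dimensional cell, i.e.\ a vertex $\strat^\star$ of the arrangement that is still an equilibrium: repeatedly pick a direction inside the current cell and walk until a new event hyperplane becomes tight, cutting the dimension by one each step. The crux is that this can always be done while staying in $E$. Two kinds of walls are ``safe'' to hit: a domain wall, and a width wall $s_{j+1}-s_j=2(v'-v)$ --- at the latter an open window of width exactly $v'-v$ still cannot contain both $v$ and $v'$, so $\intu(s_j,s_{j+1})$ at the wall equals its value on the narrower side, and no candidate's utility changes; hitting a safe wall never breaks the equilibrium. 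Midpoint walls $s_i+s_j=2v$ are delicate, since there a voter is split and some candidate's utility drops by half that voter's mass, possibly creating a profitable deviation; one must steer the perturbations so that a safe wall (or a midpoint wall separately verified to remain good, which forces an equality of two voter masses) is reached first. \textbf{This is the main obstacle} --- showing that some equilibrium cell has a vertex of its closure that is itself an equilibrium --- and is where most of the work lies. (Equivalently: take the lexicographically smallest point of $\overline{E}$; the difficulty is to show it lies in $E$ and not merely in $\overline{E}$, as $E$ need not be closed.)

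\emph{Bit complexity of a vertex.} An equilibrium that is a vertex $\strat^\star$ of the arrangement is the unique solution of a linear system $A\strat^\star=b$ formed by $m$ linearly independent tight event constraints. By the description above, $A$ is an $m\times m$ matrix in which every row has at most two nonzero entries, each equal to $\pm1$, and $b\in\integers^m$ with $\|b\|_\infty\le 2R$. Hence each row of $A$ has $\ell_2$-norm at most $\sqrt{2}$, and by Hadamard's inequality $|\det A|\le (\sqrt{2})^{m}\le 2^m$ (equivalently, $A$ is a signed vertex--edge incidence matrix together with some grounding rows $\pm e_i$, whose determinant is a power of two that is at most $2^m$). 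By Cramer's rule $s^\star_i=\det(A^{(i)})/\det(A)$ with $A^{(i)}\in\integers^{m\times m}$, so in lowest terms the denominator of $s^\star_i$ divides $|\det A|\le 2^m$; that is, $s^\star_i\in\bit(m)$. Finally $s^\star_i\in[0,R]$, so writing $s^\star_i=a_i/b_i$ in lowest terms gives $a_i\le R\,b_i\le R\,2^m$, and therefore $s^\star_i$ is representable using $\lceil\log_2 R\rceil+m$ bits, as claimed. I expect the arrangement setup and the determinant bound to be routine once stated carefully; the genuine difficulty is the middle step, guaranteeing a vertex that is still an equilibrium despite the non-closedness of $E$ at midpoint walls.
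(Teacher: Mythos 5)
Your outer frame --- view the equilibrium set $E$ as a union of cells of a hyperplane arrangement, pin an equilibrium to a vertex, and bound the vertex's bit complexity by Cramer's rule --- is a genuinely different and appealing framing from the paper's, and your endgame is sound: every relevant event hyperplane has at most two nonzero coefficients, each $\pm1$, with integer right-hand side, so a vertex has denominators dividing $|\det A|\le 2^{m}$ (indeed such matrices are incidence matrices of bidirected graphs, so the nonzero determinants are powers of two, and Hadamard even gives $2^{m/2}$). A small enumeration slip: condition (i) of \Cref{prop:equilibrium-characterization} involves $\intu(s_{j-1},s_{j+1})$, the gap between candidates \emph{two} apart, so you also need width walls $s_{j+1}-s_{j-1}=2(v'-v)$; these are of the same form, so the determinant bound survives. (These two-apart width walls are exactly the paper's double reflections $B^2(s_{i-2},\points_v)$, and the adjacent width walls are $B^2(s_{i-1},\points_v)$.)

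The middle step, however, is not a minor technicality to be steered around --- it is the entire theorem, and you have left it open. The set $E$ is not closed: at a midpoint wall $s_{i}+s_{i\pm1}=2v$ a voter's mass is suddenly split, some candidate's utility drops by $F(v)/2$, and condition (ii) or (i) can fail, so the lexicographically smallest point of $\overline{E}$ (or the vertex of $\overline{C_0}$ you walk to) need not be an equilibrium. Worse, your claim that ``hitting a safe wall never breaks the equilibrium'' is false as stated: moving a single candidate $s_j$ rightwards \emph{narrows} the gap $(s_j,s_{j+1})$ but \emph{widens} the gaps $(s_{j-1},s_j)$, $(s_{j-2},s_j)$ and $(s_j,s_{j+2})$, and widening a gap can strictly increase the corresponding $\intu$, creating a profitable deviation for an existing or new candidate even though no wall on your list has been crossed by the moving coordinate alone. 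Showing that this does not happen as long as $s_j$ does not cross $\points_v$, the reflections $B(s_{j-1},\points_v)$, and the double reflections $B^2(s_{j-1},\points_v)\cup B^2(s_{j-2},\points_v)$ is precisely the content of \Cref{lm:1}, \Cref{lm:3}, \Cref{cly:symmetry} and \Cref{lm:BB2} (via the $z_{\max}$ argument in the appendix), and even then one cannot always reach a low-complexity point in the chosen direction: the paper needs the two-phase cascading procedure of Algorithms~\ref{algo:1} and~\ref{algo:2}, in which failure to shift candidate $j$ left forces a shift of candidate $j+1$ or $j+2$ instead, with termination and correctness established in \Cref{thm:correctness}. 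Moreover, \Cref{thm:Bplusm} shows instances where no single candidate can be locally moved to a lower-complexity position at all, so any correct argument must coordinate moves across candidates rather than pin coordinates one wall at a time. Until you supply an argument of this kind, the proposal establishes the bit bound only conditionally on an unproved claim, and so does not yet prove the theorem.
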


\Cref{thm:lowcomplexity} thus gives us a way to discretize the continuous set $[0,R]$. We can define $\hat{\points}_c = \{k/2^m: \, k \in \{0, \ldots, R \times 2^m\}\}$. Then to obtain an equilibrium, we can now use the algorithm from~\Cref{sec:discrete-candidates} to compute an equilibrium if it exists in time $\mathcal{O}(mn^{12})$ where $n=2^{\lceil \log_2 R \rceil + 2 m}$. The following corollary is then immediate.

\begin{corollary}
    For instance $\mcI=(m,R,F)$, we can compute an equilibrium if it exists in time $\mathcal{O}(mn^{12})$ where $n=2^{\lceil \log_2 R \rceil + m}$.
\label{cor:lowcomplexityalgo}
\end{corollary}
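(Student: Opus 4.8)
\emph{Reducing the corollary to Theorem~\ref{thm:lowcomplexity}.} Given Theorem~\ref{thm:lowcomplexity}, the corollary is a bookkeeping step. If the instance admits an equilibrium, the theorem gives one all of whose strategies lie in $\hat{\points}_c := \{k/2^m : 0 \le k \le R\,2^m\}$, the multiples of $2^{-m}$ in $[0,R]$. Rescaling the line by $2^m$ turns the problem into an instance of the discrete--candidate--locations model of Section~\ref{sec:discrete-candidates} with $\points_c = \hat{\points}_c$, so $n := |\hat{\points}_c| = R\,2^m + 1 = O\!\left(2^{\lceil \log_2 R\rceil + m}\right)$, while the (discrete) voter distribution $F$ and hence all of $\util,\utilout,\intr,\intu$ are unchanged and still computable in $\poly(m,n)$ time. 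Running Algorithm~\ref{algo:dp} on this instance decides whether it has an equilibrium --- which by Theorem~\ref{thm:lowcomplexity} holds iff the original instance does --- and returns one if so; by Theorem~\ref{thm:dp} and Proposition~\ref{prop:different-utilities} (giving $|\mcV| = O(n^2)$) the running time is $O\!\left(m\,(n+2)^3\,|\mcV|^2\,n\,|\mcV|^2\right) = O(m\,n^{12})$, as claimed.

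\emph{The substance: proving Theorem~\ref{thm:lowcomplexity}.} Fix an equilibrium $\strat$, written $s_1 \le \cdots \le s_m$ (coincidences allowed in the $\delta\to 0$ sense, i.e.\ ``paired'' candidates). The plan is to record a \emph{combinatorial type} $\tau$ of $\strat$ --- the ordering and the clustering of coincident candidates; for each voter $p$ and each midpoint $\mu(s_j,s_{j+1})$, the sign of $p-\mu(s_j,s_{j+1})$ (so in particular which voters sit exactly on a boundary and are split); and, for each gap $(s_k,s_{k+1})$, which blocks of consecutive voters a deviating candidate could fit inside it --- and to observe that once $\tau$ is fixed, every vote count $\util(s_i,\strat)$ and every best--deviation value $\intu(s_k,s_{k+1})$ and $\intu(s_{i-1},s_{i+1})$ becomes a fixed rational number. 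By Proposition~\ref{prop:equilibrium-characterization} the equilibrium conditions are then comparisons among these fixed numbers, while the positions $(s_1,\dots,s_m)$ are constrained only by inequalities of the very restricted shapes ``$s_a+s_b$ vs.\ $2p$'' (a voter crossing a midpoint between two consecutive candidates), ``$s_b-s_a$ vs.\ $2(q-p)$'' or ``$s_a$ vs.\ $p$'' (a gap becoming wide enough to hold a given voter block), and the occasional balance condition ``$\mu(s_{j-1},s_j) = \mu(s_j,s_{j+1})$'': all with integer coefficients in $\{-2,-1,0,1,2\}$ and right-hand sides of absolute value at most $2R$.

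\emph{From the type to a low--complexity equilibrium.} Hence the set $P_\tau$ of equilibria of type $\tau$ is the solution set of a linear system $A^{=}s = b^{=}$, $A^{<}s < b^{<}$, $A^{\le}s \le b^{\le}$ with all matrix entries in $\{-2,\dots,2\}$, all $|b^{*}| \le 2R$, and $P_\tau \ni \strat$ is nonempty and bounded (it lies in $[0,R]^m$). I would then take a vertex $\strat^{\star}$ of the polyhedron cut out by the equalities together with the non-strict inequalities tight at $\strat$. (The strict ``voter not on a boundary'' inequalities need extra care, since a naive vertex of the closure could put a voter exactly on a Voronoi boundary and thereby cease to be an equilibrium; one can get around this, e.g.\ by noting that any profile with denominators at most $2^m$ satisfies each strict constraint with slack at least $2^{-m}$, so strictness can be replaced by a small explicit margin --- a genuine but manageable hurdle.) By Cramer's rule each coordinate of $\strat^{\star}$ equals $D_j/D$, where $D$ is the determinant of a nonsingular $m\times m$ submatrix $A'$ of the constraint matrix. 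The \textbf{main obstacle} is to prove $|D| \le 2^m$: the generic Hadamard bound only gives $2^{O(m\log m)}$, so one must exploit that each constraint row involves only a bounded number of \emph{consecutive} candidates, so that $A'$ is --- up to signs and a totally--unimodular part coming from the ``$s_b-s_a$'' and ``$s_a$'' rows --- the incidence matrix of a signed graph whose underlying graph is essentially the path $1 - 2 - \cdots - m$; for such matrices $|D|$ is $0$ or a power of two that is at most $2$ raised to the number of independent cycles, hence at most $2^m$. That an exponential bound is unavoidable is exactly what the $2^{m/8}$-denominator instance promised in the introduction is meant to show. Finally, $\strat^{\star} \in [0,R]^m$ with every denominator at most $2^m$ requires $\lceil\log_2 R\rceil + m$ bits per coordinate, and is captured by the grid $\hat{\points}_c$ used for the corollary.
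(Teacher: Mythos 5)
Your first paragraph --- the reduction of the corollary to \Cref{thm:lowcomplexity} --- is correct and is exactly the paper's argument: discretize to $\hat{\points}_c = \{k/2^m : 0 \le k \le R\,2^m\}$, run Algorithm~\ref{algo:dp}, and read the $\mathcal{O}(m n^{12})$ bound off \Cref{thm:dp} and \Cref{prop:different-utilities}. (The paper itself is slightly inconsistent here, writing $n = 2^{\lceil \log_2 R\rceil + 2m}$ in the text and $2^{\lceil \log_2 R\rceil + m}$ in the corollary; your count $|\hat{\points}_c| = R\,2^m+1$ matches the corollary.) If you had stopped there, citing \Cref{thm:lowcomplexity} as given, the proof would be complete and identical to the paper's.

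The rest of your proposal proves \Cref{thm:lowcomplexity} by a genuinely different route. The paper's proof is constructive and local: it starts from an arbitrary equilibrium and, over $m$ rounds, shifts one candidate per round to a nearby point of $\bit(i)$, using \Cref{lm:1}, \Cref{lm:3} and \Cref{cly:symmetry} to show that the only barriers to such a shift are voter positions and the reflection sets $B(\cdot,\points_v)$ and $B^2(\cdot,\points_v)$ of neighbouring candidates; Algorithms~\ref{algo:1} and~\ref{algo:2} then control how the ``budget'' of one extra bit per round propagates through a chain of blocked candidates. Your approach instead fixes a combinatorial type $\tau$ of the equilibrium, observes that all utilities and deviation values are constant on the type cell, and looks for a low-bit-complexity point of the cell via Cramer's rule. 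This is an attractive global alternative, but as written it has two genuine gaps, both of which you flag but neither of which you close. First, the determinant bound $|D| \le 2^m$ is the entire content of the theorem in your formulation, and your justification is a one-sentence appeal to signed-graph incidence matrices; you would need to actually verify that every binding constraint row has the required two-nonzero $\pm 1$ structure on consecutive candidates (the rows encoding which voter blocks fit inside a gap, and the rows for deviations outside $[s_1,s_m]$, need to be checked) and then invoke or prove the $\pm 2^k$ subdeterminant property. Second, and more seriously, the type cell is the intersection of closed and \emph{open} halfspaces, and a vertex of its closure need in general not lie in the cell (e.g.\ it can place a voter exactly on a Voronoi boundary, changing the split votes and hence the utilities, so the equilibrium conditions verified for type $\tau$ no longer apply). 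Your proposed fix --- replace each strict inequality by a margin of $2^{-m}$ --- is circular as stated (the slack bound presupposes the denominator bound you are trying to prove), and even granting it, Cramer's rule on a system whose right-hand sides now have denominator $2^m$ only yields coordinates in $\bit(2m)$, not $\bit(m)$. That weaker bound would still give a finite algorithm, but not the statement of \Cref{thm:lowcomplexity} nor the $n$ in the corollary. The paper's shifting argument avoids both issues precisely because it never needs to solve a linear system: each candidate is moved to an explicit grid point, and the reflection lemmas certify that the open-type conditions are preserved.
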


To prove~\Cref{thm:lowcomplexity}, we start with an arbitrary equilibrium $\strat$ in the given instance, where some candidates may use strategies $s \in \strat$ so that $s$ is either irrational, or $s \not \in \bit(m)$. We will then slightly shift the positions of these candidates, to obtain an equilibrium strategy profile $\strat'$ where each strategy satisfies the conditions in the theorem. Throughout the process of shifting the candidates, we ensure that the strategy profile is an equilibrium, and the order of the candidates remains unchanged.


Our result does not rule out the possibility that there is a polynomial time algorithm for the problem of deciding the existence of an equilibrium. In particular, it may be that for any instance if there exists an equilibrium, there is an equilibrium $\strat$ of lower bit-complexity, where each $s \in \strat$ satisfies $s \in \bit(o(m))$. However, we show the process we follow, of starting with an arbitrary equilibrium and locally shifting each candidate to a nearby point with lower bit-complexity while preserving equilibrium, cannot work to obtain an equilibrium where each strategy $s$ satisfies $s\in \bit(o(m))$. We show that for each $m$, there exists an instance with an equilibrium $\strat$ and a candidate $i$ where $s \in \bit(\mathcal{O}(m))$, and no candidate can be locally shifted to a point with lower bit-complexity without disrupting equilibria. 

\begin{restatable}{theorem}{Bplusm}
\label{thm:Bplusm}
    For every $m \ge 8$, there is an instance with $m$ candidates and equilibrium $\strat = (s_1, \ldots, s_m)$, such that for some candidate $i$, $s_i \in \bit(\Theta(m))$. Further, any strategy profile $\strat' = (s_i', \strat_{-i})$ where an arbitrary candidate $i$ deviates and decreases her bit complexity, and that preserves the order of the candidates, is not an equilibrium.
\end{restatable}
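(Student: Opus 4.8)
Since the claim is purely existential, the plan is to exhibit an explicit family of instances. I would build $\mcI_m$ from $\Theta(m)$ translated copies of a fixed-size \emph{doubling gadget}, laid out left to right inside $[0,R]$ for a suitable $R$, and bracketed by two \emph{wall gadgets} at the ends --- each a cluster of many voters at one integer point with one candidate sitting on top of it, serving only to pin down the boundary (they play the role of the dummy candidates $s_0,s_{m+1}$ at finite distance and make \Cref{prop:equilibrium-characterization}\ref{prop.interval} and the second condition easy to satisfy at the ends). The $k$-th doubling gadget contributes a constant number of candidates, among them a distinguished \emph{chain} candidate whose only position consistent with the neighbouring gadget being in equilibrium is obtained from the chain candidate of the $(k-1)$-st gadget by an affine map with a factor $1/2$ (concretely, its Voronoi boundary with its left neighbour must bisect the segment between the previous chain position and an integer anchor of the gadget). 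Iterating from an integer, after $k=\Theta(m)$ gadgets the chain candidate $i^\star$ sits at a dyadic rational of denominator $2^{\Theta(m)}$. All voter weights inside the gadgets are chosen so that in the exhibited profile $\strat$ every candidate gets exactly the same utility $v^\star$ and, for every pair of neighbouring candidates, the best utility available to a newcomer between them, $\intu(s_k,s_{k+1})$, is also exactly $v^\star$; i.e.\ both conditions of \Cref{prop:equilibrium-characterization} hold with equality.

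\textbf{Equilibrium verification.} With the gadget fixed this is a finite, local check via \Cref{prop:equilibrium-characterization}. For each candidate $j$ one writes $\util(s_j,\strat)$ as the voter mass in the window $[\mean(s_{j-1},s_j),\mean(s_j,s_{j+1})]$, confirms $s_j\in\intr(s_{j-1},s_{j+1})$ (the window mass is maximised there), and confirms $\max_k\intu(s_k,s_{k+1})=v^\star=\min_j\util(s_j,\strat)$. Because each candidate's utility and each newcomer value depend only on a candidate and its two immediate neighbours, and the gadgets are translates of one another, this reduces to a bounded number of gadget-local inequalities (plus a separate check for the two wall gadgets), which I would verify directly; the base gadget can be taken to be a variant of the $5$-candidate instance of~\Cref{fig:B-pne}.

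\textbf{Bit complexity and rigidity.} By construction $s_{i^\star}$ is a reduced dyadic fraction with denominator $2^{\Theta(m)}$, so $s_{i^\star}\in\bit(\Theta(m))\setminus\bit(o(m))$, which is the first half of the statement. For the second half I would show that \emph{no} candidate can be moved, order preservingly, to a point of strictly smaller bit complexity without destroying the equilibrium; in fact that no candidate can be moved at all. Wall and auxiliary candidates sit at integers, so their bit complexity is already $0$. For a chain candidate $j$, although \Cref{prop:equilibrium-characterization}\ref{prop.interval} by itself would only pin $s_j$ to an interval on which $j$'s window mass is flat, the gadget places a voter of positive, and equal, mass exactly at each of $j$'s Voronoi boundaries $\mean(s_{j-1},s_j)$ and $\mean(s_j,s_{j+1})$ (equality of the two masses makes $s_j$ itself a best response). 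Hence moving $s_j$ by any $\varepsilon\neq0$ transfers half of one of these masses between $j$ and a neighbour, \emph{strictly} decreasing that neighbour's utility (the left neighbour if $\varepsilon<0$, the right neighbour if $\varepsilon>0$); since that neighbour already has utility exactly $v^\star=\max_k\intu(s_k,s_{k+1})$, the second condition of \Cref{prop:equilibrium-characterization} now fails, so the neighbour has a profitable deviation into the adjacent gap and $\strat'$ is not an equilibrium. Thus $s_j$ is the unique position consistent with equilibrium, and in particular no complexity-reducing move of any candidate is possible, which complements the upper bound of \Cref{thm:lowcomplexity}.

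\textbf{Main obstacle.} The crux is designing one doubling gadget that reconciles all the constraints at once: the affine ``halving'' relation between successive chain positions; the exact equalisation of every candidate's utility and every gap's newcomer value (needed for the rigidity argument above); the integrality of voter positions together with a $\poly(m)$ or at worst $2^{O(m)}$ bound on $R$; and the requirement that the chain of halvings never forces two candidates to coincide, push a candidate out of $[0,R]$, or reorder candidates. There is a genuine tension between denominator doubling and the boundary-voter rigidity device --- a boundary sitting exactly on an \emph{integer} voter would give a relation $s_j=2p-s_{j-1}$ that does \emph{not} grow denominators --- so the doubling must instead be engineered from the internal geometry of the gadget (e.g.\ having a Voronoi boundary bisect the gap between the previous chain point and an integer anchor), and checking that this can be done while keeping all utilities and newcomer values balanced is where the real work, and presumably the need for $m\geq 8$, lies.
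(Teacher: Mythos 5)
Your overall strategy---an explicit family built from repeated gadgets whose chain of candidate positions doubles denominators, verified locally via \Cref{prop:equilibrium-characterization}---matches the spirit of the paper's proof, but the construction you propose cannot work as described, for two reasons. First, your rigidity device (a voter of positive mass placed \emph{exactly} at each Voronoi boundary $\mean(s_{j-1},s_j)$ and $\mean(s_j,s_{j+1})$ of a chain candidate) is incompatible with $\strat$ being an equilibrium in the first place: in the discrete-voter model no equilibrium can have a voter shared between two candidates, which is precisely Claim~\ref{clm:nosharing} in the paper. Concretely, take the leftmost shared voter, say of mass $w$ at $\mean(s_{j-1},s_j)$; candidate $j-1$ can shift right by a sufficiently small $\delta>0$, capture all of $w$ instead of $w/2$, and lose nothing on her left, since by leftmost-ness no atom sits at $\mean(s_{j-2},s_{j-1})$ and the voter positions are discrete. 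So the profile you describe fails to be an equilibrium before the perturbation argument ever starts. Second, the doubling gadget is never actually exhibited: you correctly observe that a boundary sitting on an integer voter forces $s_j=2p-s_{j-1}$, which preserves rather than doubles denominators, and you defer resolving this tension to ``the real work''---but that resolution is the entire content of the theorem.

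For comparison, the paper's instance avoids both problems. It uses unit-mass voters at the integers $1,\dots,4k+2$ and $m=3k+2$ candidates arranged in triples, with $s_{3l-2},s_{3l},s_{3l+2}$ at offset $2^{-(k-l+1)}$ to the left of the even integers $4l-2,4l,4l+2$; no voter ever lies on a Voronoi boundary, and the utilities are deliberately \emph{not} all equal (the middle candidate of each triple gets two votes, the others one). Rigidity comes not from shared voters but from tight reflection constraints: for each candidate, some neighbour's best order-preserving deviation into an adjacent gap (grabbing a second voter) is exactly non-improving at $\strat$, and shifting the candidate by $\epsilon$ moves the relevant reflection point by $\epsilon/2$, turning that deviation into a strict improvement; where no such tight deviation exists (e.g.\ in $(s_1,s_2)$), one instead checks that the interval simply contains no point of lower bit complexity. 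Note also that the theorem only rules out order-preserving deviations that \emph{decrease bit complexity}; your stronger claim that no candidate can be moved at all is both unnecessary and false for the paper's instance.
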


 The proof of \Cref{thm:Bplusm} can be found in the appendix. Here, we describe a simpler instance with an equilibrium $\strat$ where a candidate $i$ has $s_i \in \bit(2) \setminus \bit(1)$ (i.e., $s_i = k/4$ for some odd $k$), and no candidate can be unilaterally locally shifted (i.e., maintaining the order of the candidates) to a lower bit complexity position while preserving equilibrium.

    \begin{figure}[H]
    \centering
    \begin{tikzpicture}[line cap=round,line join=round,>=triangle 45,x=1cm,y=1cm,scale=0.6]
    \clip(-2,-3) rectangle (22,4);
\draw [line width=1pt] (21,-0.3) -- (21,0.3);
\draw [line width=1pt] (-1,-0.3) -- (-1,0.3);
\draw [line width=1pt] (-1,0) -- (21,0);

\draw [line width=1pt,color=blue,dashed] (2.5,-0.7) -- (2.5,0.7);
\draw [line width=1pt,color=blue, dashed] (6.5,-0.7) -- (6.5,0.7);
\draw [line width=1pt,color=green, dashed] (10,-0.7) -- (10,0.7);
\draw [line width=1pt,color=blue, dashed] (10.5,-0.7) -- (10.5,0.7);
\draw [line width=1pt,color=green, dashed] (14,-0.7) -- (14,0.7);
\draw [line width=1pt,color=green, dashed] (18,-0.7) -- (18,0.7);
\begin{footnotesize}
\draw [fill=red] (1,0) circle (6pt);
\draw [fill=red] (3,0) circle (6pt);
\draw [fill=red] (5,0) circle (6pt);
\draw [fill=red] (7,0) circle (6pt);
\draw [fill=red] (9,0) circle (6pt);
\draw [fill=red] (11,0) circle (6pt);
\draw [fill=red] (13,0) circle (6pt);
\draw [fill=red] (15,0) circle (6pt);
\draw [fill=red] (17,0) circle (6pt);
\draw [fill=red] (19,0) circle (6pt);

\node[isosceles triangle,
    draw,
    rotate=90,
    fill=black,
    minimum size =0.05cm, scale=0.5] (T1)at (2.5,1){};

\node[isosceles triangle,
    draw,
    rotate=90,
    fill=black,
    minimum size =0.05cm, scale=0.5] (T2)at (3,1){};
\node[isosceles triangle,
    draw,
    rotate=90,
    fill=black,
    minimum size =0.05cm, scale=0.5] (T3)at (6.5,1){};
\node[isosceles triangle,
    draw,
    rotate=90,
    fill=black,
    minimum size =0.05cm, scale=0.5] (T4)at (10,1){};
\node[isosceles triangle,
    draw,
    rotate=90,
    fill=black,
    minimum size =0.05cm, scale=0.5] (T5)at (10.5,1){};
\node[isosceles triangle,
    draw,
    rotate=90,
    fill=black,
    minimum size =0.05cm, scale=0.5] (T6)at (14,1){};
\node[isosceles triangle,
    draw,
    rotate=90,
    fill=black,
    minimum size =0.05cm, scale=0.5] (T7)at (17,1){};
\node[isosceles triangle,
    draw,
    rotate=90,
    fill=black,
    minimum size =0.05cm, scale=0.5] (T8)at (18,1){};

\draw[color=black] (-1,-0.8) node {$0$};
\draw[color=black] (21,-0.8) node {$11$};

\end{footnotesize}
\draw[color=red] (1,-0.8) node {$1$};
\draw[color=red] (3,-0.8) node {$2$};
\draw[color=red] (5,-0.8) node {$3$};
\draw[color=red] (7,-0.8) node {$4$};
\draw[color=red] (9,-0.8) node {$5$};
\draw[color=red] (11,-0.8) node {$6$};
\draw[color=red] (13,-0.8) node {$7$};
\draw[color=red] (15,-0.8) node {$8$};
\draw[color=red] (17,-0.8) node {$9$};
\draw[color=red] (19,-0.8) node {${10}$};

\draw[color=black] (2.4,1.7) node {$s_1$};
\draw[color=black] (3.2,1.7) node {$s_2$};
\draw[color=black] (6.5,1.7) node {$s_3$};
\draw[color=black] (9.9,1.7) node {$s_4$};
\draw[color=black] (10.7,1.7) node {$s_5$};
\draw[color=black] (14,1.7) node {$s_6$};
\draw[color=black] (17,1.7) node {$s_7$};
\draw[color=black] (18,1.7) node {$s_8$};
\end{tikzpicture}
    \caption{An instance with eight candidates and ten equidistant voters with consecutive distance $1$, showing an equilibrium for the candidates.}
    \label{fig:high-comp-1}
\end{figure}
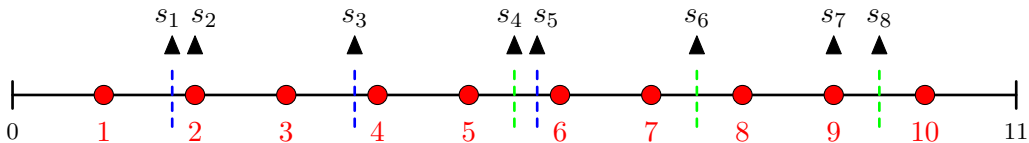

\begin{theorem}
    \Cref{fig:high-comp-1} gives an instance and an equilibrium $\strat$ where $s_1, s_3, s_5 \in \bit(3) \setminus \bit(2)$, and shifting a single candidate to a lower bit complexity position without changing the ordering of the candidates disrupts the equilibrium.
\end{theorem}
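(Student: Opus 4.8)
The plan is to verify the three assertions by explicit computation on the instance of~\Cref{fig:high-comp-1}, whose voters sit at the ten equally-spaced positions shown and whose candidate profile $\strat=(s_1,\dots,s_8)$ is the one drawn (together with the usual dummies $s_0=-\infty$, $s_9=+\infty$). The claim $s_1,s_3,s_5\in\bit(3)\setminus\bit(2)$ is immediate once the coordinates of these three candidates are read off, since each is a fraction whose reduced denominator requires the stated number of bits; the remaining $s_j$ have smaller bit complexity.

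I would first certify that $\strat$ is an equilibrium using~\Cref{prop:equilibrium-characterization}: compute the Voronoi midpoints $\mean(s_{j-1},s_j)$, hence the block of voters served by each $s_j$ and its utility $\util(s_j,\strat)$, and check condition~\ref{prop.interval}, i.e.\ $\util(s_j,\strat)=\intu(s_{j-1},s_{j+1})$ for every $j$ --- for the two ``isolated'' candidates (those with no close neighbour) this rests on the purely geometric fact that the two midpoint constraints needed to capture three consecutive voters cannot hold simultaneously; then compute $\intu(s_k,s_{k+1})$ for each gap and check the second condition $\min_j\util(s_j,\strat)\ge\max_k\intu(s_k,s_{k+1})$. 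Each step is a one-line arithmetic check.

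For the ``no local shift'' assertion I would exploit that $\strat$ is a string of three \emph{tight pairs} of candidates separated by the two isolated candidates, with each non-integer coordinate sitting at the exact threshold beyond which an adjacent candidate could squeeze in between two voters. The two integer-valued candidates admit no lower-complexity target, so nothing is needed for them. For each of the remaining six candidates $i$, the order-preserving window $(s_{i-1},s_{i+1})\cap[0,R]$ contains only finitely many points of strictly smaller bit complexity, and for each such target $s_i'$ one shows that $\strat'=(s_i',\strat_{-i})$ violates~\Cref{prop:equilibrium-characterization}: either $s_i'$ is displaced so that $\util(s_i',\strat')<\intu(s_{i-1},s_{i+1})$, so condition~\ref{prop.interval} fails and candidate $i$ itself profitably re-deviates; or $s_i'$ is ``looser'' than $s_i$ and opens a subinterval in which an adjacent candidate (or a fresh entrant) captures a second voter, violating condition~\ref{prop.interval} at that neighbour or the second condition --- concretely, relaxing one member of a tight pair lets its partner capture two voters, and relaxing an isolated candidate toward a neighbouring pair lets that pair's near member capture two voters (while relaxing it the other way simply drops its own utility below $\intu(s_{i-1},s_{i+1})$). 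In each subcase the witnessing deviation is exhibited by a short midpoint computation.

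The main obstacle is this case analysis, which is delicate because the equilibrium is \emph{critically tight}: every gap value $\intu(s_k,s_{k+1})$ equals the global minimum candidate utility, and each quarter-/half-integer coordinate is the \emph{unique} position blocking a two-voter capture by a neighbour, so a displacement of even a fraction flips conclusions. The real work is to determine, for every movable candidate and every coarser target, which neighbour poaches which pair of voters and to verify the associated threshold inequalities --- while keeping track of the half-open Voronoi cells (a voter lying exactly on a midpoint is not split) and of the endpoint constraint $s_i'\in[0,R]$.
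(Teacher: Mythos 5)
Your proposal takes essentially the same route as the paper's proof: read off the coordinates for the bit-complexity claim, verify the equilibrium by direct computation, and then run a finite case analysis over each movable candidate and each order-preserving lower-bit-complexity target, exhibiting in each case a neighbouring candidate whose explicit midpoint deviation captures a second voter. The only inaccuracies are in your heuristic asides, not the method — for the isolated candidates $s_3,s_6$ \emph{both} directions of displacement are blocked by a neighbour poaching two voters (not by the candidate's own utility dropping), and the within-pair gaps have $\intu=0$ rather than the minimum utility — neither of which affects the case analysis you commit to carrying out.
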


\begin{proof}
    In Figure \ref{fig:high-comp-1}, there are ten equidistant voters (shown as red circles) with a consecutive distance of $1$ at positions $1,2,\dots,10$ and eight candidates (shown as black triangles) at positions $s_1,s_2,...,s_8$. The positions $s_1,s_3,s_5$ are each at $\frac{1}{4}$ distance left of $2,4,6$ respectively, the positions $s_4,s_6,s_8$ are each at $\frac{1}{2}$ distance left of $6,8,10$ respectively, and the positions $s_2,s_7$ are at 2 and 9 respectively (see Table~\ref{tabone}). Thus, $s_4,s_6,s_8 \in \bit(1) \setminus \bit(0)$ (green dashed line), and $s_1,s_3,s_5 \in \bit(2) \setminus \bit(1)$ (blue dashed line). Candidates at positions $s_3,s_6$ are getting two votes while the rest of the candidates are getting one vote each. It is easy to check that the configuration of Figure~\ref{fig:high-comp-1} is an equilibrium. Now we show that moving any one of the candidates to a different location breaks the equilibrium.
    
    \begin{enumerate}
        \item  If we move candidate $1$ to the left by $\epsilon>0$, then candidate $2$ can increase her utility by deviating to position $2+\frac{1}{4}+\frac{\epsilon}{2}$ (both voters at $2$ and $3$ vote for it). There is no position in the interval $(s_1,s_2)$ of smaller bit complexity. Moving candidate 1 to any location in the interval $(s_2,1]$ changes the ordering of the candidates' locations. A similar situation holds for candidate 8.
        \item If we move the candidate $2$ to the left by $\epsilon>0$, then candidate $4$ can increase her utility by deviating to position $4+\frac{1}{4}+\frac{\epsilon}{2}$ (both voters at $4$ and $5$ will then vote for her). If we move candidate 3 to the right by $\epsilon>0$, then candidate $2$ can increase her utility by deviating to position $2+\frac{1}{4}-\frac{\epsilon}{2}$ (both voters at $2$ and $3$ will then vote for her). Moving candidate 3 to any location in the interval $[0,s_2)$ or $(s_4,1]$ changes the ordering of the candidates' locations. A similar situation holds for candidate 6.
        \item If we move candidate $4$ to the left by $\epsilon>0$, then candidate $5$ can increase her utility by deviating to position $6+\frac{1}{2}+\frac{\epsilon}{2}$ (both voters at $6$ and $7$ will then vote for her). There is no position in the interval $(s_4,s_5)$ of lower bit complexity. Moving candidate 4 to any location in the interval $[0,s_3)$ or $(s_5,1]$ changes the ordering of the candidates' locations. A similar situation holds for candidate 5.
        \item If we move candidate 2 to any location in the interval $(3,s_3)$ in $\bit(1)$, candidate 1 can increase her utility by moving towards 3 and getting all the votes from 3. Suppose we move candidate 2 to a location in the interval $(s_2,3]$ in $\bit(1)$. Then the voters at 3 will vote for candidate 2, and then candidate 3 can increase her utility by moving towards position 5 and getting the vote from the voter at 5. Similarly moving candidate 7 to a different position in $\bit(1)$, disrupts the equilibrium.
    \end{enumerate}

        \end{proof}

\begin{table}[!ht]
    \centering
     \begin{tabularx}{0.7\textwidth} { 
  | >{\centering\arraybackslash}X 
  | >{\centering\arraybackslash}X 
  | >{\centering\arraybackslash}X
  | >{\centering\arraybackslash}X
  | >{\centering\arraybackslash}X|}
 \hline
 Candidate & Position & Receiving votes from & Utility\\
 \hline
 $1$ & 7/4 &  1 & 1 \\
 $2$ & 2 & 2 & 1 \\
 $3$ & 15/4 & 3, 4  & 2\\
$4$ & 11/2 &  5 & 1 \\
 $5$ & 23/4 &  6 & 1\\
 $6$ & 15/2 & 7 8 & 2 \\
 $7$ & 9 &  9 & 1 \\
 $8$ & 19/2 &  10 & 1\\ 
\hline
\end{tabularx}
 \caption{A table showing the candidate locations, the corresponding voters voting for each of them, and utilities in~\Cref{fig:high-comp-1}.    \label{tabone}}
  \end{table}

To prove~\Cref{thm:lowcomplexity}, recall that our process is to start with an arbitrary equilibrium $\strat$, and shift the positions of these candidates, to obtain an equilibrium strategy profile $\strat'$ where each strategy satisfies the conditions in the theorem. The main technical question we then need to address is, \emph{what are the barriers to shifting candidates}? That is, what are the boundaries \emph{beyond} which if we shift a candidate while maintaining the order of candidates, the strategy profile is no longer an equilibrium? In order to address this, we present two motivating examples.

\begin{example}
\label{example:threecandidates}
Consider the example shown in~\Cref{fig:threecandidates}, where $m = 3$, $\points_v = \{0,2,10\}$, and $F(p) = 10$ for each $p \in \points_v$. The example thus has 3 candidates and 30 voters, equally distributed at the points $0$, $2$, and $10$. It is easy to check that the strategy profile shown in the figure, where each candidate positions herself at a point in $\points_v$, is an equilibrium.

Suppose we start shifting candidate 2 to the right. If she crosses the dashed line marked as point $c$, her distance from the voters at $b$ is larger than $2$, while candidate 1 is at distance $2$ from these voters. Hence if she crosses the dashed line, her utility goes down to $0$, and this is no longer an equilibrium. The point $c$ for the dashed line is the reflection of $s_1$ in $b$, i.e., $c = b + |b - s_1|$.
\end{example}

The example suggests that one of the barriers we must consider when shifting candidates to preserve an equilibrium is the \emph{reflections} of candidates in voter positions.

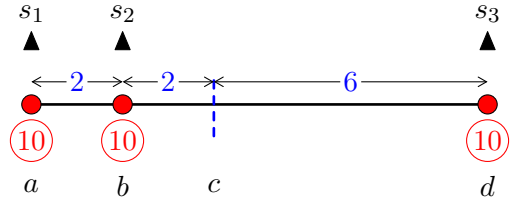
\begin{figure}[ht]
 \centering
    \begin{tikzpicture}[line cap=round,line join=round,>=angle 60,x=1cm,y=1cm,scale=0.6]
\draw [line width=1pt] (0,0) -- (10,0);

\draw [line width=1pt,color=blue,dashed] (4,-0.7) -- (4,0.7);

\begin{footnotesize}
\draw [fill=red] (0,0) circle (6pt);
\draw [fill=red] (2,0) circle (6pt);
\draw [fill=red] (10,0) circle (6pt);

\node[isosceles triangle,
    draw,
    rotate=90,
    fill=black,
    minimum size =0.05cm, scale=0.5] (T1)at (0,1.3){};

\node[isosceles triangle,
    draw,
    rotate=90,
    fill=black,
    minimum size =0.05cm, scale=0.5] (T2)at (2,1.3){};
\node[isosceles triangle,
    draw,
    rotate=90,
    fill=black,
    minimum size =0.05cm, scale=0.5] (T3)at (10,1.3){};
\end{footnotesize}

\draw[<->] (0,0.5) -- (2,0.5) 
            node[midway, rectangle, draw=white, fill=white, text=blue, inner sep=1pt] {2};
\draw[<->] (2,0.5) -- (4,0.5) 
            node[midway, rectangle, draw=white, fill=white, text=blue, inner sep=1pt] {2};
\draw[<->] (4,0.5) -- (10,0.5) 
            node[midway, rectangle, draw=white, fill=white, text=blue, inner sep=1pt] {6};

\draw[color=black] (0,2) node {$s_1$};
\draw[color=black] (2,2) node {$s_2$};
\draw[color=black] (10,2) node {$s_3$};

\foreach \x in {0,2,10} {
        \draw (\x,-0.8) node[circle, inner sep=1pt, draw, color=red, fill=white] {10}; 
    }

\foreach \x/\y in {0/a,2/b,4/c,10/d} {
\draw[color=black] (\x,-1.8) node {$\y$};
}
    
\end{tikzpicture}
    \caption{An instance with 3 candidates and 30 voters. The red circles indicate the voters' locations, with the numbers below each circle representing the number of voters at that location. The blue numbers correspond to the lengths of the respective line segments.}
    \label{fig:threecandidates}
\end{figure}

Formally, define function $f_r:\mathbb{R}^2\rightarrow \mathbb{R}$ as $f_r(x,p)=2p-x$. Hence $f_r(x,p)$ returns the reflection of $x$ in $p$. We then define the following set for a location $x\in[0,R]$:

\begin{align*}
B(x, \points_v) &= \{f_r(x,p)~|~p\in \points_v\}
\end{align*}

Therefore given a candidate's location $s_i$, $B(s_i,\points_v)$ is the set of reflections of $s_i$ in all voter positions. Suppose $f_r(s_{i-1},w)=x$, for some $w \in \points_v$. Then, if $s_i < x$, all voters at $w$ will vote for $s_i$; if $s_i > x$, all voters at $w$ will vote for $s_{i-1}$; and if $x=s_i$, then all voters at $v$ will split their votes equally among $s_{i-1}$ and $s_i$. 

The next lemma shows that given an equilibrium $\strat$, if we shift a candidate $i$ to the right (i.e., increase $s_i$) without crossing any points in $\points_v$ and $B(s_{i-1}, \points_v)$, then the utility of each candidate remains unchanged. Note that this does not mean that the resulting strategy profile remains an equilibrium.

\begin{restatable}{lemma}{lmone}
\label{lm:1}
    Suppose $m>2$ and $\strat$ is a PNE. Fix candidate $i$, and let $A= B(s_{i-1},\points_v)\cup \points_v$ be the set of voter positions and the reflections of $s_{i-1}$.  If $s_i \not \in A$, and $x$ be such that $s_i<x<s_{i+1}$ and $A\cap[s_i,x]=\emptyset$, then $\util(s_i',\strat')=\util(s_i,\strat)$ where $\strat'=\{s_1,\dots,s_{i-1},x,s_{i+1},\dots,s_m\}$.
\end{restatable}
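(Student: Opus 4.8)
The plan is to compute the utility of every candidate in $\strat$ and $\strat'$ explicitly in terms of voter masses, and argue that each term is unchanged when $s_i$ moves to $x$. The key observation is that moving candidate $i$ within the interval $(s_i, x)$ can only affect the votes of three candidates: candidate $i$ herself, her left neighbor $i-1$, and her right neighbor $i+1$; all other candidates' utilities are determined by positions that do not change, so they are trivially equal in $\strat$ and $\strat'$. So I would focus on these three candidates.

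First I would handle candidate $i+1$. Her left boundary is $\mu(s_i, s_{i+1})$ and her right boundary is $\mu(s_{i+1}, s_{i+2})$; the right boundary is unchanged, and the left boundary moves from $(s_i + s_{i+1})/2$ to $(x + s_{i+1})/2$. The set of voters strictly between these two values has mass zero: any voter position $w$ with $(s_i+s_{i+1})/2 < w < (x+s_{i+1})/2$ would satisfy $s_i < 2w - s_{i+1} < x$, but $2w-s_{i+1} = f_r(s_{i+1}, w)$ is a reflection — wait, I need reflections of $s_{i-1}$, not $s_{i+1}$. Let me instead argue directly: since $\points_v \cap [s_i, x] = \emptyset$, there is no voter in $[s_i,x]$, hence (since $s_i < s_{i+1}$, and as $x < s_{i+1}$) the midpoint $\mu(s_i,s_{i+1})$ lies strictly left of any voter at or to the right of... this needs care. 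The cleaner route: a voter $w$ changes its vote between $\strat$ and $\strat'$ only if the nearest candidate changes. For candidate $i+1$ vs. candidate $i$, the voter $w$ near the boundary $\mu(s_i,s_{i+1})$ switches only if $w$ lies between the old and new midpoints; but there are no voters in $(s_i, x)$ and no voters can lie between $\mu(s_i,s_{i+1})$ and $\mu(x,s_{i+1})$ since both these midpoints lie in $[s_i, x]$... no: $\mu(s_i,s_{i+1})$ need not lie in $[s_i,x]$. I would therefore argue via the reflection set more carefully: a voter at $w$ switches its allegiance between candidates $i$ and $i+1$ precisely when $s_i$ crosses $f_r(s_{i+1}, w)$, and between $i-1$ and $i$ precisely when $s_i$ crosses $f_r(s_{i-1},w)$. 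Since $x < s_{i+1}$ and there are no voters in $[s_i,x]$, no reflection $f_r(s_{i+1},w) = 2w - s_{i+1}$ can lie in $(s_i,x)$: if $s_i < 2w - s_{i+1} < x < s_{i+1} < s_{i+2}$... hmm, this forces $w > s_{i+1}$ roughly, which is fine — such a reflection can lie in the range. So candidate $i+1$'s votes could in principle change.

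This is the main obstacle, and I think the intended resolution uses that $\strat$ is an equilibrium, not merely a strategy profile. Here is the plan: since $\strat$ is a PNE and $m > 2$, candidate $i$ is a best response in $(s_{i-1}, s_{i+1})$, so $s_i \in \intr(s_{i-1}, s_{i+1})$; the best-response utility function $z \mapsto \utilout(z, \{s_{i-1}, s_{i+1}\})$ is piecewise linear in $z$, and between consecutive "breakpoints" (which are exactly the points of $\points_v$ together with the reflections $B(s_{i-1}, \points_v)$ and $B(s_{i+1}, \points_v)$, where the slope changes) it is affine with slope determined by the voter densities on either side. Since $\points_v$ has only atoms (discrete voters), this function is locally constant away from $\points_v \cup B(s_{i-1},\points_v) \cup B(s_{i+1}, \points_v)$ — actually with discrete voters it is a step function. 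I would show that on $(s_i, x)$ the function $\utilout(\cdot, \{s_{i-1},s_{i+1}\})$ is constant, hence $\util(x, \strat') = \util(s_i, \strat)$; and moreover the total mass is conserved and candidates $i-1, i+1$ absorb exactly the complementary change, which is zero. Concretely: candidate $i$'s utility equals $F(\mu(s_{i-1},s_i), \mu(s_i,s_{i+1}))$ minus half-atoms at the endpoints; I would show each of $\mu(s_{i-1},x)$, $\mu(x,s_{i+1})$ lies in an interval free of voters relative to the old midpoints — more precisely, that the open interval between $\mu(s_{i-1},s_i)$ and $\mu(s_{i-1},x)$ contains no voter (because such a voter $w$ would give $f_r(s_{i-1},w) \in (s_i,x)$, contradicting $A \cap [s_i,x] = \emptyset$) and likewise the interval between $\mu(s_i,s_{i+1})$ and $\mu(x,s_{i+1})$ contains no voter with the analogous reflection argument — but wait, this last one involves reflections of $s_{i+1}$, which are not in $A$. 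The resolution must be that since there are no voters at all in $[s_i,x]$, and candidate $i$'s right boundary $\mu(s_i, s_{i+1})$ satisfies $\mu(s_i,s_{i+1}) \ge s_i$ and $\mu(x, s_{i+1}) \le s_{i+1}$... I would pin this down by checking that no voter lies in $(\mu(s_i,s_{i+1}), \mu(x,s_{i+1}))$: such a voter $w$ satisfies $s_i < s_i + s_{i+1} - ... $; equivalently $2w < x + s_{i+1}$ and $2w > s_i + s_{i+1}$, i.e. $s_i < 2w - s_{i+1} < x$, i.e. $f_r(s_{i+1}, w) \in (s_i, x)$. Since $\strat$ is an equilibrium and $s_i$ is a best response, $s_i \notin B(s_{i+1}, \points_v)$ cannot be immediately to the left of a jump... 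I expect the actual argument closes this by noting $x < s_{i+1}$, so $2w - s_{i+1} < x < s_{i+1}$ forces $w < s_{i+1}$, combined with $2w - s_{i+1} > s_i$ forcing $w > (s_i + s_{i+1})/2 > s_i$, so $w \in (s_i, s_{i+1})$, and since $w \notin [s_i, x]$ we get $w \in (x, s_{i+1})$ — a voter strictly between $x$ and $s_{i+1}$, which is possible, so candidate $i+1$'s left-votes do change. The only way out: this change is exactly compensated because that voter $w$, being closer to $x$ than to $s_{i+1}$ after the move (as $f_r(s_{i+1},w) < x$ means... actually $f_r(s_{i+1},w) < x$ iff $w$ is closer to $s_{i+1}$... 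I'd need to recheck signs). I will therefore write the proof by the clean bookkeeping: define, for the triple $\{s_{i-1}, s_i, s_{i+1}\}$, the total votes of these three as the mass in $[\mu(s_{i-2}, s_{i-1}), \mu(s_{i+1}, s_{i+2})]$ with half-atom corrections, which is manifestly unchanged; then show candidate $i$'s share is unchanged by the reflection argument above applied to $f_r(s_{i-1}, \cdot)$ for the left boundary and the fact that $[s_i, x]$ contains no voters for the right boundary (using $x < s_{i+1}$ and a short case check on whether the moved right-midpoint passes any voter — which, as I'll verify, it does not, since any such voter would have to sit in $(x, s_{i+1})$ AND the midpoint $\mu(s_i, s_{i+1})$ would have to lie to its left, forcing $w$ into $(s_i, x)$ after all, a contradiction); and conclude candidates $i-1$ and $i+1$ split the rest, unchanged, by the same reflection argument for $f_r(s_{i-1}, \cdot)$ and the no-voters-in-$[s_i,x]$ fact. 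The bulk of the write-up is this endpoint-by-endpoint verification that no atom is crossed, which is routine once the reflection identities $f_r(s_{i-1}, w) \in (s_i, x) \iff \mu(s_{i-1}, \cdot)$-boundary crosses $w$ are set up, and I will lean on the hypothesis $A \cap [s_i, x] = \emptyset$ together with $s_i < x < s_{i+1}$ throughout.
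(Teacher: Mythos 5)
Your decomposition is right as far as it goes: only candidates $i-1$, $i$, $i+1$ can be affected, and your reflection argument for the left boundary (a voter $w$ switches between $i-1$ and $i$ iff $f_r(s_{i-1},w)\in[s_i,x]$, which is excluded by $A\cap[s_i,x]=\emptyset$) is exactly the paper's Case I. The gap is the right boundary. You correctly observe mid-proposal that a voter $w$ is crossed by the moving midpoint iff $f_r(s_{i+1},w)\in(s_i,x)$, which places $w\in\bigl(\mu(s_i,s_{i+1}),\mu(x,s_{i+1})\bigr)\subseteq(x,s_{i+1})$ --- a region not controlled by $A$ and possibly containing voters. But the ``short case check'' you then commit to is false: such a $w$ is \emph{not} forced into $(s_i,x)$ (take $s_i=0$, $x=1$, $s_{i+1}=10$, $w=5.2$; then $f_r(s_{i+1},w)=0.4\in(s_i,x)$ while $w\notin(s_i,x)$), so your claimed contradiction evaporates and the direct bookkeeping does not close.

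The paper closes this with an indirect argument that you gesture at (``the intended resolution uses that $\strat$ is an equilibrium'') but never execute. Since $x>s_i$, the boundary $\mu(\cdot,s_{i+1})$ can only move right, so candidate $i+1$'s utility can only weakly decrease. Every candidate other than $i$ and $i+1$ has unchanged utility (candidate $i-1$ by your reflection argument, the rest trivially), and utilities sum to the fixed total voter mass; hence any strict loss for $i+1$ is a strict gain for $i$. But $x$ is a feasible unilateral deviation for candidate $i$ in $\strat$, so a strict gain contradicts $\strat$ being a PNE. Therefore $i+1$'s utility is unchanged, and by conservation so is $i$'s. The equilibrium hypothesis enters through conservation of total votes plus the impossibility of a profitable deviation --- not through any voter-by-voter accounting of the right boundary, which, as your own intermediate analysis shows, cannot be controlled by $A$ alone.
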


Are the voter positions and these reflections the only boundaries we need to keep in mind while shifting candidates, if we want to maintain an equilibrium? The next example shows that this is not the case, and we need to look at reflections of reflections as well.

\begin{example}\label{example:01}
Figure \ref{fig:shifting1} represents a PNE with 5 candidates and 24 voters. The red circles indicate the voters' locations, with the numbers below each circle representing the number of voters at that location. The blue numbers correspond to the lengths of the respective line segments. 

The figure shows an equilibrium strategy profile $\strat = \{0,2,9,16,18\}$ with the candidates shown as black triangles. We verify in the appendix that this is indeed an equilibrium, with their utilities being 5, 5, 4, 5, and 5 in order of their position from left to right. Now let us consider what happens as we shift candidate 3 to the right. We will show that if $s_3$ crosses the dashed line at $g$, the strategy profile is no longer an equilibrium. Note that $g$ is not the reflection of any candidate in any voter, i.e., $g \not \in B(s,\points_v)$ for any $s \in \strat$.

To see this, note that if candidate 3 shifts to a position $g + \delta$, her distance from the voters at $d$ --- that were voting for her in $\strat$ --- is $3+\delta$. But this is not an equilibrium, since now candidate 2 can increase her vote share. Candidate 2 can shift to the position $c- \delta/2$. Then her distance to $b$ is $2 - \delta/2$, and her distance to $d$ is $3 + \delta/2$. She is then the closest candidate to both these positions, and her vote share increases from 5 to 7.
\end{example}

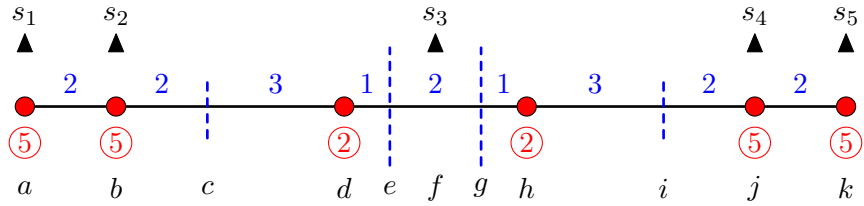
\begin{figure}[ht]
 \centering
    \begin{tikzpicture}[line cap=round,line join=round,>=triangle 45,x=1cm,y=1cm,scale=0.6]
\draw [line width=1pt] (0,0) -- (18,0);

\draw [line width=1pt,color=blue,dashed] (4,-0.7) -- (4,0.7);
\draw [line width=1pt,color=blue,dashed] (8,-1.3) -- (8,1.3);
\draw [line width=1pt,color=blue,dashed] (10,-1.3) -- (10,1.3);
\draw [line width=1pt,color=blue, dashed] (14,-0.7) -- (14,0.7);
\begin{footnotesize}
\draw [fill=red] (0,0) circle (6pt);
\draw [fill=red] (2,0) circle (6pt);
\draw [fill=red] (7,0) circle (6pt);
\draw [fill=red] (11,0) circle (6pt);
\draw [fill=red] (18,0) circle (6pt);
\draw [fill=red] (16,0) circle (6pt);

\node[isosceles triangle,
    draw,
    rotate=90,
    fill=black,
    minimum size =0.05cm, scale=0.5] (T1)at (0,1.3){};

\node[isosceles triangle,
    draw,
    rotate=90,
    fill=black,
    minimum size =0.05cm, scale=0.5] (T2)at (2,1.3){};
\node[isosceles triangle,
    draw,
    rotate=90,
    fill=black,
    minimum size =0.05cm, scale=0.5] (T3)at (9,1.3){};
\node[isosceles triangle,
    draw,
    rotate=90,
    fill=black,
    minimum size =0.05cm, scale=0.5] (T4)at (16,1.3){};
\node[isosceles triangle,
    draw,
    rotate=90,
    fill=black,
    minimum size =0.05cm, scale=0.5] (T5)at (18,1.3){};


\end{footnotesize}
\draw[color=blue] (1,0.5) node {$2$};
\draw[color=blue] (3,0.5) node {$2$};
\draw[color=blue] (5.5,0.5) node {$3$};
\draw[color=blue] (7.5,0.5) node {$1$};
\draw[color=blue] (9,0.5) node {$2$};
\draw[color=blue] (10.5,0.5) node {$1$};
\draw[color=blue] (12.5,0.5) node {$3$};
\draw[color=blue] (15,0.5) node {$2$};
\draw[color=blue] (17,0.5) node {$2$};

\draw[color=black] (0,2) node {$s_1$};
\draw[color=black] (2,2) node {$s_2$};
\draw[color=black] (9,2) node {$s_3$};
\draw[color=black] (16,2) node {$s_4$};
\draw[color=black] (18,2) node {$s_5$};

\foreach \x in {0,2,16,18} {
        \draw (\x,-0.8) node[circle, inner sep=1pt, draw, color=red, fill=white] {5}; 
    }

\foreach \x in {7,11} {
        \draw (\x,-0.8) node[circle, inner sep=1pt, draw, color=red, fill=white] {2}; 
    }

\foreach \x/\y in {0/a,2/b,4/c,7/d,8/e,9/f,10/g,11/h,14/i,16/j,18/k} {
\draw[color=black] (\x,-1.8) node {$\y$};
}
    
\end{tikzpicture}
    \caption{An equilibrium with 5 candidates and 24 voters. The red circles indicate the voters' locations, with the numbers below each circle representing the number of voters at that location. The blue numbers correspond to the lengths of the respective line segments. At any equilibrium, the middle candidate must remain between the two inner dashed lines.}
    \label{fig:shifting1}
\end{figure}

What makes the position $g$ important in the example is that it is the reflection of a reflection. In particular, point $c$ is the reflection of $s_1$ in $b \in \points_v$, and $g$ is the reflection of $c$ in $d \in \points_v$. For candidate 3, staying to the left of $g$ thus prevents candidate 2 from moving to position $c$ and snatching the votes from the voters in $d$. We now introduce the following sets of points to address these double reflections.

\begin{align*}
B^2(x,\points_v) &= \{f_r(f_r(x,p),q) \mid ~p,q \in \points_v, ~ x < p, ~f_r(x,p) < q, \text{ OR } x > p, ~f_r(x,p) > q\} \, .
\end{align*}
Note that the definition includes all second reflections of a point that are \emph{in the same direction} as the first reflection. That is, if $z \in B(x, \points_v)$, then there exist $p$, $q$ in $\points_v$ so that either $z = f_r(f_r(x,p),q) < f_r(x,p) < x$, or $x > f_r(x,p) > f_r(f_r(x,p),q) = z$. 

The next lemma shows that given a strategy profile $\strat$ and a candidate $i$, shifting $i$ to the right so that she does not cross any point $\points_v$, any reflection of $s_{i-1}$ in $\points_v$, and any reflection of the reflection of $s_{i-1}$, $s_{i-2}$ in $\points_v$, maintains an equilibrium. This also implies that while shifting a candidate to the right, one is only concerned with reflections of candidates to the left.

\begin{restatable}{lemma}{shifting}
\label{lm:3}
    Suppose $m>2$ and $\strat$ is a PNE. Let $i$ be a candidate such that $s_i\notin A$ where $A=B(s_{i-1},\points_v)\cup B^2(s_{i-1},\points_v)\cup B^2(s_{i-2},\points_v)\cup \points_v$. Let $x$ be a location such that, $s_{i}<x<s_{i+1}$ and $A\cap [s_i,x]=\emptyset$. Then, $\strat'=\{s_1,...,s_{i-1},x,s_{i+1},...,m\}$ is a PNE.
\end{restatable}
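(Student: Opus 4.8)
The plan is to show that after the shift, the characterization of equilibrium from \Cref{prop:equilibrium-characterization} still holds: every candidate is still in a best position in her open interval, and the minimum candidate utility still dominates the maximum utility of a new entrant in any gap. First I would invoke \Cref{lm:1} (whose hypotheses are met, since $\points_v \cup B(s_{i-1},\points_v) \subseteq A$ and $[s_i,x]\cap A = \emptyset$) to conclude that the shift changes \emph{no candidate's utility}: $\util(s_j',\strat') = \util(s_j,\strat)$ for all $j$, and in particular the minimum utility $\min_j \util(s_j,\strat)$ is preserved. So the only thing that can break equilibrium is that some candidate (or a new entrant) now has a \emph{better deviation} than before. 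I would organize the rest of the argument around bounding these deviations.

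The key observation is that the only voters whose Voronoi assignment is affected by a potential deviation involving candidate $i$ are those whose midpoint-boundary can be crossed by the relevant mover. Since $\strat$ was a PNE, for candidate $i$ herself $x$ is still a best response in $(s_{i-1},s_{i+1})$: by \Cref{lm:1} the candidate utility is constant on the whole interval $(s_{i-1}, s_{i+1})$ between consecutive boundary points, and the only boundaries inside are in $A$ — but $\intr(s_{i-1},s_{i+1})$ at $\strat$ was achieved at $s_i$, and the utility function $\utilout(\cdot, \{s_{i-1},s_{i+1}\})$ is unchanged, so $x \in \intr(s_{i-1},s_{i+1})$ holds (one has to be slightly careful that $x$ itself is not worse than the best point in that interval — but since the utility profile is piecewise constant between elements of $\points_v \cup B(s_{i-1},\points_v)$ and $x$ lies in the same piece as $s_i$, this follows). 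For the candidates $i-1$ and $i+1$, and for a new entrant, the analysis is symmetric: one shows that the set of reachable profitable deviations is unchanged. This is where the double-reflection sets enter. A deviation by candidate $i-1$ into the gap $(s_{i-2}, x)$ can now capture a voter $w$ only if $s_{i-1}$ could move past the reflection of $x$ (equivalently $s_i'$) in $w$; that reflection, traced back, is exactly a point of $B^2(s_{i-2},\points_v)$ relative to the original configuration, and since $x$ did not cross $B^2(s_{i-2},\points_v)$ this reachable set is unchanged. Similarly, a deviation by candidate $i+1$ or a new entrant landing just left of $s_{i+1}$, aiming to steal voters from candidate $i$, is governed by $B^2(s_{i-1},\points_v)$ — the reflections, through a voter, of the reflections $B(s_{i-1},\points_v)$; and $x$ did not cross those either. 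So for each candidate $j \neq i$, $\max_{k} \intu(s_k',s_{k+1}')$ over the relevant gaps is unchanged from $\strat$ to $\strat'$, and since it was $\le \min_j \util(s_j,\strat) = \min_j \util(s_j',\strat')$, \Cref{prop:equilibrium-characterization}\ref{prop.interval} and (ii) both continue to hold.

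I would carry the steps out in this order: (1) apply \Cref{lm:1} to fix all utilities; (2) verify condition \ref{prop.interval} for candidate $i$ directly, and for every other candidate $j$ observe $s_j' = s_j$ and its neighbors are unchanged except possibly one neighbor which moved within a region causing no Voronoi change for $j$'s own votes, so $\intr$ is unaffected; (3) verify condition (ii) by the reflection/double-reflection bookkeeping above, splitting into the gaps $(s_{i-2},x)$, $(x,s_{i+1})$, $(s_{i-1},x)$ and arguing each $\intu$ value is unchanged because the deviator cannot cross the relevant barrier in $A$. The main obstacle — and the part needing genuine care — is step (3): precisely tracking, for a would-be entrant or deviating neighbor, \emph{which} reflection barrier governs \emph{which} voter block, and confirming that the ``same direction'' restriction baked into the definition of $B^2$ is exactly the right one (second reflections in the opposite direction correspond to voters already on the far side of the mover and hence are irrelevant). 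A clean way to handle this is to prove a small sub-claim: for any point $z$ with $s_i < z < s_{i+1}$ in the same $A$-cell as $s_i$, and any candidate $j$, the set of voters assigned to each candidate under the best deviation of $j$ is identical whether $i$ sits at $s_i$ or at $z$; the lemma then follows immediately by taking $z = x$.
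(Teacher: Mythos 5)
Your skeleton is right --- invoke \Cref{lm:1} to freeze all candidate utilities, dispose of deviations to the right of $x$ separately, and then show that no candidate gains a new profitable deviation into the region left of $x$, with the $B^2$-sets acting as the barriers. This matches the paper's architecture (the paper proves an auxiliary lemma for deviations in $(x,1]$, then argues by contradiction for deviations in $(s_{i-2},s_i]$). However, the heart of the argument is missing, and the mechanism you describe for the $B^2$-sets is not the correct one. The difficulty is that the utility of a \emph{fixed} deviation point $y\in(s_{i-1},s_i)$ is \emph{not} unchanged by the shift: every reflection $f_r(s_i,p)=2p-s_i$ moves left by $\lambda=x-s_i$ when $i$ moves right, so reflections can cross $y$ and strictly increase the votes available at $y$ in $\strat'$. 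The content of the lemma is that the \emph{maximum} over the gap nevertheless does not increase, and proving this requires an explicit construction: given a supposedly profitable deviation $y$ in $\strat'$, one must exhibit a point $y'$ (the paper takes $y'$ just to the right of $z_{\max}$, the rightmost reflection of $s_i$ that crossed $y$, and strictly left of the next reflection of $s_{i-1}$ or $s_{i-2}$) with $\utilout(y',\strat)\ge\utilout(y,\strat')$, contradicting that $\strat$ is a PNE. Your concluding sub-claim (``the set of voters assigned under the best deviation of $j$ is identical'') is essentially a restatement of this and is the part that actually needs proof; asserting it does not discharge it.

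Moreover, your bookkeeping of which barrier governs which deviation is off. In the paper, $B^2(s_{i-1},\points_v)\cap[s_i,x]=\emptyset$ is used for deviations by candidates $k\ne i-1$ into $(s_{i-1},s_i)$: it guarantees no point of $B(s_{i-1},\points_v)$ lies in $[y,z_{\max}]$, so that relocating the deviation from $y$ to $y'$ does not surrender left-side voters to candidate $i-1$. Symmetrically, $B^2(s_{i-2},\points_v)$ plays this role for a deviation by candidate $i-1$ herself (whose relevant left neighbour is $i-2$). Your description instead attaches $B^2(s_{i-2},\points_v)$ to ``the reflection of $x$ in $w$, traced back'' (a double reflection of $x$, not of $s_{i-2}$) and attaches $B^2(s_{i-1},\points_v)$ to deviations by candidate $i+1$ or entrants near $s_{i+1}$ --- but those right-side deviations need no $B^2$ machinery at all: since $i$ moved closer to them, $\utilout^L$ at any point right of $x$ only decreases, which is the paper's \Cref{lm:2}. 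So while the plan's outline is sound, the step that makes the lemma true is both misdescribed and unexecuted.
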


\begin{corollary}\label{cly:symmetry}
    Suppose $m>2$ and $\strat$ is a PNE. Let $i$ be a candidate such that $s_i\notin A$ where $A=B(s_{i+1},\points_v)\cup B^2(s_{i+1},\points_v)\cup B^2(s_{i+2},\points_v)\cup \points_v$. Let $x$ be a location such that, $s_{i-1}<x<s_i$ and $A\cap [x,s_i]=\emptyset$. Then, $\strat'=\{s_1,...,s_{i-1},x,s_{i+1},...,m\}$ is a PNE.
\end{corollary}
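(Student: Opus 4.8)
The plan is to derive \Cref{cly:symmetry} from \Cref{lm:3} by a mirror-image argument, rather than repeating the shifting analysis with ``left'' and ``right'' interchanged. Given the instance $\mcI = (\points_v, m, F, R)$, consider the reflected instance $\overline{\mcI} = (\overline{\points}_v, m, \overline{F}, R)$ with $\overline{\points}_v := \{R - p : p \in \points_v\}$ and $\overline{F}(R - p) := F(p)$; note $\overline{\points}_v \subseteq \integers_+ \cap [0,R]$ since $R \in \points_v \subseteq \integers_+$. For a strategy profile $\strat = \{s_1 < \dots < s_m\}$ of $\mcI$, reflect each coordinate and reindex: set $t_k := R - s_{m+1-k}$, so that $t_1 < \dots < t_m$ is a strategy profile of $\overline{\mcI}$; write $\phi$ for this map. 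Since $|x - y| = |(R-x) - (R-y)|$, the map $x \mapsto R - x$ preserves all pairwise distances, hence the nearest-candidate assignment and every candidate's utility, i.e. $\util(t_{m+1-j}, \phi(\strat)) = \util(s_j, \strat)$ in the respective instances. As $\phi$ is an involution on valid strategy profiles (preserving distinctness and any separation constraints), it carries pure Nash equilibria of $\mcI$ to pure Nash equilibria of $\overline{\mcI}$ and back.

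Next I would verify that $\phi$ also matches up the relevant barrier sets and the direction of the shift. From $f_r(R - x, R - p) = 2(R-p) - (R-x) = R - f_r(x,p)$ we get $B(R-x, \overline{\points}_v) = \{R - y : y \in B(x,\points_v)\}$; moreover, the two defining clauses of $B^2$ are exchanged under the substitution $x \mapsto R-x$, $p \mapsto R - p$, $q \mapsto R - q$, so $B^2(R - x, \overline{\points}_v) = \{R - y : y \in B^2(x,\points_v)\}$ as well. Put $i' := m + 1 - i$. Then candidate $i$ of $\mcI$ corresponds to candidate $i'$ of $\overline{\mcI}$, its right-neighbours $i+1, i+2$ correspond to the left-neighbours $i'-1, i'-2$ (indeed $t_{i'-1} = R - s_{i+1}$ and $t_{i'-2} = R - s_{i+2}$), and the set $A = B(s_{i+1},\points_v) \cup B^2(s_{i+1},\points_v) \cup B^2(s_{i+2},\points_v) \cup \points_v$ of \Cref{cly:symmetry} is carried by $x \mapsto R-x$ onto the set $\overline{A} = B(t_{i'-1},\overline{\points}_v) \cup B^2(t_{i'-1},\overline{\points}_v) \cup B^2(t_{i'-2},\overline{\points}_v) \cup \overline{\points}_v$ appearing in \Cref{lm:3} for candidate $i'$. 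Finally, moving $s_i$ leftward to $x$ with $s_{i-1} < x < s_i$ and $A \cap [x, s_i] = \emptyset$ becomes moving $t_{i'}$ rightward to $R - x$ with $t_{i'} < R - x < t_{i'+1}$ and $\overline{A} \cap [t_{i'}, R-x] = \emptyset$, which is exactly the hypothesis of \Cref{lm:3}.

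With these correspondences established, \Cref{lm:3} applied to $\overline{\mcI}$, candidate $i'$, and target $R - x$ shows that $\{t_1, \dots, t_{i'-1}, R - x, t_{i'+1}, \dots, t_m\}$ is a PNE of $\overline{\mcI}$; applying $\phi$ back (which preserves the PNE property) and reindexing yields that $\strat' = \{s_1, \dots, s_{i-1}, x, s_{i+1}, \dots, s_m\}$ is a PNE of $\mcI$, as claimed. The main obstacle is purely bookkeeping: confirming that $\phi$ preserves utilities and the equilibrium condition, and --- the step most prone to an off-by-one error --- checking that the reindexing $k \mapsto m+1-k$ sends the right-side reflection sets of $s_{i+1}, s_{i+2}$ used in \Cref{cly:symmetry} precisely to the left-side reflection sets of $t_{i'-1}, t_{i'-2}$ used in \Cref{lm:3}, including the directionality clauses in the definition of $B^2$. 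One could instead re-run the proof of \Cref{lm:3} verbatim with the roles of left and right swapped, but the reflection argument avoids duplicating that analysis.
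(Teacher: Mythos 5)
Your proposal is correct. The paper itself gives no proof of \Cref{cly:symmetry} beyond the remark that it is ``similar to Lemma~\ref{lm:3}'', i.e.\ it implicitly asks the reader to re-run the shifting analysis of \Cref{lm:3} with the roles of left and right interchanged. You instead reduce the corollary to \Cref{lm:3} used as a black box, via the reflection $x \mapsto R - x$ together with the reindexing $k \mapsto m+1-k$. All the checks you flag go through: the reflection preserves pairwise distances, hence utilities, the tie-splitting rule, and the PNE property; $f_r(R-x,R-p) = R - f_r(x,p)$ gives $B(R-x,\overline{\points}_v) = R - B(x,\points_v)$; the two directionality clauses in the definition of $B^2$ are exchanged under the reflection, so their union is preserved and $B^2(R-x,\overline{\points}_v) = R - B^2(x,\points_v)$; the indices work out ($t_{i'-1} = R - s_{i+1}$, $t_{i'-2} = R - s_{i+2}$ for $i' = m+1-i$); and the reflected instance is still a valid instance of the model since $0, R \in \points_v \subseteq \integers_+$ forces $\overline{\points}_v \subseteq \integers_+ \cap [0,R]$ with $0$ and $R$ again the extreme positions. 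What your route buys is that the fairly involved case analysis in the proof of \Cref{lm:3} (the sets $Z$, $z_{\max}$, the two cases $k \neq i-1$ and $k = i-1$) does not have to be duplicated or even inspected; the cost is the bookkeeping you identify, all of which you have verified correctly.
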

The proof of the corollary is similar to Lemma~\ref{lm:3}.


\subsection{An algorithmic upper bound on the bit complexity of equilibria}
\label{sec:algorithmic}

Let $\strat$ be an equilibrium. Starting from $\strat$, we aim to construct another PNE $\strat'$ where each candidate's location is in $\bit(m)$. This construction proceeds iteratively, in two phases per iteration.

In iteration $i$, the first phase (Algorithm \ref{algo:1}) attempts to identify the leftmost candidate whose location is not in $\bit(i-1)$, say candidate $j$. It then tries to shift candidate $j$ rightwards to the next position in $\bit(i)$, say $x$. Note that all candidates to the left of $s_j$ are at positions in $\bit(i-1)$, hence their boundaries do not lie in $(s_j, x)$. However if $x \in \bit(i-1)$, then a boundary for a candidate on the left could lie on $x$. Thus if $x \in \bit(i) \setminus \bit(i-1)$, we shift $j$ to $x$ and move on to the next iteration. If however $x \in \bit(i-1)$, we then enter the second phase and try and shift candidate $j$ right.

In the second phase (Algorithm \ref{algo:2}), we know the closest point $x \in \bit(i)$ after $s_j$ is actually in $\bit(i-1)$. Hence the closest point $x' \in \bit(i)$ \emph{before} $s_j$ is not in $\bit(i-1)$. So we try and shift $s_j$ left, towards $x'$. If a candidate's boundary however lies between $x'$ and $s_j$, clearly this candidate is to the right of candidate $j$, and is in a position not in $\bit(i-1)$. We pick this candidate as the new candidate $j$, and again try and shift it leftwards. 

By the end of the $m$th iteration, all candidates will occupy locations that are in $\bit(m)$, and the resulting configuration will still be an equilibrium.

\begin{algorithm}[ht]
    \caption{Shift-Right($m,\strat,\points_v$)}\label{algo:1}
    \begin{algorithmic}[1]
        \State $i=1$
        \While {$i\le m$}
            \State $k=\min\left(\{j\in[m]~|~s_j\notin \bit(i-1)\}\right)$
            \If{$k=$ Null}
                \State \textbf{Return} $\strat$
            \EndIf
            \State $x=\min(\{y\in \bit(i-1)~|~ y>s_k\})$, $\epsilon = 2^{-i}$
            \If{$s_k \le s_{k+1} \le (x-\epsilon)$}
                    \State $s_{k+1}=(x-\frac{\epsilon}{2})$, $s_k = x-\epsilon$  
            \ElsIf{$s_k \le (x-\epsilon)$}
                \State $s_k = x-\epsilon$         
           \ElsIf{$s_k > (x-\epsilon)$}
                \State $\strat =$ Shift-Left$(k,x,\epsilon,i, m,\strat,\points_v)$
            \EndIf
            \State $i = i + 1$
        \EndWhile
        \State \textbf{Return} $\strat$
    \end{algorithmic}
\end{algorithm}

\begin{algorithm}[ht]
    \caption{Shift-Left($k,x,\epsilon,i, m,\strat,\points_v$)}\label{algo:2}
    \begin{algorithmic}[1]
        \State $j=k$
        \While{True}
            \State $z = \min(\{y\in B^2(s_{j+1},\points_v)\cup B^2(s_{j+2},\points_v)~|~s_{j}\ge y\ge(x-\epsilon)\})$
            \If{$z=$ Null}
                \If{$s_{j-1}\ge (x-\epsilon)$}
                    \State $s_{j-1}=(x-\epsilon)$
                \Else
                    \State $s_{j} = (x-\epsilon)$
                \EndIf
                \State \textbf{Break}
            \EndIf
            
            \If{$z\in B^2(s_{j+1},\points_v)$}
                \State $x = \min\left(\{y\in \bit(i-1)\mid y>s_{j+1}\}\right)$
                \State $j = j+1$
            \ElsIf{$z\in B^2(s_{j+2},\points_v)$}
                \State $x = \min\left(\{y\in \bit(i-1)\mid y>s_{j+2}\}\right)$
                \State $j = j+2$
            \EndIf
        \EndWhile
        \State \textbf{Return} $\strat$
    \end{algorithmic}
\end{algorithm}

The next lemma crucially shows that for Phase 2, while shifting candidate $i$ to the left, we can ignore $B$-boundaries (single reflections) and just consider $B^2$-boundaries from candidates to the right of $s_i$.

\begin{restatable}{lemma}{lmbbtwo}
\label{lm:BB2}
    Suppose $\strat=\{s_1,\dots,s_m\}$ is a PNE such that $s_1<s_2<\dots<s_m$. Let $i$ be a candidate such that $y\in B(s_{i+1},\points_v)$ and $y\in(s_{i-1},s_{i})$. Then, there exists a $y'\in[y,s_i]$, such that $y'\in B^2(s_{i+2},\points_v)$.
\end{restatable}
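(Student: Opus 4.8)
The plan is to strip away the combinatorics of $B$ and $B^2$ and reduce the claim to producing a single voter in a short interval, and then to obtain that voter by playing a suitable deviation of candidate $i+1$ against the equilibrium condition. First I would write the hypothesis $y\in B(s_{i+1},\points_v)$ as $y=f_r(s_{i+1},p)=2p-s_{i+1}$ for some voter $p\in\points_v$, and rewrite $y\in(s_{i-1},s_i)$ as $\mean(s_{i-1},s_{i+1})<p<\mean(s_i,s_{i+1})$. In particular $p<\mean(s_i,s_{i+1})$ and $p>\mean(s_{i-1},s_i)$, so in $\strat$ the voters at $p$ are captured by candidate $i$. It then suffices to find a voter $q$ satisfying
\[
q<\mean(s_{i+1},s_{i+2}),\qquad 2q-s_{i+2}>p,\qquad q\ge p+\tfrac{s_{i+2}-s_i}{2},
\]
and to put $y':=f_r(f_r(s_{i+2},q),p)=2p-2q+s_{i+2}$: the first two inequalities guarantee that $(q,p)$ is an admissible ``leftward--leftward'' pair (using $q<\mean(s_{i+1},s_{i+2})<s_{i+2}$), so $y'\in B^2(s_{i+2},\points_v)$; the third gives $y'\le s_i$; and $q<\mean(s_{i+1},s_{i+2})$ gives $y'>2p-s_{i+1}=y$. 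Thus $y'\in[y,s_i]\cap B^2(s_{i+2},\points_v)$, and (since $p<\mean(s_i,s_{i+1})$) the interval in which $q$ must lie is nonempty.

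For the main case, $p>s_i$, the inequality $2q-s_{i+2}>p$ is implied by $q\ge p+\tfrac{s_{i+2}-s_i}{2}$, so only one condition on $q$ is binding. I would let candidate $i+1$ deviate to $z=f_r(s_i,p)-\eta=2p-s_i-\eta$ for an arbitrarily small $\eta>0$. Since $s_i<2p-s_i<s_{i+1}$ (the right inequality being $p<\mean(s_i,s_{i+1})$), for small $\eta$ we have $z\in(s_i,s_{i+1})$ and the order of the candidates is unchanged, so candidate $i+1$'s neighbours stay $s_i$ and $s_{i+2}$. Her cell moves from $(\mean(s_i,s_{i+1}),\mean(s_{i+1},s_{i+2}))$ to $(\,p-\tfrac{\eta}{2},\,p+\tfrac{s_{i+2}-s_i}{2}-\tfrac{\eta}{2}\,)$; since both endpoints move to the left she can only gain voters on the left --- in particular those at $p$, a gain of at least $F(p)>0$ --- and she loses only the voters in $\bigl(p+\tfrac{s_{i+2}-s_i}{2}-\tfrac{\eta}{2},\ \mean(s_{i+1},s_{i+2})\bigr)$. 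As $\strat$ is a PNE this deviation is not profitable, so the mass she loses is at least $F(p)>0$, and hence that interval contains a voter for every small $\eta>0$. Since $\points_v$ is finite, one voter $q$ lies in it for a sequence $\eta\downarrow 0$, so $q\ge p+\tfrac{s_{i+2}-s_i}{2}$ and $q<\mean(s_{i+1},s_{i+2})$, exactly as required.

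The remaining case is $p\le s_i$. If $p<s_i$, then candidate $i+1$ could capture the voters at $p$ by moving just to the right of $p$ (still left of $s_i$, with neighbours $s_{i-1}$ and $s_i$); since $\strat$ is a PNE this is not profitable, so candidate $i+1$ already has utility at least $F(p)>0$, i.e.\ her Voronoi cell $(\mean(s_i,s_{i+1}),\mean(s_{i+1},s_{i+2}))$ contains a voter. When in addition $\mean(p,s_{i+2})\le\mean(s_i,s_{i+1})$ (equivalently $p\le s_i+s_{i+1}-s_{i+2}$), that cell is contained in $(\mean(p,s_{i+2}),\mean(s_{i+1},s_{i+2}))$, so any voter of the cell satisfies all three conditions. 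The genuinely delicate situation is $s_i+s_{i+1}-s_{i+2}<p\le s_i$, where $s_{i+2}$ sits far to the right of $s_{i+1}$ and $q$ must be dug out of the proper sub-interval $(\mean(p,s_{i+2}),\mean(s_{i+1},s_{i+2}))$ of candidate $i+1$'s cell; here I would again set up a deviation --- candidate $i+1$'s move towards $p$, or candidate $i+2$'s room to drift left --- whose unprofitability at the PNE forces a voter into that sub-interval. I expect this last case, in which the territories of candidates $i$, $i+1$ and $i+2$ must be controlled simultaneously, to be the main obstacle; everything else is midpoint-and-reflection bookkeeping.
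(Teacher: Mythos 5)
Your main case ($p>s_i$) is essentially the paper's own argument: the paper also writes $y=f_r(s_{i+1},p)$, lets candidate $i+1$ deviate to just left of $f_r(s_i,p)=2p-s_i$ (a leftward shift of more than $\gamma=s_i-y$), and uses unprofitability at the PNE to extract a voter $q$ with $f_r(s_{i+2},q)\in[2p-s_i,\,s_{i+1})$, then sets $y'=f_r(f_r(s_{i+2},q),p)\in[y,s_i]$. For $p>s_i$ your version is correct, modulo routine care about voters sitting exactly on cell boundaries (choose $\eta$ generically so no voter lands on the new right boundary; a voter at exactly $\mean(s_{i+1},s_{i+2})$ only yields $y'=y$, which is still admissible; the paper handles sharing via Claim~\ref{clm:nosharing}).

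The genuine gap is the one you flag yourself: the sub-case $s_i+s_{i+1}-s_{i+2}<p\le s_i$ is announced but not proved, and it is not mere bookkeeping. When $p\le s_i$ the deviation target $2p-s_i$ lies at or to the left of $s_i$, so the deviation leaves $(s_i,s_{i+1})$, candidate $(i+1)$'s neighbours change, and a candidate just left of $2p-s_i$ is in fact \emph{farther} from $p$ than $s_i$ is, so the profitable deviation vanishes; moreover the binding lower bound on $q$ becomes $q>\mean(p,s_{i+2})$ (required for $f_r(s_{i+2},q)>p$, i.e.\ for membership in $B^2(s_{i+2},\points_v)$ as defined), which for $p<s_i$ is strictly stronger than $q\ge p+(s_{i+2}-s_i)/2$. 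Neither deviation you gesture at closes this: moving candidate $i+1$ next to $p$ only certifies a voter somewhere in her old cell, which may lie entirely in $(\mean(s_i,s_{i+1}),\mean(p,s_{i+2})]$ and hence fail condition (2). So the proposal is incomplete. That said, you have identified a real issue rather than manufactured one: the paper's proof of Claim~\ref{clm:BB2-1} ("shift more than $\gamma$ to the left and gain all the votes from $p$") silently assumes $p>s_i$, and its final assertion $y'\in B^2(s_{i+2},\points_v)$ rests on $f_r(s_{i+2},q)\ge 2p-s_i>p$, which again needs $p>s_i$; the case you could not close is not closed by the paper either.
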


Recall that in Phase 2, for candidate $j$ that we are trying to shift to the left, the nearest position $x' \in \bit(i)$ is not in $\bit(i-1)$. The next lemma shows that if we cannot move candidate $j$ to the left to position $x' \in \bit(i) \setminus bit(i-1)$ due to a $B^2$-boundary for candidate $k$, then for candidate $k$, the position $x'' \in \bit(i)$ to the immediate left is also not in $\bit(i-1)$.

\begin{restatable}{lemma}{algofourinvariant}
\label{lm:algo4-invariant}
    For any iteration of Algorithm \ref{algo:2}, $s_j\in(x-\epsilon,x)$ and $x-\epsilon\in \bit(i)$. Where Algorithm \ref{algo:2} is called as a subroutine at the $i$th iteration of Algorithm \ref{algo:1}.
\end{restatable}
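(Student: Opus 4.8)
The plan is to prove the invariant by induction on the iterations of the \texttt{While} loop in Algorithm~\ref{algo:2}. The key observation is that the two quantities $x$ and $\epsilon$, and the candidate index $j$, are the only mutable state, and we must track how each branch of the loop body updates them. First I would establish the base case: when \textsc{Shift-Left} is invoked from line~12 of Algorithm~\ref{algo:1}, the arguments passed are exactly $k$, $x$, $\epsilon = 2^{-i}$, where by construction in line~7 of Algorithm~\ref{algo:1}, $x = \min\{y \in \bit(i-1) : y > s_k\}$, and we have entered the \texttt{ElsIf} branch precisely because $s_k > x - \epsilon$. Combined with the fact that $s_k < x$ (since $x$ is the least element of $\bit(i-1)$ exceeding $s_k$, and $s_k \notin \bit(i-1) \subseteq \bit(i)$ forces $s_k$ strictly below $x$), we get $s_k \in (x-\epsilon, x)$, so with $j = k$ the invariant $s_j \in (x - \epsilon, x)$ holds. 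For the second half, $x - \epsilon = x - 2^{-i}$: since $x \in \bit(i-1)$, write $x = a/2^{i-1} = 2a/2^i$, so $x - 2^{-i} = (2a-1)/2^i \in \bit(i)$, giving $x - \epsilon \in \bit(i)$.

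Next I would handle the inductive step, treating the two reassignment branches (lines~14--16 and lines~17--19) together. Suppose at the start of an iteration the invariant holds, so $s_j \in (x-\epsilon, x)$ and $x - \epsilon \in \bit(i)$. If $z$ is not Null, then $z \in B^2(s_{j+1}, \points_v) \cup B^2(s_{j+2}, \points_v)$ with $x - \epsilon \le z \le s_j$. Because $\strat$ is a PNE and the reflection structure (via Lemma~\ref{lm:BB2} and the analysis underlying Lemma~\ref{lm:3}) forces a candidate to sit at the obstructing $B^2$-point, one argues that the relevant candidate $s_{j+1}$ (resp.\ $s_{j+2}$) lies strictly above $x-\epsilon$; more precisely $s_{j+1} > x - \epsilon$, and since $x - \epsilon \in \bit(i)$ while $s_{j+1} \notin \bit(i-1)$ one deduces $s_{j+1} \notin \bit(i)$ as well --- here I would use that the only point of $\bit(i)$ in the gap $(x-2^{-i}, x)$ is none, i.e.\ $x-\epsilon$ and $x$ are consecutive in $\bit(i)$, so a point of $(x-\epsilon, x)$ outside $\bit(i-1)$ cannot be in $\bit(i)$. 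The new value $x_{\text{new}} := \min\{y \in \bit(i-1) : y > s_{j+1}\}$ then satisfies $s_{j+1} \in (x_{\text{new}} - 2^{-i}, x_{\text{new}}) = (x_{\text{new}} - \epsilon, x_{\text{new}})$ by exactly the same arithmetic as in the base case (and $\epsilon$ is unchanged throughout \textsc{Shift-Left}, equal to $2^{-i}$), and $x_{\text{new}} - \epsilon \in \bit(i)$ by the same $x = 2a/2^i$ computation. Setting $j \leftarrow j+1$ (resp.\ $j+2$) restores the invariant. The termination branch (lines~4--11) does not begin a new iteration, so no invariant need be checked there.

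The main obstacle I anticipate is the step asserting $s_{j+1} > x - \epsilon$ (and $s_{j+1} \notin \bit(i)$): this is where the semantics of the $B^2$-sets and the PNE hypothesis must be invoked carefully. The point $z \in B^2(s_{j+1},\points_v)$ in the range $[x-\epsilon, s_j]$ is, by the definition of $B^2$ and the argument in Lemma~\ref{lm:3}, a double reflection $f_r(f_r(s_{j+1},p),q)$; the reason it is a genuine barrier is that shifting $s_j$ past it would let candidate $j+1$ profitably deviate, which presupposes $s_{j+1}$ is situated so that such a deviation exists --- in particular $s_{j+1}$ cannot already be at a low-complexity point, else it would have been handled in an earlier iteration of Algorithm~\ref{algo:1} (all candidates left of the current "problem" candidate are in $\bit(i-1)$, so the obstructing candidate lies to the right and is not in $\bit(i-1)$). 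I would make this precise by noting that Algorithm~\ref{algo:1} processes candidates left-to-right and only candidates to the right of the current $j$ can still fail to be in $\bit(i-1)$; hence $s_{j+1} \notin \bit(i-1)$, and since $z \le s_j < x$ forces $z$ (and with it $s_{j+1}$, via the reflection being an involution preserving the relevant interval) into the region between consecutive points of $\bit(i)$, we get the two claimed facts. Everything else is the routine dyadic arithmetic already rehearsed in the base case.
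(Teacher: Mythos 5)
Your overall skeleton (induction on the iterations of the \texttt{While} loop, base case read off from the call site in Algorithm~\ref{algo:1}, inductive step analyzing the two reassignment branches) matches the paper, and your base case is correct and in fact more detailed than the paper's. The problem is in the inductive step, where the real content of the lemma lives, and there your argument has a genuine gap. You assert that the new value $x_{\text{new}} = \min\{y\in\bit(i-1) : y>s_{j+1}\}$ satisfies $s_{j+1}\in(x_{\text{new}}-\epsilon,x_{\text{new}})$ ``by exactly the same arithmetic as in the base case,'' but the base case got this containment for free from the explicit branch condition $s_k>(x-\epsilon)$ in Algorithm~\ref{algo:1}; no analogous condition is available for $s_{j+1}$. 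Knowing only $s_{j+1}\notin\bit(i-1)$ places $s_{j+1}$ somewhere in the gap $(x_{\text{new}}-2\epsilon,\,x_{\text{new}})$ between consecutive points of $\bit(i-1)$, and the whole point is to rule out the left half $(x_{\text{new}}-2\epsilon,\,x_{\text{new}}-\epsilon]$. Your supporting claims do not do this: the assertion that $s_{j+1}$ lies in $(x-\epsilon,x)$ is false in general ($s_{j+1}$ is merely to the right of $s_j$ and may lie far beyond $x$), and the appeal to ``Algorithm~\ref{algo:1} processes candidates left-to-right, hence $s_{j+1}\notin\bit(i-1)$'' is a non-sequitur: candidates to the right of $k$ \emph{may} fail to be in $\bit(i-1)$, not \emph{must}. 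The vague appeal to the PNE hypothesis forcing a candidate to ``sit at the obstructing $B^2$-point'' is also not what drives this lemma.

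The missing idea, which is the paper's Claim~\ref{clm:algo4-invariant-1}, is purely arithmetic: since $f_r(y,p)=2p-y$, the double reflection $y\mapsto f_r(f_r(y,p),q)=y+2(q-p)$ is a \emph{translation by an even integer} (as $p,q\in\points_v\subseteq\integers_+$). If $z=f_r(f_r(s_{j+1},p),q)\in(x-\epsilon,s_j)\subseteq(x-\epsilon,x)$, set $x'=f_r(f_r(x,q),p)=x+2(p-q)$. Then $s_{j+1}-x'=z-x$, so $s_{j+1}\in(x'-\epsilon,x')$, and $x'\in\bit(i-1)$ because integer translations preserve $\bit(i-1)$; this simultaneously yields $s_{j+1}\notin\bit(i-1)$ (so $x_{\text{new}}=x'$, or at least $x_{\text{new}}\le x'$ with $x_{\text{new}}-\epsilon<s_{j+1}$ still holding) and $x_{\text{new}}-\epsilon\in\bit(i)$ by your dyadic computation. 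Without identifying this translation structure and constructing $x'$ explicitly, the inductive step does not go through.
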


The last lemma shows that when~\Cref{algo:2} terminates, the strategy profile is an equilibrium.

\begin{restatable}{lemma}{algofourcorrectnessone}
\label{lm:algo4-correctness-1}
If $z=Null$ in Algorithm \ref{algo:2}, either $s_{j-1}=(x-\epsilon)$ or $s_j=(x-\epsilon)$ and the corresponding configuration is an equilibrium.
\end{restatable}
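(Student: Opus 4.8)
The plan is to read off from \Cref{algo:2} that, when $z=\mathrm{Null}$, the only change made before \textbf{Break} resets one coordinate --- either $s_{j-1}$ or $s_j$ --- to the value $x-\epsilon$. By \Cref{lm:algo4-invariant} the loop maintains $s_j\in(x-\epsilon,x)$ and $x-\epsilon\in\bit(i)$ (with $i$ the current iteration of \Cref{algo:1}), so $s_j\neq x-\epsilon$ and each such reset is a trivial or a \emph{leftward} shift of a single candidate to $x-\epsilon$. Since $\strat$ is untouched inside the loop and is a PNE by the outer induction hypothesis, it then suffices to verify, in each of the two branches, the hypotheses of \Cref{cly:symmetry} for the shifted candidate with target $x-\epsilon$; that corollary's conclusion is exactly that the new profile is an equilibrium.

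Take first the branch $s_{j-1}<x-\epsilon$, where the algorithm sets $s_j\leftarrow x-\epsilon$, so that $s_{j-1}<x-\epsilon<s_j<s_{j+1}$. I would then show
\[
  A=B(s_{j+1},\points_v)\cup B^2(s_{j+1},\points_v)\cup B^2(s_{j+2},\points_v)\cup\points_v
\]
is disjoint from $[x-\epsilon,s_j]$, as follows. The two $B^2$ sets miss $[x-\epsilon,s_j]$: this is precisely the meaning of $z=\mathrm{Null}$. No voter position lies in $[x-\epsilon,s_j]$: since $x\in\bit(i-1)$ we have $x-\epsilon=x-2^{-i}\in\bit(i)\setminus\bit(i-1)$, hence $x-\epsilon\notin\integers$, and the interval $[x-\epsilon,s_j]\subseteq[x-2^{-i},x)$ has length $<1$ and misses the only integer it could contain, namely $x$. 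The single-reflection set $B(s_{j+1},\points_v)$ misses $[x-\epsilon,s_j)$: any point $y$ there lies in $(s_{j-1},s_j)$, so \Cref{lm:BB2} produces a point of $B^2(s_{j+2},\points_v)$ in $[y,s_j]\subseteq[x-\epsilon,s_j]$, contradicting $z=\mathrm{Null}$. Finally $s_j\notin B(s_{j+1},\points_v)$, i.e.\ $\mu(s_j,s_{j+1})\notin\points_v$: in any equilibrium no voter can lie at the midpoint of two consecutive candidates, for otherwise the one of those two candidates away from whom that voter would move under an infinitesimal shift could grab its entire positive mass at no cost unless a voter also sits at the adjacent midpoint, and iterating forces a voter at $\mu(s_0,s_1)=-\infty$ or $\mu(s_m,s_{m+1})=+\infty$. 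Hence $A\cap[x-\epsilon,s_j]=\emptyset$ and \Cref{cly:symmetry} applies.

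The branch $s_{j-1}\ge x-\epsilon$ is treated identically with $j$ replaced by $j-1$: if $s_{j-1}=x-\epsilon$ nothing changes, and otherwise $s_{j-1}\in(x-\epsilon,x)$ and we shift candidate $j-1$ left to $x-\epsilon$. One first checks $s_{j-2}<x-\epsilon$ so the move preserves the order (again from the invariant of \Cref{lm:algo4-invariant} and the way \Cref{algo:2} advances $j$), and then that $B(s_j,\points_v)\cup B^2(s_j,\points_v)\cup B^2(s_{j+1},\points_v)\cup\points_v$ misses $[x-\epsilon,s_{j-1}]\subseteq[x-\epsilon,s_j]$: the parts attached to $s_{j+1},s_{j+2}$ and to $\points_v$ exactly as before, while the parts attached to $s_j$ are ruled out directly by the bit bound --- a point $2p-s_j$ ($p\in\points_v$) inside $[x-\epsilon,s_{j-1}]$ would force $2p\in(2x-2\epsilon,2x)$, i.e.\ the integer $p$ inside $(x-\epsilon,x)$, impossible, and a double reflection of $s_j$ differs from $s_j\in(x-\epsilon,x)$ by a nonzero even integer, hence lies outside $[x-\epsilon,s_{j-1}]$. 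So \Cref{cly:symmetry} applies to candidate $j-1$.

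I expect the delicate part to be not any single estimate but the combination: \Cref{lm:BB2} is exactly what lets the weak ``no $B^2$-boundary of $s_{j+1}$ or $s_{j+2}$ in range'' test run by \Cref{algo:2} certify disjointness from the fuller boundary set of \Cref{cly:symmetry}; \Cref{lm:algo4-invariant} is exactly what keeps the target window clear of voter positions; and pinning down $s_{j-2}<x-\epsilon$ in the second branch, together with handling the closed-interval endpoints and the case $s_j\in B(s_{j+1},\points_v)$, is where the argument must be made carefully.
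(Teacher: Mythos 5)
Your proof follows the same route as the paper's: the $z=\mathrm{Null}$ test combined with Lemma~\ref{lm:BB2} certifies that the boundary set of Corollary~\ref{cly:symmetry} misses the target interval, and the invariant of Lemma~\ref{lm:algo4-invariant} supplies $s_j\in(x-\epsilon,x)$ and $x-\epsilon\in\bit(i)$. You are in fact more thorough than the paper's own one-paragraph argument, which silently skips the $\points_v$ part of the boundary set, the endpoint case $y=s_j$ (where Lemma~\ref{lm:BB2} does not apply and one must invoke the no-vote-sharing property), and --- most notably --- the second branch, where the set relevant to shifting candidate $j-1$ involves reflections of $s_j$ that $z=\mathrm{Null}$ does not certify and which you correctly rule out via the bit-complexity bound.
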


\begin{restatable}{theorem}{correctness}
\label{thm:correctness}
After $i$ iterations of Algorithm \ref{algo:1}, at least $i$ candidates are on locations in $\bit(i)$, and the resulting configuration is a PNE.
\end{restatable}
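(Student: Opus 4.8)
The plan is to prove this by induction on $i$, the iteration count of \Cref{algo:1}, tracking two invariants simultaneously: (a) after $i$ iterations the current strategy profile is a PNE, and (b) at least $i$ candidates occupy locations in $\bit(i)$. The base case $i=0$ is trivial since the input $\strat$ is a PNE by hypothesis and $\bit(0)$ contains at least zero candidates vacuously. For the inductive step, assume after $i-1$ iterations the profile $\strat$ is a PNE with at least $i-1$ candidates in $\bit(i-1) \subseteq \bit(i)$. In iteration $i$, \Cref{algo:1} finds the leftmost candidate $k$ with $s_k \notin \bit(i-1)$ (if none exists, all $m$ candidates are already in $\bit(i-1)$ and we return early, so the invariant holds trivially). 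Let $x$ be the next $\bit(i-1)$-point to the right of $s_k$, and note $x - \epsilon = x - 2^{-i} \in \bit(i)$, and moreover $x - \epsilon \notin \bit(i-1)$ (a point in $\bit(i-1)$ plus $2^{-i}$ cannot itself be in $\bit(i-1)$ unless it coincides with a $\bit(i-1)$ point, which would contradict $x$ being the \emph{next} one).

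The argument then splits into the three cases of \Cref{algo:1}. In the first two cases (lines 8--11), we shift $s_k$ (and possibly $s_{k+1}$) to the left to $x - \epsilon$ (resp.\ $x - \epsilon/2$). Since $k$ is the leftmost candidate not in $\bit(i-1)$, all candidates $1, \ldots, k-1$ lie in $\bit(i-1)$, so all their reflection-boundaries in $B(s_j, \points_v)$ and $B^2(s_j, \points_v)$ have bit complexity at most $i$; in particular none of these boundaries lies strictly inside the open interval $(s_k, x) \supseteq (x-\epsilon, s_k)$ except possibly at $x-\epsilon$ itself, but $x - \epsilon \notin \bit(i-1)$ rules that out as well since a reflection of a $\bit(i-1)$ point in an integer voter position stays in $\bit(i-1)$. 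Hence by \Cref{lm:3} (applied in the leftward direction via \Cref{cly:symmetry}), shifting $s_k$ leftward past no boundary preserves the PNE property, and after the shift $s_k \in \bit(i)$ while the previously-counted $i-1$ candidates remain in $\bit(i-1) \subseteq \bit(i)$, giving at least $i$ candidates in $\bit(i)$.

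In the third case (line 12), $s_k > x - \epsilon$, so the target $\bit(i)$-position lies between two consecutive $\bit(i-1)$-points with $s_k$ already past $x-\epsilon$; here we invoke \Cref{algo:2}. By \Cref{lm:algo4-invariant}, the loop invariant $s_j \in (x-\epsilon, x)$ with $x-\epsilon \in \bit(i)$ is maintained each iteration, and by \Cref{lm:BB2} we may ignore single-reflection ($B$) boundaries and only track $B^2$-boundaries from candidates $j+1$ and $j+2$ to the right. Each time the loop advances $j$ (by $1$ or $2$), the new candidate is strictly to the right and, by the reasoning in \Cref{lm:algo4-invariant}, again sits in $(x'-\epsilon, x')$ for the updated $x' \in \bit(i-1)$; since $j$ strictly increases and is bounded by $m$, the loop terminates with $z = \text{Null}$. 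Then \Cref{lm:algo4-correctness-1} guarantees that the final assignment ($s_{j-1} = x-\epsilon$ or $s_j = x-\epsilon$) yields an equilibrium, and the candidate placed at $x - \epsilon \in \bit(i)$ is a newly-counted one (it was not in $\bit(i-1)$ before, being either $s_k$ or a candidate the loop walked to, all of which started outside $\bit(i-1)$), so again at least $i$ candidates lie in $\bit(i)$. This closes the induction.

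The main obstacle is the bookkeeping in the third case: one must verify that each candidate \Cref{algo:2} "jumps to" was genuinely not in $\bit(i-1)$ (so it counts as progress and does not double-count), and that walking leftward through a chain of $B^2$-boundaries cannot cycle or overshoot a candidate to the left of the intended region. The key structural fact making this work is \Cref{lm:BB2} — that a single reflection boundary of $s_{i+1}$ in $(s_{i-1}, s_i)$ is always "shadowed" by a $B^2$ boundary of $s_{i+2}$ no further left — which is what lets \Cref{algo:2} safely ignore $B$-boundaries; combined with \Cref{lm:algo4-invariant}'s invariant that each jumped-to candidate lands in an interval of the form $(x'-\epsilon, x')$ with $x' \in \bit(i-1)$ and $x'-\epsilon \notin \bit(i-1)$, this forces each such candidate to have been outside $\bit(i-1)$. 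Everything else is a direct application of \Cref{lm:3}, \Cref{cly:symmetry}, and \Cref{lm:algo4-correctness-1}.
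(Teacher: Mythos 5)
Your overall structure -- induction on iterations, handling the three branches of Algorithm~\ref{algo:1} separately, and delegating the third branch to Lemmas~\ref{lm:BB2}, \ref{lm:algo4-invariant}, and~\ref{lm:algo4-correctness-1} -- is the same as the paper's, and your treatment of the Shift-Left case matches the paper's argument. However, there is a direction error in your handling of the first two branches (lines 8--11). In those branches the guard is $s_k \le x-\epsilon$, so moving candidate $k$ to $x-\epsilon$ is a \emph{rightward} shift (this is why the routine is called Shift-Right), and the correct tool is Lemma~\ref{lm:3} applied directly, whose hypothesis is exactly the condition you verify: that the boundaries $B(s_{k-1},\points_v)\cup B^2(s_{k-1},\points_v)\cup B^2(s_{k-2},\points_v)\cup\points_v$ coming from candidates to the \emph{left} of $k$ avoid the interval $[s_k,x-\epsilon]$ (this is the content of the paper's Claims~\ref{clm:subset}, \ref{clm:e-small}, and~\ref{clm:e-exact}). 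You instead describe the move as leftward and invoke Corollary~\ref{cly:symmetry}; but the hypothesis of that corollary concerns the reflection sets of $s_{k+1}$ and $s_{k+2}$, the candidates to the \emph{right}, which you never check and which need not lie in $\bit(i-1)$. So as written the lemma you cite does not match the hypotheses you establish. The related inclusion ``$(s_k,x)\supseteq(x-\epsilon,s_k)$'' is also garbled: when $s_k\le x-\epsilon$ the interval $(x-\epsilon,s_k)$ is empty, and the relevant interval is $(s_k,x)$ itself. The fix is purely to relabel: keep your verification that the left-candidates' boundaries are all in $\bit(i-1)$ and hence absent from $(s_k,x)$, and conclude via Lemma~\ref{lm:3} (rightward shift) rather than Corollary~\ref{cly:symmetry}; Corollary~\ref{cly:symmetry} is needed only inside Algorithm~\ref{algo:2}, where the target $x-\epsilon$ lies to the left of the current candidate. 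With that correction your proof coincides with the paper's.
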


\begin{proof}[Proof sketch]
    The core of the proof for Theorem \ref{thm:correctness} relies on Lemma \ref{lm:3} and Corollary \ref{cly:symmetry}. We establish the theorem by induction on the number of iterations of Algorithm \ref{algo:1}. For $ i = 0 $ iterations, the theorem holds trivially. Now, assume that after $ i = t $ iterations, there are at least $ t $ candidates located at positions within $ \bit(t) $. In the $ t+1 $-th iteration, Algorithm \ref{algo:1} first identifies the leftmost candidate (denoted $ k $) whose position is not within $ \bit(t) $. Let $ x = \min(\{ y \in \bit(t) \mid y > s_k \}) $ be the closest position to the right of $ s_k $ within $ \bit(t) $.

Consider the set $ A^L = B(s_{k-1}, \points_v) \cup B^2(s_{k-1}, \points_v) \cup B^2(s_{k-2}, \points_v) \cup \points_v $. By definition of $ k $, for all candidates $ j < k $, $ s_j\in \bit(t) $, so all positions in $ A^L $ are within $ \bit(t) $. If there exists a position in $ (s_k, x) $ within $ \bit(t+1) $, we can move candidate $ k $ to that position, and by Lemma \ref{lm:3}, this will form an equilibrium. The closest position to the left of $ x $ within $ \bit(t+1) $ is $ x - \epsilon $ where $ \epsilon = 1/2^{t+1} $. If $ (x - \epsilon) \in (s_k, x) $, Algorithm \ref{algo:1} shifts candidate $ k $ to $ x - \epsilon $, and by Lemma \ref{lm:3}, the new configuration is an equilibrium.

The issue arises if $ (x - \epsilon) < s_k $ because Lemma \ref{lm:3} only assists with movement to the right. Therefore, Algorithm \ref{algo:1} calls Algorithm \ref{algo:2} as a subroutine and Algorithm \ref{algo:2} uses Corollary \ref{cly:symmetry} to facilitate the movement of a candidate to the left. Initially, Algorithm \ref{algo:2} sets $ j = k $. Define $ A^R = B^2(s_{j+1}, \points_v) \cup B^2(s_{j+2}, \points_v) $. If $ A^R \cap [x - \epsilon, s_j] = \emptyset $, Algorithm \ref{algo:2} shifts candidate $ j $ to $ x - \epsilon $ (line 8 of Algorithm \ref{algo:2}). By Lemma \ref{lm:BB2}, we can conclude that if $ A^R \cap [x - \epsilon, s_j] = \emptyset $, then $ (A^R \cup B(s_{j+1}, \points_v)) \cap [x - \epsilon, s_j] = \emptyset $. Therefore, by Corollary \ref{cly:symmetry}, we confirm that this configuration is an equilibrium.

When $ A^R \cap [x - \epsilon, s_j] \neq \emptyset $, however, we cannot move candidate $ j $ to $ x - \epsilon $ since it may break the equilibrium. In that scenario, Algorithm \ref{algo:2} proceeds iteratively, incrementing $ j $ by either \emph{one} or \emph{two} (moving to candidates $ j+1 $ or $ j+2 $) until $ A^R \cap [x - \epsilon, s_j] = \emptyset $. Since there are $ m $ candidates, for $ j = m $, $ A^R = \emptyset $, ensuring that Algorithm \ref{algo:2} eventually terminates.

Formally, using Lemma \ref{lm:algo4-invariant} and \ref{lm:algo4-correctness-1}, we can show that there exists some $ x' \in \bit(t) $ such that either $ s_{j+1} \in [x' - \epsilon, x') $ or $ s_{j+2} \in [x' - \epsilon, x') $. Algorithm \ref{algo:2} then updates $ x $ to $ x' $ and adjusts $ j = j+1 $ or $ j+2 $ accordingly. This process repeats until, for some $ j $, $ A^R \cap [x - \epsilon, s_j] = \emptyset $. At that point, Algorithm \ref{algo:2} moves candidate $ j $ to $ x - \epsilon $, establishing an equilibrium configuration (by Corollary \ref{cly:symmetry}). By Lemma \ref{lm:algo4-invariant}, $ (x - \epsilon) \in \bit(t+1) $, so after Algorithm \ref{algo:2} terminates, there are at least $ t+1 $ candidates positioned within $ \bit(t+1) $. With the termination of Algorithm \ref{algo:2}, Algorithm \ref{algo:1} also terminates, thus supporting the inductive hypothesis for iteration $ t+1 $. 
\end{proof}

\paragraph*{Conclusion.}
Our work gives algorithms for computing equilibria if they exist in the basic Hotelling-Downs model. Surprisingly, in all of the cases, our algorithms give either an exact equilibrium or a strategy profile that is arbitrarily close. In many cases, including if the set $\points_c$ is discrete, or if $\points_v$ is discrete and $m$ is small, our algorithms are efficient. There are however many open questions, particularly in the last model. Given the complexity of equilibria, which may require $2^m$ bits to represent, an obvious question is if there are simpler approximate equilibria which can be obtained in polynomial time.

We think our work is a first step towards a better computational understanding of models of spatial competition. While numerous variants have been analysed, there are surprisingly very few results on computing equilibria. As a first step, an interesting extension would be to models with costly voting, when voters may abstain from voting if no candidate is near their ideological position.




\bibliographystyle{plain}
\bibliography{voting}

\begin{thebibliography}{10}

\bibitem{Adams12}
James Adams.
\newblock Causes and electoral consequences of party policy shifts in multiparty elections: Theoretical results and empirical evidence.
\newblock {\em Annual Review of Political Science}, 15(Volume 15, 2012):401--419, 2012.

\bibitem{AdamsI00}
James Adams and Samuel~Merrill Iii.
\newblock Spatial models of candidate competition and the 1988 french presidential election: Are presidential candidates vote-maximizers?
\newblock {\em Journal of Politics}, 62(3):729--756, 2000.

\bibitem{bhaskar2024nearly}
Umang Bhaskar and Soumyajit Pyne.
\newblock Nearly tight bounds on approximate equilibria in spatial competition on the line.
\newblock {\em arXiv preprint arXiv:2405.04696}, 2024.

\bibitem{CallanderW06}
Steven Callander and Catherine~H Wilson.
\newblock Context-dependent voting.
\newblock {\em Quarterly Journal of Political Science}, 1(3):227--254, 2006.

\bibitem{Downs57}
Anthony Downs.
\newblock An economic theory of political action in a democracy.
\newblock {\em Journal of political economy}, 65(2):135--150, 1957.

\bibitem{DreznerE24}
Zvi Drezner and HA~Eiselt.
\newblock Competitive location models: A review.
\newblock {\em European Journal of Operational Research}, 316(1):5--18, 2024.

\bibitem{Durr}
Christoph D{\"u}rr and Nguyen~Kim Thang.
\newblock Nash equilibria in voronoi games on graphs.
\newblock In Lars Arge, Michael Hoffmann, and Emo Welzl, editors, {\em Algorithms -- ESA 2007}, pages 17--28, Berlin, Heidelberg, 2007. Springer Berlin Heidelberg.

\bibitem{EatonL75}
B~Curtis Eaton and Richard~G Lipsey.
\newblock The principle of minimum differentiation reconsidered: Some new developments in the theory of spatial competition.
\newblock {\em The Review of Economic Studies}, 42(1):27--49, 1975.

\bibitem{enomoto2018pure}
Hikoe Enomoto, Masahiro Hachimori, Shun Nakamura, Maiko Shigeno, Yuya Tanaka, and Masaaki Tsugami.
\newblock Pure-strategy nash equilibria on competitive diffusion games.
\newblock {\em Discrete Applied Mathematics}, 244:1--19, 2018.

\bibitem{FedersenSW90}
Timothy~J Feddersen, Itai Sened, and Stephen~G Wright.
\newblock Rational voting and candidate entry under plurality rule.
\newblock {\em American Journal of Political Science}, pages 1005--1016, 1990.

\bibitem{fournier2019general}
Ga{\"e}tan Fournier.
\newblock General distribution of consumers in pure hotelling games.
\newblock {\em International Journal of Game Theory}, 48(1):33--59, 2019.

\bibitem{Hotelling29}
Harold Hotelling.
\newblock Stability in competition.
\newblock {\em The Economic Journal}, 39(153):41--57, 1929.

\bibitem{JonesSF22}
Matthew~I Jones, Antonio~D Sirianni, and Feng Fu.
\newblock Polarization, abstention, and the median voter theorem.
\newblock {\em Humanities and Social Sciences Communications}, 9(1):1--12, 2022.

\bibitem{LeP14}
Van~Bang Le and Florian Pfender.
\newblock Complexity results for rainbow matchings.
\newblock {\em Theoretical Computer Science}, 524:27--33, 2014.

\bibitem{mavronicolas2008voronoi}
Marios Mavronicolas, Burkhard Monien, Vicky~G Papadopoulou, and Florian Schoppmann.
\newblock Voronoi games on cycle graphs.
\newblock In {\em International Symposium on Mathematical Foundations of Computer Science}, pages 503--514. Springer, 2008.

\bibitem{MisraN22}
Neeldhara Misra and Debanuj Nayak.
\newblock On fair division with binary valuations respecting social networks.
\newblock In {\em Algorithms and Discrete Applied Mathematics - 8th International Conference, {CALDAM} 2022, Puducherry, India, February 10-12, 2022, Proceedings}, volume 13179 of {\em Lecture Notes in Computer Science}, pages 265--278. Springer, 2022.

\bibitem{Osborne}
Martin~J. Osborne.
\newblock Spatial models of political competition under plurality rule: A survey of some explanations of the number of candidates and the positions they take.
\newblock {\em The Canadian Journal of Economics / Revue Canadienne d'Economique}, 28(2):261--301, 1995.

\bibitem{Palfrey84}
Thomas~R. Palfrey.
\newblock {Spatial Equilibrium with Entry}.
\newblock {\em The Review of Economic Studies}, 51(1):139--156, 1984.

\bibitem{Plott67}
Charles~R Plott.
\newblock A notion of equilibrium and its possibility under majority rule.
\newblock {\em The American Economic Review}, 57(4):787--806, 1967.

\bibitem{Ronayne18}
David Ronayne.
\newblock Extreme idealism and equilibrium in the hotelling--downs model of political competition.
\newblock {\em Public Choice}, 176(3):389--403, 2018.

\bibitem{SenguptaS08}
Abhijit Sengupta and Kunal Sengupta.
\newblock A {H}otelling-{D}owns model of electoral competition with the option to quit.
\newblock {\em Games and Economic Behavior}, 62(2):661--674, 2008.

\bibitem{Smithies41}
Arthur Smithies.
\newblock Optimum location in spatial competition.
\newblock {\em Journal of Political Economy}, 49(3):423--439, 1941.

\end{thebibliography}
\newpage
\appendix
\section*{Appendix}
\section{Proofs from~\Cref{sec:continuous-candidates}}
\subsection{Proof of~\Cref{lm:1}.}

We define the left and right utility of a candidate and a new candidate as follows.

\begin{align*}
    \util^L(s,\strat) &= \displaystyle \sum_{p \in \points \text{ and } p<s} \frac{w(p)}{|\near(p,\strat)|} \times \mone_{s \in \near(p,\strat)} \,\\
    \util^R(s,\strat) &= \displaystyle \sum_{p \in \points \text{ and } p>s} \frac{w(p)}{|\near(p,\strat)|} \times \mone_{s \in \near(p,\strat)} \,\\
    \utilout^L(s',\strat) &= \util^L(s', \strat \cup \{s'\})\\
    \utilout^R(s',\strat) &= \util^R(s', \strat \cup \{s'\})
\end{align*}
Note that $\util(s,\strat)=\util^L(s,\strat)+\util^R(s,\strat)+w(s)$ and $\util(s,\strat)=\utilout(s,\strat\setminus\{s\})$.

\lmone*

\begin{proof}
Throughout the proof, we denote the strategy of candidate $j\in[m]$ in $\strat'$ as $s_j'$. Thus, $s_i'=x$ and $s_j'=s_j$, for $j\neq i$. Clearly, $\forall j'\in\{1,...,i-2,i+2,...,m\}$, $\util(s_{j'},\strat)=\util(s_{j'}',\strat')$.
    \paragraph{Case I:} Suppose $j=i-1$. Clearly, $\util^L(s_{i-1},\strat)=\util^L(s_{i-1}',\strat')$. We know that, $B(s_{i-1},\points_v)\cap [s_i,x]=\emptyset$. Therefore, $\util^R(s_{i-1},\strat)=\util^R(s_{i-1}',\strat')$. We conclude that $\util(s_{i-1},\strat)=\util(s_{i-1}',\strat')$. 
    \paragraph{Case II:} Suppose $j=i+1$. As $x>s_i$, the utility of candidate $i+1$ can only decrease in $\strat'$. From Case I, we know that the utility of the candidate $i-1$ does not change in $\strat'$. Therefore, $\util(s_{i+1},\strat)>\util(s_{i+1}',\strat')$ implies $\util(s_{i},\strat)<\util(s_{i}',\strat')$ contradicting the fact that $\strat$ is a PNE.

    As for all candidates $j'\neq i$, $\util(s_{j'},\strat)=\util(s_{j'}',\strat')$, it follows that $\util(s_i,\strat)=\util(s_i',\strat')$.
\end{proof}

\subsection{Proof of Equilibrium in Figure~\ref{fig:shifting1}.}

\begin{prop}
~\Cref{fig:shifting1} is an equilibrium.
\end{prop}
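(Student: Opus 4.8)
The plan is to verify the two conditions of the equilibrium characterization of \Cref{prop:equilibrium-characterization} (whose statement and proof carry over verbatim to a continuous $\points_c$, reading $\intr$, $\intu$ over the interval $(p,q)\cap\points_c$): for every candidate $j$, (i) $s_j$ is a best response inside her own interval $(s_{j-1},s_{j+1})$, and (ii) $\min_j \util(s_j,\strat)\ge\max_k\intu(s_k,s_{k+1})$. The first step is to read off the vote shares from the cut points $\mean(s_j,s_{j+1})$: with $\strat=\{0,2,9,16,18\}$ these are $1,\ 5.5,\ 12.5,\ 17$, so candidate $1$ collects the $5$ voters at $a=0$, candidate $2$ the $5$ voters at $b=2$, candidate $3$ the $2+2$ voters at $d=7$ and $h=11$, candidate $4$ the $5$ voters at $j=16$, and candidate $5$ the $5$ voters at $k=18$. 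This gives the utility vector $(5,5,4,5,5)$ claimed in the text, so in particular $\min_j\util(s_j,\strat)=4$.

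For condition (i) I would proceed candidate by candidate, in each case describing the best response inside her interval by tracking which voter clusters a deviating position $z$ can capture; the governing quantities are the reflections of the neighbours in voter positions, exactly as in \Cref{example:threecandidates}. For candidate $1$ on $[0,2)$ and candidate $5$ on $(16,18]$, the only reachable voters are those at $a$ (resp.\ $k$), worth $5$, which the current position already wins. For candidate $2$ on $(0,9)$: a position in $(0,c)=(0,4)$ — left of the reflection $c=4$ of $s_1=0$ in $b=2$ — wins the $5$ voters at $b$, a position in $(5,9)$ wins only the $2$ voters at $d$, and the two ranges are disjoint, so the best-response value is $5$, attained at $s_2=2$. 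For candidate $3$ on $(2,16)$: the $2$ voters at $d=7$ are won exactly on $(2,12)$ and the $2$ voters at $h=11$ exactly on $(6,16)$, so both are won precisely on $(6,12)\ni 9$, value $4$, attained at $s_3=9$. For candidate $4$ on $(9,18)$: the $2$ voters at $h$ are won on $(9,13)$ and the $5$ voters at $j$ on $(14,18)$, disjoint, so the best-response value is $5$, attained at $s_4=16$. Hence each $s_j$ lies in $\intr(s_{j-1},s_{j+1})$.

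For condition (ii) I would bound $\intu$ on each of the four nonempty gaps $(0,2)$, $(2,9)$, $(9,16)$, $(16,18)$; the outer ``gaps'' $(-\infty,0)$ and $(18,+\infty)$ are empty since $\points_c=[0,R]=[0,18]$. The gaps $(0,2)$ and $(16,18)$ contain no voters, so $\intu=0$ on each. A newcomer in $(2,9)$ between candidates at $2$ and $9$ can reach only the $2$ voters at $d=7$ (the clusters at $a,b$ stay closer to the candidate at $2$, those at $h$ and beyond closer to the candidate at $9$), so $\intu(2,9)=2$; symmetrically $\intu(9,16)=2$ (only the $2$ voters at $h$ are reachable). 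Thus $\max_k\intu(s_k,s_{k+1})=2<4=\min_j\util(s_j,\strat)$, condition (ii) holds, and \Cref{prop:equilibrium-characterization} yields that $\strat$ is an equilibrium.

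There is no real obstacle: the argument is a finite case analysis, so the only ``hard part'' is keeping the bookkeeping honest — for each candidate and each gap, correctly determining, as a function of the deviating position $z$, which voter locations have $z$ as their unique nearest point. The one subtlety to state explicitly is that at the endpoints of the ranges above a deviator is equidistant from two candidates and the affected voters split their vote; since this only \emph{lowers} the deviator's utility, it never tightens the bounds, and the strict open intervals ($z\in(0,4)$, $z\in(6,12)$, etc.) are exactly what the verification needs.
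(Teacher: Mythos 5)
Your proof is correct and takes essentially the same approach as the paper: a direct, case-by-case verification that no candidate can profitably deviate, with the utility vector $(5,5,4,5,5)$ and the observation that the only contestable votes outside each candidate's current haul are the $2+2$ voters at $d$ and $h$. The paper's own proof is a shorter informal version of the same argument; your routing through the characterization of \Cref{prop:equilibrium-characterization} and the explicit reflection intervals just makes the bookkeeping more systematic.
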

\begin{proof}
Figure \ref{fig:shifting1} represents a PNE with 5 candidates and 24 voters. The red circles indicate the voters' locations, with the numbers below each circle representing the number of voters at that location. The blue numbers correspond to the lengths of the respective line segments. 

The figure shows an equilibrium strategy profile $\strat = \{0,2,9,16,18\}$ with the candidates shown as black triangles. We verify now that this is indeed an equilibrium, and in fact is almost unique --- at any equilibrium

\begin{enumerate}
    \item there must be candidates at positions $a$, $b$, $j$, $k$,
    \item there must be a candidate in the interval $[d,h]$.
\end{enumerate}

Firstly, to verify that the given profile is an equilibrium, consider candidates 1, 2, 4, and 5. Note that the voters at $a$, $b$, $j$, and $k$ have these candidates respectively at distance 0, so these will not vote for another candidate unless the candidate at their position deviates. Thus candidates 1, 2, 4, and 5 can at most obtain the votes of the voters at $d$ and $h$. But to get these votes, it is clear that they must give up their existing votes, which would reduce their vote share.

Given the positions of candidates 1, 2, 4, and 5, candidate 3 is then maximizing her vote share, and hence this is an equilibrium.

Now to see that the equilibrium is almost unique, note that the instance has 24 voters and 5 candidates, hence in any strategy profile, some candidate must receive less than 5 votes. If any of the positions $a$, $b$, $j$, or $k$ are unoccupied, this candidate can deviate to the unoccupied position and increase her vote share. Hence in any equilibrium, these positions must be occupied by a candidate.  

Now let us consider what happens as we shift candidate 3 to the right. We will show that if $s_3$ crosses the dashed line at $g$, the strategy profile is no longer an equilibrium. This is because as mentioned earlier in the explanation, if candidate 3 shifts to a position $g + \delta$, her distance from the voters at $d$ --- that were voting for her in $\strat$ --- is $3+\delta$. But this is not an equilibrium, since now candidate 2 can increase her vote share. Candidate 2 can shift to the position $c- \delta/2$. Then her distance to $b$ is $2 - \delta/2$, and her distance to $d$ is $3 + \delta/2$. She is then the closest candidate to both these positions, and her vote share increases from 5 to 7.

\end{proof}

\subsection{Proof of~\Cref{lm:3}.}

We first show the following.

\begin{lemma}\label{lm:2}
     For $m>2$, let $\strat$ be a PNE. Consider a candidate $i\in[m]$. Let $x\in(s_i,s_{i+1})$ and $\strat' = \{s_1,...,s_{i-1},x,s_{i+1},...,s_m\}$ be such that $\util(s,\strat)=\util(s',\strat')$. Then in $\strat'$, no candidate can improve their utility by choosing a strategy $y\in(x,1]$.
\end{lemma}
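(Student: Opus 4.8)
The plan is a proof by contradiction that turns a profitable rightward deviation in $\strat'$ into a profitable deviation in the original equilibrium $\strat$. Suppose some candidate $j$ could strictly gain in $\strat'$ by moving to a point $y\in(x,1]$; write $s_j'$ for $j$'s position in $\strat'$, so $s_j'=x$ when $j=i$ and $s_j'=s_j$ otherwise. Since the hypothesis says every candidate's utility is unchanged between $\strat$ and $\strat'$, the supposed gain reads $\utilout(y,\strat'_{-j}) > \util(s_j',\strat') = \util(s_j,\strat)$. One first checks that $y$ is also a legal deviation for $j$ in $\strat$: the positions in $\strat$ and $\strat'$ coincide except that candidate $i$ sits at $s_i$ in $\strat$ and at $x$ in $\strat'$, with $s_i<x<y$, so $y$ still avoids every position of $\strat_{-j}$, and any separation constraint met at $x$ is met a fortiori at $s_i$. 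As $\strat$ is a PNE we get $\utilout(y,\strat_{-j}) \le \util(s_j,\strat)$, and chaining the two inequalities yields $\utilout(y,\strat'_{-j}) > \utilout(y,\strat_{-j})$. It therefore suffices to prove the reverse inequality $\utilout(y,\strat'_{-j}) \le \utilout(y,\strat_{-j})$.

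When $j=i$ this is immediate, in fact with room to spare: $\strat_{-i}=\strat'_{-i}$ as sets of positions, so $\utilout(y,\strat_{-i})=\utilout(y,\strat'_{-i})$, and $\utilout(y,\strat_{-i})\le\util(s_i,\strat)=\util(x,\strat')$ already contradicts the gain. So assume $j\neq i$. Then $\strat_{-j}$ and $\strat'_{-j}$ differ only in that candidate $i$ occupies $s_i$ in the former and $x$ in the latter, with $s_i<x<y$. The geometric core is the monotonicity observation that the utility of a fresh candidate placed at $y$ can only decrease when a competitor strictly to her left is shifted rightward while staying to her left. Indeed, her winning interval has right endpoint equal to the midpoint of $y$ and the nearest candidate on the right, which is unchanged since candidate $i$ stays left of $y$ in both profiles; and left endpoint $\mu(\ell,y)$ where $\ell$ is the nearest candidate position to the left of $y$. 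Passing from $\strat_{-j}$ to $\strat'_{-j}$ replaces $\ell$ by $\max\{\ell,x\}\ge\ell$ --- this uses $x<s_{i+1}$, so every position of $\strat_{-j}$ exceeding $s_i$ already exceeds $x$, whence $\ell$ moves only if the old nearest-left was $s_i$ itself, in which case it moves to $x>s_i$. Writing the candidate's left votes as $F(\ell,y)-\tfrac12 F(\ell,\ell)$ and invoking the elementary fact $F(\ell,\ell')\ge\tfrac12 F(\ell,\ell)+\tfrac12 F(\ell',\ell')$ for $\ell\le\ell'$ (which cleanly absorbs the half-atom carried by the moving endpoint), these left votes weakly decrease while the right votes are untouched; hence $\utilout(y,\strat'_{-j})\le\utilout(y,\strat_{-j})$, the contradiction sought.

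The only slightly delicate point I anticipate is the atom bookkeeping in the last computation --- the endpoint of a winning interval carries a half-atom, so one cannot simply compare cumulative masses --- but the displayed inequality on $F$ settles it in one line, and the rest is a short case split. I also want to flag that the per-candidate form of the hypothesis is genuinely used and not cosmetic: it is invoked exactly for $j\in\{i-1,i+1\}$, whose utilities really can drop when candidate $i$ is pushed toward them, whereas every candidate outside $\{i-1,i,i+1\}$ has unchanged utility for free, since utility depends only on a candidate's two immediate neighbors.
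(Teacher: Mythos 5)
Your proof is correct and follows essentially the same route as the paper's: the right out-utility at $y$ is unchanged while the left out-utility weakly decreases when candidate $i$ shifts right past no one, so a profitable deviation to $y$ in $\strat'$ would already have been profitable in $\strat$. The differences are cosmetic --- you fold the paper's case split ($j=i$, $j\neq i+1$, $j=i+1$, and $y$ before versus beyond the next candidate) into a single uniform argument on $\strat_{-j}$, and you make the half-atom bookkeeping at the winning-interval boundary explicit where the paper leaves it implicit.
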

\begin{proof}
    Observe that candidate $i$ has no better response in the interval $(x,1]$, since $\strat$ is a PNE. We divide the proof into two cases.
    \paragraph{Case I:} Consider a candidate $j\neq i+1$. Deviating to location $y\in(s_{i+1},1]$, cannot increase $j$'s utility. Note that for all $y\in(x,s_{i+1})$, $\utilout^L(y,\strat')\le\utilout^L(y,\strat)$, and $\utilout^R(y,\strat')=\utilout^R(y,\strat)$. Therefore, $\utilout(y,\strat')\le\utilout(y,\strat)$. We conclude that candidate $j\neq i+1$ has no better deviation in the interval $(x,s_{i+1})$.
    \paragraph{Case II:} Consider the candidate $j=i+1$. Deviating to location $y\in(s_{i+2},1]$, cannot increase $(i+1)$'s utility. Note that for all $y\in(x,s_{i+2})$, $\utilout^L(y,\strat'\setminus\{s_j\})\le\utilout^L(y,\strat\setminus\{s_j\})$, and $\utilout^R(y,\strat'\setminus\{s_j\})=\utilout^R(y,\strat\setminus\{s_j\})$. Therefore, $\utilout(y,\strat'\setminus\{s_j\})\le\utilout(y,\strat\setminus\{s_j\})$. We conclude that candidate $i+1$ has no better deviation in the interval $(x,s_{i+2})$.
    
\end{proof}

\shifting*

\begin{proof}
    Throughout the proof, we denote the strategy of candidate $j\in[m]$ in $\strat'$ as $s_j'$. Thus, $s_i'=x$ and $s_j'=s_j$, for all $j\neq i$. For the strategy profile $\strat'$, it is easy to see that there is no better deviation for any candidate in $[0,s_{i-2}')\cup (s_{i+1}',1]$. Also from Lemma \ref{lm:1} and \ref{lm:2}, we conclude that no candidate has a better deviation in the interval $(x,s_{i+1}')$. For the sake of contradiction suppose candidate $k$ has a better deviation $y \in(s_{i-2}',s_i']$. It is easy to see that $k\neq i$ since $i$ could have deviated to $y$ in $\strat$ also. We divide our proof into two cases. Let $\lambda = x-s_i$.
\paragraph{Case I:}Suppose $k\ne i-1$. Then, $y\in(s_{i-1}',s_i')$ otherwise $k$ could have deviated to $y$ in $\strat$ also. Consider the set, $Z = \{q\in B(s_i,\points_v) ~|~y\le q\le(y+\lambda)\}$. Therefore, $Z$ represents the set of reflections of $s_i$ with respect to $\points_v$ that are initially to the right or at $y$, but after candidate $i$ moves to $x$, these reflections shift to the left or at $y$. For a better understanding, please see the Figure \ref{fig:z-max}.
        \begin{clm}\label{clm:2}
            $Z$ is non-empty.
        \end{clm}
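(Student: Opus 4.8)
\textbf{Proof proposal for Claim~\ref{clm:2}.}
The plan is to compare, voter by voter, candidate $k$'s payoff from relocating to $y$ in the profile $\strat$ versus in the profile $\strat'$, and to read off a reflection of $s_i$ from the voters that account for the difference. First I would record what we may assume: since $B(s_{i-1},\points_v)\cup\points_v\subseteq A$ and $[s_i,x]\cap A=\emptyset$, Lemma~\ref{lm:1} applies and gives $\util(s_j',\strat')=\util(s_j,\strat)$ for every $j$; in particular $\util(s_k,\strat')=\util(s_k,\strat)$. Let $\hat\strat$ denote $\strat$ with candidate $k$ moved to $y$, and $\tilde\strat$ denote $\strat'$ with candidate $k$ moved to $y$. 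Because $\strat$ is a PNE, relocating $k$ to $y$ is not profitable in $\strat$, so $\util(y,\hat\strat)\le\util(s_k,\strat)$; by hypothesis it is profitable in $\strat'$, so $\util(y,\tilde\strat)>\util(s_k,\strat')=\util(s_k,\strat)$. Hence $\util(y,\tilde\strat)>\util(y,\hat\strat)$.

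Next I would reduce to the case $y\in(s_{i-1},s_i)$. If instead $y\in[s_i,x]$, then since $\points_v\cap[s_i,x]=\emptyset$, a candidate at $y$ in $\tilde\strat$ has left neighbor $s_{i-1}$ and right neighbor $s_i'=x$, and the interval $(s_i,\mu(y,x)]\subseteq[s_i,x]$ is voter-free, so $\util(y,\tilde\strat)=F(\mu(s_{i-1},y),s_i)\le F(\mu(s_{i-1},s_i),s_i)\le\intu(s_{i-1},s_i)$ (the last by taking a new candidate arbitrarily close to $s_i$ inside $(s_{i-1},s_i)$). By the second condition of Proposition~\ref{prop:equilibrium-characterization}, $\intu(s_{i-1},s_i)\le\min_j\util(s_j,\strat)\le\util(s_k,\strat)=\util(s_k,\strat')$, contradicting profitability. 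So $y\in(s_{i-1},s_i)$ strictly (and $k\neq i-1$ ensures $s_{i-1}$ is genuinely $y$'s left neighbor after $k$ departs).

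With $y\in(s_{i-1},s_i)$, in both $\hat\strat$ and $\tilde\strat$ the left neighbor of the candidate at $y$ is $s_{i-1}$, while its right neighbor is candidate $i$, located at $s_i$ in $\hat\strat$ and at $x$ in $\tilde\strat$. The left boundary $\mu(s_{i-1},y)$ is identical in the two profiles, so
\[
\util(y,\tilde\strat)-\util(y,\hat\strat)\;=\;\tfrac12 F\big(\mu(y,s_i),\mu(y,s_i)\big)+F_{\mathrm{open}}\big(\mu(y,s_i),\mu(y,x)\big)+\tfrac12 F\big(\mu(y,x),\mu(y,x)\big),
\]
which is the voter mass in $[\mu(y,s_i),\mu(y,x)]$ (with the endpoints entering with the appropriate split weights). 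Since this difference is strictly positive, there is a voter position $p\in\points_v$ with $\mu(y,s_i)\le p\le\mu(y,x)$, i.e.\ $(y+s_i)/2\le p\le(y+x)/2$. Multiplying by $2$ and subtracting $s_i$ gives $y\le 2p-s_i\le y+(x-s_i)=y+\lambda$. As $q:=2p-s_i=f_r(s_i,p)\in B(s_i,\points_v)$, we conclude $q\in Z$, so $Z\neq\emptyset$.

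I expect the main obstacle to be the bookkeeping in the two places where "which candidate is whose neighbor" must be pinned down: first, showing that relocating $k$ to a point in $[s_i,x]$ cannot be profitable (so that only $y\in(s_{i-1},s_i)$ survives), and second, justifying that $k$'s departure does not change the left neighbor of $y$ or the split at $\mu(s_{i-1},y)$ — both of which rely on $k\notin\{i-1,i\}$ and on the voter-free window $[s_i,x]$. Once those local facts are fixed, the reflection identity $y\le 2p-s_i\le y+\lambda$ is an elementary rearrangement.
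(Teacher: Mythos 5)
Your proof is correct and takes essentially the same route as the paper's: both rest on the observation that the change in the utility of a candidate at $y$ when candidate $i$ shifts from $s_i$ to $x$ is exactly the mass of voters $p$ with $\mu(y,s_i)\le p\le \mu(y,x)$, i.e.\ voters whose reflection $f_r(s_i,p)$ lands in $[y,y+\lambda]$. The paper phrases this as a contradiction from $Z=\emptyset$ while you argue the contrapositive directly, and your extra step disposing of $y\in[s_i,x]$ is harmless bookkeeping that the paper instead defers to the final paragraph of the proof of Lemma~\ref{lm:3}.
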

        \begin{proof}
            For the sake of contradiction suppose $Z$ is empty. Therefore, $\utilout^R(y,\strat)=\utilout^R(y,\strat')$. Moreover, candidate $i-1$ does not move in $\strat'$, so $\utilout^L(y,\strat)=\utilout^L(y,\strat')$. Therefore, $\utilout(y,\strat) = \utilout(y,\strat')$. However, we assumed that $y$ is a better response for some candidate $k$ in $\strat'$, while it is not a better response for any candidate in $\strat$ since $\strat$ is a PNE, leading to a contradiction.
        \end{proof}
        It is important to note that we did not utilize the set $B^2()$ in our proof. However, we will be using it for the next claim. Suppose $z_{\max}=\max(Z)$.
       \begin{clm}\label{clm:3}
            There is no $b\in[y,z_{\max}]$ such that, $b\in B(s_{i-1},\points_v)$.
        \end{clm}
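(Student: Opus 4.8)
The plan is to prove the claim by contradiction: assuming some $b \in [y, z_{\max}]$ lies in $B(s_{i-1}, \points_v)$, I will exhibit a point of $B^2(s_{i-1}, \points_v)$ inside the window $[s_i, x]$, contradicting the standing hypothesis $A \cap [s_i, x] = \emptyset$ of~\Cref{lm:3} (recall $B^2(s_{i-1},\points_v) \subseteq A$). So write $b = f_r(s_{i-1}, w) = 2w - s_{i-1}$ for some $w \in \points_v$, and recall that $z_{\max} \in Z$ means $z_{\max} = f_r(s_i, p) = 2p - s_i$ for some $p \in \points_v$ with $y \le z_{\max} \le y + \lambda$.

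The first step is to pin down the positions of $z_{\max}$ and of the voter $p$ relative to $s_i$. Since we are in Case~I we have $y \in (s_{i-1}, x)$, and $x = s_i + \lambda$, so $z_{\max} \le y + \lambda < s_i + 2\lambda$. If $z_{\max}$ were $\ge s_i$, then $p = (z_{\max} + s_i)/2$ would lie in $[s_i, x)$, which is impossible because $\points_v \subseteq A$ and $A \cap [s_i, x] = \emptyset$. Hence $z_{\max} < s_i$, and consequently $p = (z_{\max} + s_i)/2 \in (z_{\max}, s_i)$; in particular $b \le z_{\max} < p$. Also $b \ge y > s_{i-1}$, so $w = (b + s_{i-1})/2 > s_{i-1}$.

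The key step is to reflect $b$ one more time, now in $p$: put $z' := f_r\big(f_r(s_{i-1}, w),\, p\big) = f_r(b, p) = 2p - b$. Using $2p = z_{\max} + s_i$, this equals $z' = z_{\max} + s_i - b$, and the chain $y \le b \le z_{\max} \le y + \lambda$ gives $s_i \le z' \le s_i + \lambda = x$, so $z' \in [s_i, x]$. At the same time, the first-reflection point $w$ and the second-reflection point $p$ satisfy $s_{i-1} < w$ and $f_r(s_{i-1}, w) = b < p$, which is precisely the ``same-direction'' condition in the definition of $B^2(s_{i-1}, \points_v)$; hence $z' \in B^2(s_{i-1}, \points_v) \subseteq A$. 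Therefore $z' \in A \cap [s_i, x]$, the required contradiction, and no such $b$ exists.

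I expect the only nontrivial point to be the step $z_{\max} < s_i$: it is exactly what forces the single-reflection point $p$ to sit strictly between $b$ and $s_i$, which in turn is what makes $z'$ a genuine ``forward'' double reflection belonging to $B^2(s_{i-1}, \points_v)$ rather than an uncontrolled point. That step uses nothing beyond the fact that the hypothesis of~\Cref{lm:3} forbids voters in $[s_i, x]$; the remaining reflection identities are routine, and the boundary cases (e.g.\ $b = z_{\max}$, which gives $z' = s_i$) cause no difficulty since $s_i$ and $x$ themselves lie in $[s_i, x]$.
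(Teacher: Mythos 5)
Your proposal is correct and follows essentially the same argument as the paper: it reflects the hypothetical $b$ in the voter $p$ that defines $z_{\max}$ (so that $f_r(z_{\max},p)=s_i$) to produce a point of $B^2(s_{i-1},\points_v)$ inside $[s_i,x]$, contradicting the hypothesis of Lemma~\ref{lm:3}. The only difference is that you additionally verify the same-direction condition in the definition of $B^2$ (via establishing $z_{\max}<s_i$ and $w>s_{i-1}$), a detail the paper's proof leaves implicit.
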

        \begin{proof}
            We know that there exists a point $p\in \points_v$, such that, $f_r(z_{\max},p)=s_i$. For the sake of contradiction suppose $\exists b\in[y,z_{\max}]$ such that, $b\in B(s_{i-1},\points_v)$. From the definition of $z_{\max}$, we conclude that, $z_{\max}\ge y\ge (z_{\max}-\lambda)$. Note that $f_r(z_{\max}-\lambda,p)=x$. Therefore there exists a $b'\in[s_i,x]$ such that, $f_r(b,p)=b'$ which implies $b'\in B^2(s_{i-1},\points_v)$. This contradicts the definition of $x$.
        \end{proof} 
        Now we try to find a point $y'$ such that $\utilout(y,\strat')\le\utilout(y',\strat)$ which contradicts that $\strat$ is a PNE. Suppose $b=\min\left(\{q\in B(s_{i-1},\points_v)\mid q\ge y\}\right)$. Note that $b$ may not exist or may be greater than $s_i$. That's why suppose $b'=\min(s_i,b)$. Observe that $b'>z_{\max}$ (Claim \ref{clm:3}). Let $y'=((z_{\max}+b')/2)$. Observe the following inequalities.
        \begin{align}
            \utilout^R(y',\strat)&\ge\utilout^R(y,\strat')-\sum_{p\in\points_v\cap(y,y']}F(p)\label{ieq:1}\\
            \utilout^L(y',\strat)&\ge\utilout^L(y,\strat')+\sum_{p\in\points_v\cap[y,y')}F(p)\label{ieq:2}
        \end{align}
        Therefore we have,
        \begin{align*}
            \utilout(y',\strat)&=\utilout^L(y',\strat)+\utilout^R(y',\strat)+F(y')\\
            &\ge\utilout^L(y,\strat')+\sum_{p\in\points_v\cap{[y,y')}}F(p)+\utilout^R(y,\strat')-\sum_{p\in\points_v\cap{(y,y']}}F(p)+F(y')~~~~[\text{From (\ref{ieq:1}) and (\ref{ieq:2})}]\\
            &= \utilout^L(y,\strat')+\utilout^R(y,\strat')+F(y)\\
            &=\utilout(y,\strat')
        \end{align*}
        Hence we have $\utilout(y',\strat)\ge\utilout(y,\strat')$ which contradicts that $\strat$ is a PNE.
        

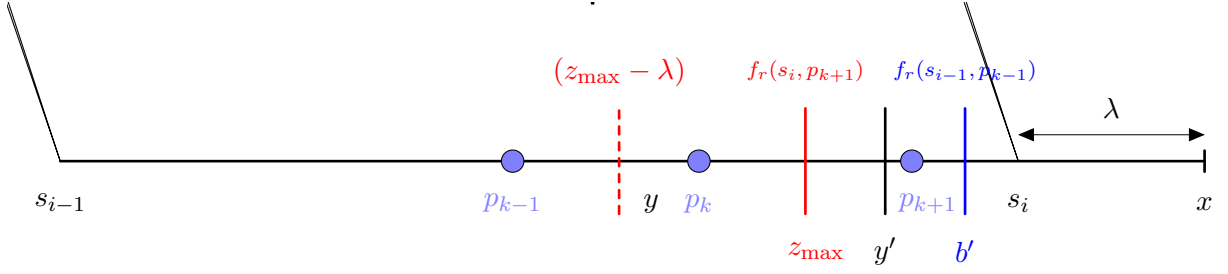
\begin{figure}
    \centering
    \begin{tikzpicture}[line cap=round,line join=round,>=triangle 45,x=1cm,y=1cm,scale=0.7]
    \clip(-1,-3) rectangle (23,3);

\draw [line width=1pt] (0,0) -- (21.5,0);
\draw[line width=1pt](21.5,-0.2)--(21.5,0.2);
\draw [fill=xdxdff] (8.5,0) circle (6pt);
\draw [fill=xdxdff] (12,0) circle (6pt);
\draw [fill=xdxdff] (16,0) circle (6pt);

\node[mark size=4pt,color=black] at (0,0) {\pgfuseplotmark{triangle*}};
\node[mark size=4pt,color=black] at (18,0) {\pgfuseplotmark{triangle*}};
\node[mark size=4pt,color=black] at (11,0) {\pgfuseplotmark{square*}};
\draw [line width=1pt, color = blue] (17,-1) -- (17,1);
\draw [line width=1pt, color = red] (14,-1) -- (14,1);
\draw [line width=1pt, color = black] (15.5,-1) -- (15.5,1);
\draw [line width=0.3pt, color = black,<->] (18,0.5)--(21.5,0.5);
\draw [dashed, line width=1pt, color = red] (10.5,-1) -- (10.5,1);


\draw[color=xdxdff] (8.5,-0.8) node {$p_{k-1}$};
\draw[color=xdxdff] (12,-0.8) node {$p_{k}$};
\draw[color=xdxdff] (16.3,-0.8) node {$p_{k+1}$};
\draw[color=black] (0,-0.8) node {$s_{i-1}$};
\draw[color=black] (18,-0.8) node {$s_{i}$};
\draw[color=black] (15.5,-1.7) node {$y'$};
\draw[color=blue] (17,-1.7) node {$b'$};
\draw[color=red] (14.2,-1.7) node {$z_{\max}$};
\draw[color=red] (10.5,1.7) node {$(z_{\max}-\lambda)$};
\draw[color=black] (11.1,-0.8) node {$y$};
\draw[color=black] (21.5,-0.8) node {$x$};
\draw[color=black] (19.75,1) node {$\lambda$};

\begin{scriptsize}
    \draw[color=blue] (17,1.7) node {$f_r(s_{i-1},p_{k-1})$};
\draw[color=red] (14,1.7) node {$f_r(s_i,p_{k+1})$};

\end{scriptsize}
\end{tikzpicture}
    \caption{Case I of the proof of Lemma \ref{lm:3}}
    \label{fig:z-max}
\end{figure}

    \paragraph{Case II:} Suppose $k=i-1$. Suppose $\strat_{-(i-1)} = \strat\setminus\{s_{i-1}\}$ and $\strat_{-(i-1)}' = \strat'\setminus\{s_{i-1}'\}$. Therefore $\util(s_{i-1},\strat)=\utilout(s_{i-1},\strat_{-(i-1)})$ and $\util(s_{i-1}',\strat')=\utilout(s_{i-1}',\strat_{-(i-1)}')$. From the assumption, $\utilout(y,\strat_{-(i-1)}')>\util(s_{i-1}',\strat')$. Consider the set, $Z = \{q\in B(s_i,\points_v) ~|~y\le q\le(y+\lambda)\}$. Similar to the proof of Claim \ref{clm:2}, we can show that $Z$ is non-empty. Let $z_{\max} =\max(Z)$.
       \begin{clm}\label{clm:4}
            There is no $b\in[y,z_{\max}]$ such that, $b\in B(s_{i-2},\points_v)$.
        \end{clm}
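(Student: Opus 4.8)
The plan is to prove Claim~\ref{clm:4} by essentially transcribing the proof of Claim~\ref{clm:3}, replacing $s_{i-1}$ by $s_{i-2}$ everywhere. The shift of index is structural rather than incidental: in Case~II the deviating candidate is $i-1$ itself, so we reason about the profile $\strat_{-(i-1)}$, and the candidate that sits immediately to the left of the deviation point $y$ is candidate $i-2$. Consequently the second reflections that can obstruct moving candidate $i$ to the right are those of $s_{i-2}$, which is exactly the reason $B^2(s_{i-2},\points_v)$ was included in the set $A$ in the statement of Lemma~\ref{lm:3}.

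In detail, I would argue as follows. Since $z_{\max}\in Z\subseteq B(s_i,\points_v)$, fix $p\in\points_v$ with $z_{\max}=f_r(s_i,p)$; then $f_r(z_{\max},p)=s_i$ and $f_r(z_{\max}-\lambda,p)=s_i+\lambda=x$. Assume for contradiction that some $b\in[y,z_{\max}]$ lies in $B(s_{i-2},\points_v)$, say $b=f_r(s_{i-2},q)$ with $q\in\points_v$. Since $y\in(s_{i-2}',s_i']$ and $s_{i-2}'=s_{i-2}$, we have $b\ge y>s_{i-2}$, so $q>s_{i-2}$ and the first reflection goes to the right. From $z_{\max}-\lambda\le y\le b\le z_{\max}$ and the fact that reflection in $p$ is order-reversing, $b':=f_r(b,p)$ lies between $f_r(z_{\max},p)=s_i$ and $f_r(z_{\max}-\lambda,p)=x$, i.e.\ $b'\in[s_i,x]$. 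Hence $b'=f_r(f_r(s_{i-2},q),p)\in B^2(s_{i-2},\points_v)\subseteq A$, contradicting the hypothesis $A\cap[s_i,x]=\emptyset$ of Lemma~\ref{lm:3} --- exactly as in Claim~\ref{clm:3}.

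The only step that requires genuine care --- and the one I expect to be the (small) main obstacle, since it is left implicit already in the proof of Claim~\ref{clm:3} --- is justifying that $b'$ actually belongs to $B^2(s_{i-2},\points_v)$ and is not merely an arbitrary double reflection of $s_{i-2}$: the definition of $B^2$ admits only a second reflection taken in the \emph{same direction} as the first, so one must check that $b<p$. This follows by locating $p$: from $z_{\max}=2p-s_i$ and $z_{\max}\le y+\lambda\le x+\lambda=s_i+2\lambda$ we get $p\le x$, and since $p\in\points_v\subseteq A$ while $A\cap[s_i,x]=\emptyset$ this forces $p<s_i$, whence $b\le z_{\max}=2p-s_i<p$, as required. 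With Claim~\ref{clm:4} in hand, Case~II concludes in parallel with Case~I: one takes the nearest reflection of $s_{i-2}$ lying at or to the right of $y$ (capped at $s_i$), uses it together with $z_{\max}$ to define an auxiliary point $y'$, and shows via the analogues of \eqref{ieq:1}--\eqref{ieq:2} that $\utilout(y',\strat_{-(i-1)})\ge\utilout(y,\strat_{-(i-1)}')$, contradicting that $\strat$ is a PNE.
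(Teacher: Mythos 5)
Your proof is correct and takes essentially the same route as the paper's: fix $p\in\points_v$ with $f_r(z_{\max},p)=s_i$, observe that any $b\in[y,z_{\max}]\subseteq[z_{\max}-\lambda,z_{\max}]$ reflects under $f_r(\cdot,p)$ into $[s_i,x]$, and conclude that $f_r(b,p)\in B^2(s_{i-2},\points_v)\cap[s_i,x]$, contradicting the choice of $x$ in Lemma~\ref{lm:3}. Your additional verification that both reflections point in the same direction (via $q>s_{i-2}$ and $b<p$, the latter from $p<s_i$), so that the double reflection genuinely lies in $B^2(s_{i-2},\points_v)$ as defined, is a detail the paper leaves implicit, and it checks out.
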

        \begin{proof}
            We know that there exists a point $p\in \points_v$, such that, $f_r(z_{\max},p)=s_i$. For the sake of contradiction suppose $\exists b\in[y,z_{\max}]$ such that, $b\in B(s_{i-2},\points_v)$. From the definition of $z_{\max}$, we conclude that, $z_{\max}\ge y\ge (z_{\max}-\lambda)$. Note that $f_r(z_{\max}-\lambda,p)=x$. Therefore there exists a $b'\in[s_i,x]$ such that, $f_r(b,p)=b'$ which implies $b'\in B^2(s_{i-2},\points_v)$. This contradicts the definition of $x$.
        \end{proof}

    Now we try to find a point $y'$ such that $\utilout(y,\strat'_{-(i-1)})\le\utilout(y',\strat_{-(i-1)})$ which contradicts that $\strat$ is a PNE. Suppose $b=\min\left(\{q\in B(s_{i-2},\points_v)\mid q\ge y\}\right)$. Note that $b$ may not exist or may be greater than $s_i$. That's why suppose $b'=\min(s_i,b)$. Observe that $b'>z_{\max}$. Let $y'=((z_{\max}+b')/2)$. Observe the following inequalities.
        \begin{align}
            \utilout^R(y',\strat_{-(i-1)})&\ge\utilout^R(y,\strat_{-(i-1)}')-\sum_{p\in\points_v\cap{(y,y']}}F(p)\label{ieq:3}\\
            \utilout^L(y',\strat_{-(i-1)})&\ge\utilout^L(y,\strat_{-(i-1)}')+\sum_{p\in\points_v\cap{[y,y')}}F(p)\label{ieq:4}
        \end{align}
        Therefore we have,
        \begin{align*}
            \utilout(y',\strat_{-(i-1)})&=\utilout^L(y',\strat_{-(i-1)})+\utilout^R(y',\strat_{-(i-1)})+F(y')\\
            &\ge\utilout^L(y,\strat_{-(i-1)}')+\sum_{p\in\points_v\cap{[y,y')}}F(p)+ (\utilout^R(y,\strat_{-(i-1)}')\\&\qquad\qquad\qquad\qquad-\sum_{p\in\points_v\cap{(y,y']}}F(p)+F(y'))~~~~[\text{From (\ref{ieq:3}) and (\ref{ieq:4})}]\\
            &= \utilout^L(y,\strat_{-(i-1)}')+\utilout^R(y,\strat_{-(i-1)}')+F(y)\\
            &=\utilout(y,\strat_{-(i-1)}')
        \end{align*}
        Hence we have $\utilout(y',\strat_{-(i-1)})\ge\utilout(y,\strat_{-(i-1)}')$ which contradicts that $\strat$ is a PNE.
    
    
    Lastly, for any point $y\in(s_i,x)$, $\utilout(y,\strat') = \utilout^L(y,\strat')\le\utilout(y',\strat)$ where $y'$ is any point between $s_i$ and $s_i$'s immediate left voter's location. Therefore no candidate exhibits a better deviation in $(s_i,x)$ which concludes the proof.
\end{proof}

\subsection{Proofs of Lemmas~\ref{lm:BB2},~\ref{lm:algo4-invariant}, and~\ref{lm:algo4-correctness-1}.}

We will need the following claim in the proof of~\Cref{lm:BB2}.

\begin{clm}\label{clm:nosharing}
    Suppose $\strat=\{s_1,s_2,\dots,s_m\}$ is a PNE such that $s_1<s_2<\dots<s_m$. Then no two candidates share votes.
\end{clm}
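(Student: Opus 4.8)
The plan is to first reduce the statement to \emph{adjacent} candidates. If $1 \le i < j \le m$ and some voter at position $p$ is equidistant from $s_i$ and $s_j$ and strictly closer to both than to every other candidate, then any candidate with index strictly between $i$ and $j$ lies strictly between $s_i$ and $s_j$ and is therefore strictly closer to $p$ than one of $s_i, s_j$; so no such candidate exists and $j = i+1$. Hence ``two candidates share votes'' is equivalent to: for some $i$, the point $c := \mu(s_i, s_{i+1})$ lies in $\points_v$ and carries positive mass, in which case candidates $i$ and $i+1$ each receive exactly half of that mass from $c$.

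Next I would argue by contradiction. Suppose this happens, and let $a$ be the \emph{smallest} index for which $\mu(s_a, s_{a+1}) \in \points_v$ has positive mass. Because the dummy candidates sit at $s_0 = -\infty$ and $s_{m+1} = +\infty$, the midpoints $\mu(s_0,s_1)$ and $\mu(s_m, s_{m+1})$ are infinite and never voter positions, so $1 \le a \le m-1$; in particular candidate $a$ is a genuine candidate and candidate $a+1$ sits at a finite position. Write $\ell := \mu(s_{a-1}, s_a)$ and $c := \mu(s_a, s_{a+1})$. By minimality of $a$ the point $\ell$ is \emph{not} a voter position of positive mass, i.e. the atom of $F$ at $\ell$ is $0$ (trivially so when $a = 1$, where $\ell = -\infty$).

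The crux is to exhibit a strictly profitable deviation for candidate $a$: move her from $s_a$ to $s_a + \eta$ for a sufficiently small $\eta > 0$ — small enough that $s_a + \eta < s_{a+1}$, that each of the length-$\eta/2$ intervals $[\ell,\ell+\eta/2]$ and $[c,c+\eta/2]$ contains no integer other than possibly its left endpoint, and (ruling out finitely many values of $\eta$) that the two new boundaries $\mu(s_{a-1}, s_a+\eta)$ and $\mu(s_a+\eta, s_{a+1})$ are not integers. Under this shift, the right boundary moves from $c$ to $c+\eta/2 > c$, so the voter at $c$ is now strictly closer to candidate $a$ than to candidates $a-1$ and $a+1$ and votes entirely for $a$ — a net gain of half the mass at $c$; meanwhile the left boundary moves from $\ell$ to $\ell+\eta/2$, but since the atom at $\ell$ is $0$ and no integer lies in $(\ell,\ell+\eta/2]$, candidate $a$ loses nothing on the left. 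Substituting into $\util(s,\strat) = F(\mu(s_l,s),\mu(s,s_r)) - \tfrac12 F(\mu(s_l,s),\mu(s_l,s)) - \tfrac12 F(\mu(s,s_r),\mu(s,s_r))$ then gives $\util(s_a+\eta, \strat_{-a}) = \util(s_a,\strat) + \tfrac12 F(c,c) > \util(s_a,\strat)$, contradicting that $\strat$ is a PNE, and the claim follows.

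The only mildly delicate point I anticipate is the bookkeeping of which voter mass crosses which moving boundary: discreteness of $\points_v \subseteq \integers_+$ guarantees that a small enough rightward shift sweeps no integer on the left, and choosing $a$ to be the leftmost shared boundary is exactly what forces the atom at $\ell$ to vanish, making the deviation ``all gain, no loss''. Everything else is a direct substitution into the utility formula.
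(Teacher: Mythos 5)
Your proof is correct and takes essentially the same route as the paper's: both identify the leftmost candidate sharing votes with her right neighbour and nudge her slightly to the right, using leftmost-ness to guarantee that no atom sits at her left boundary, so the move gains half the shared mass at $\mu(s_a,s_{a+1})$ and loses nothing. The only cosmetic difference is that you quantify ``small enough'' by directly avoiding integer voter positions, whereas the paper expresses the same constraint via the reflection set $B(s_{i-1},\points_v)$.
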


\begin{proof}
    Assume, for the sake of contradiction, that candidate $i$ is the leftmost candidate who is sharing the votes of $p$ with the candidate $i+1$. Define $x = \min(\{y\in B(s_{i-1},\points_v)\mid y\ge s_i\},s_{i+1})$. By definition of candidate $i$, we conclude that $x>s_i$. Now if candidate $i$ moves to $(x+s_i)/2$, her utility strictly increases, as she gets all the votes from $p$ without losing any votes from the previous configuration. Hence it contradicts that $\strat$ is an equilibrium.
\end{proof}

\lmbbtwo*

\begin{proof}
    Suppose $f_r(s_{i+1},p) = y$ and $\gamma = (s_i-y)$. Note that all the voters at $p$ vote for the candidate $i$. 
    \begin{clm}\label{clm:BB2-1}
        There exists a $q\in\points_v$ such that, $f_r(s_{i+2},q)\in[s_{i+1}-\gamma,s_{i+1})$.
    \end{clm}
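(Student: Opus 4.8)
The plan is to prove \Cref{clm:BB2-1} by contradiction, producing a profitable deviation for candidate $i+1$. Write $p=\tfrac12(y+s_{i+1})$ for the unique voter position with $f_r(s_{i+1},p)=y$; then $\gamma=s_i-y>0$, and substituting $y=2p-s_{i+1}$ gives the identity $s_{i+1}-\gamma=2p-s_i$. Since $y<s_i$, the voters at $p$ vote for candidate $i$ in $\strat$ (as already observed), and by \Cref{clm:nosharing} these are not split, so all $F(p)\ge 1$ voters at $p$ go to $i$ and none to $i+1$. Suppose, for contradiction, that no $q\in\points_v$ has $f_r(s_{i+2},q)\in[s_{i+1}-\gamma,s_{i+1})$.

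Consider first the principal case $p>s_i$, equivalently $\gamma<s_{i+1}-s_i$. Since $\points_v$ is finite, the set $B(s_{i+2},\points_v)=\{f_r(s_{i+2},q):q\in\points_v\}$ is finite, so I can pick $\epsilon'>0$ small enough that $s_i<s':=s_{i+1}-\gamma-\epsilon'<s_{i+1}$, that $s'\notin B(s_{i+2},\points_v)$, and that no element of $B(s_{i+2},\points_v)$ lies in $(s',\,s_{i+1}-\gamma)$; with the contradiction hypothesis this gives $B(s_{i+2},\points_v)\cap(s',s_{i+1})=\emptyset$. Let candidate $i+1$ deviate to $s'$ (an order-preserving move), with resulting profile $\strat'$. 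Her Voronoi cell changes from $[\mu(s_i,s_{i+1}),\mu(s_{i+1},s_{i+2})]$ to $[\mu(s_i,s'),\mu(s',s_{i+2})]$. On the left, $\mu(s_i,s')=p-\epsilon'/2<p<\mu(s_i,s_{i+1})$, so she keeps every former left-voter and in addition wins the $F(p)\ge 1$ voters at $p$. On the right, a voter at $q$ leaves her for $i+2$ only when $f_r(s_{i+2},q)\in(s',s_{i+1})$ (the endpoint $s_{i+1}$ being excluded by \Cref{clm:nosharing}), and this interval was arranged to meet no point of $B(s_{i+2},\points_v)$, so she loses nothing. Hence $\util(s',\strat')\ge\util(s_{i+1},\strat)+F(p)>\util(s_{i+1},\strat)$, contradicting that $\strat$ is a PNE. (Reflecting the witness $z=f_r(s_{i+2},q)$ in $p$ then yields $y'=f_r(z,p)\in(y,s_i]$ with $y'\in B^2(s_{i+2},\points_v)$, which is what \Cref{lm:BB2} asks for.)

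The remaining, and I expect harder, step is the degenerate case $p\le s_i$, equivalently $\gamma\ge s_{i+1}-s_i$, where $s_{i+1}-\gamma=2p-s_i\le s_i$ so the target $s'$ falls on or to the left of $s_i$: candidate $i+1$ can no longer capture the voters at $p$ by an order-preserving shift inside $(s_i,s_{i+1})$, and the argument above collapses. In this regime $[s_{i+1}-\gamma,s_{i+1})\supseteq(s_i,s_{i+1})$, so the contradiction hypothesis in particular forces $B(s_{i+2},\points_v)\cap(s_i,s_{i+1})=\emptyset$, meaning candidate $i+1$ would retain every voter of her $(i+1,i+2)$-zone even after sliding all the way left to $s_i$; the profitable deviation must then be extracted from this fact together with the PNE conditions on candidate $i$ (crowded from its right) or on a candidate placed into $(s_{i-1},s_i)$. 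Carrying this degenerate sub-case through rigorously --- and checking it never conflicts with \Cref{clm:nosharing} or with the direction conventions in the definition of $B^2(\cdot,\points_v)$ --- is the main obstacle; the rest is routine interval bookkeeping.
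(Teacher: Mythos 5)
Your ``principal case'' ($p>s_i$, equivalently $\gamma<s_{i+1}-s_i$) is precisely the paper's argument, carried out with the bookkeeping the paper omits: the paper's entire proof of \Cref{clm:BB2-1} is the single sentence that candidate $i+1$ can shift left by slightly more than $\gamma$, gain the voters at $p$, and lose nothing on the right because $B(s_{i+2},\points_v)\cap[s_{i+1}-\gamma,s_{i+1})=\emptyset$. Your verification of that deviation (including the role of \Cref{clm:nosharing} at the endpoints) is correct.

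The degenerate case you flag, $p\le s_i$ (equivalently $\gamma\ge s_{i+1}-s_i$), is a genuine gap in your write-up --- but it is exactly the gap in the paper's own proof, which silently assumes that $s_{i+1}-\gamma>s_i$ so that the shift is order-preserving and actually captures $p$. Moreover, you should not expect to close this case: the claim is false there. Take $\points_v=\{0,2,9,13,28,30\}$ with $F(0)=F(2)=F(28)=F(30)=5$, $F(9)=F(13)=2$, and six candidates at $0,2,10,12,28,30$. This is a PNE: each weight-$5$ voter carries a candidate at distance $0$, the two weight-$2$ voters are $4$ apart so no single position captures both, and every candidate already attains the maximum utility any deviation could give her (the minimum utility $2$ also equals the best utility of an entrant). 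Taking $i=3$, we have $y=f_r(s_4,9)=6\in(s_2,s_3)$, $p=9<s_3=10$, $\gamma=4$, and $[s_4-\gamma,s_4)=[8,12)$; a witness $q$ would need $2q-28\in[8,12)$, i.e.\ $q\in[18,20)$, and $\points_v$ has no such point. The paper's deviation fails here for the reason you isolate: $f_r(s_3,p)=8$ lies to the left of $s_3$, so candidate $4$ cannot win the voters at $9$ from any position right of $s_3$, and the positions in $(8,10)$ that do win them cost her the voters at $13$. So \Cref{clm:BB2-1} (and with it \Cref{lm:BB2}) requires the additional hypothesis $\gamma<s_{i+1}-s_i$, or the surrounding argument must be repaired; your instinct that this sub-case is the real obstacle was exactly right.
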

    \begin{proof}
        From Claim~\ref{clm:nosharing}, we know that there is no $q'\in\points_v$ such that $f_r(s_{i+2},q')=s_{i+1}$. For the sake of contradiction suppose There is no $q\in\points_v$ such that,$f_r(s_{i+2},q)\in[s_{i+1}-\gamma,s_{i+1})$. Therefore, candidate $i+1$ can shift more than $\gamma$ to the left and gain all the votes from $p$ without losing any votes. This contradicts that $\strat$ is an equilibrium.
    \end{proof}
    Suppose $q\in\points_v$ such that $f_r(s_{i+2},q)\in[s_{i+1}-\gamma,s_{i+1}]$ (From Claim \ref{clm:BB2-1}). Define $y' =f_r\left(f_r(s_{i+2},q),p\right)$. Therefore, $y'\in[y,s_i]$ and $y'\in B^2(s_{i+2},\points_v)$.
\end{proof}

\algofourinvariant*

\begin{proof}
    We prove this by induction on the iterations of Algorithm \ref{algo:2}. Note that $\epsilon\in \bit(i)$.
    \paragraph{Base Case:} Algorithm \ref{algo:2} is invoked as a subroutine in line 15 of Algorithm \ref{algo:1}. At this point, $ s_k \in (x - \epsilon, x) $, where $ (x-\epsilon)\in \bit(i)$.

    \paragraph{Inductive Case:} Now suppose at some iteration of Algorithm \ref{algo:2}, $s_j\in(x-\epsilon,x)$. If $z=Null$, then Algorithm \ref{algo:2} terminates. Now suppose $Z\neq Null$.
    \begin{clm}\label{clm:algo4-invariant-1}
        If $Z\neq Null$,there exists $x'$ such that either $s_{j+1}\in(x'-\epsilon,x')$ or $s_{j+2}\in(x'-\epsilon,x')$ and $x'\in \bit(i-1)$.
    \end{clm}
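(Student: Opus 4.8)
The plan rests on a single structural observation: every point of a set $B^2(s,\points_v)$ is an \emph{even integer translate} of $s$. Indeed, since $\points_v\subseteq\integers_+$, for $p,q\in\points_v$ we have $f_r(f_r(s,p),q)=2q-(2p-s)=s+2(q-p)$ with $q-p\in\integers$, so $B^2(s,\points_v)\subseteq s+2\integers$. I would open the proof by recording this, alongside the hypotheses already in force: $\strat$ is maintained as a PNE throughout Algorithm~\ref{algo:2}; by the inductive hypothesis of Lemma~\ref{lm:algo4-invariant} we have $s_j\in(x-\epsilon,x)$ with $x\in\bit(i-1)$; and $\epsilon=2^{-i}$, so that $2\epsilon=2^{-(i-1)}$ and the points of $\bit(i-1)$ are precisely the multiples of $2\epsilon$ in $[0,R]$. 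In particular $x-\epsilon$ is the midpoint of the $\bit(i-1)$-gap $(x-2\epsilon,x)$, so there is no point of $\bit(i-1)$ inside $[x-\epsilon,x)$.

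Next I would split on which set produced $z=\min(\{y\in B^2(s_{j+1},\points_v)\cup B^2(s_{j+2},\points_v)\mid x-\epsilon\le y\le s_j\})$, and treat $z\in B^2(s_{j+1},\points_v)$ in detail (the case $z\in B^2(s_{j+2},\points_v)$ being word-for-word identical with $s_{j+1}$ replaced by $s_{j+2}$). Write $z=s_{j+1}+2\ell$ with $\ell\in\integers$; since $z\le s_j<s_{j+1}$ we have $\ell\le-1$, i.e.\ $s_{j+1}=z+2|\ell|$. Because $z$ is an even-integer translate of $s_{j+1}$ and lies in $[x-\epsilon,x)$, which contains no $\bit(i-1)$ point, it follows that $s_{j+1}\notin\bit(i-1)$ (and in particular $s_{j+1}\ne R$, so the quantity $\min\{y\in\bit(i-1):y>s_{j+1}\}$ the algorithm computes is well defined and at most $R$). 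Now write $x=2c\epsilon$ for an integer $c$; the constraints on $z$ give $z\in[(2c-1)\epsilon,2c\epsilon)$, and since $2|\ell|=2^{i+1}|\ell|\,\epsilon$, setting $c':=c+2^i|\ell|$ yields $s_{j+1}\in[(2c'-1)\epsilon,\,2c'\epsilon)$. Then $x':=2c'\epsilon=c'/2^{i-1}\in\bit(i-1)$, and because $(2c'-2)\epsilon<(2c'-1)\epsilon\le s_{j+1}<2c'\epsilon$, this $x'$ is exactly $\min\{y\in\bit(i-1):y>s_{j+1}\}$ — which is the value the algorithm assigns to $x$ on this branch — with $s_{j+1}\in[x'-\epsilon,x')$. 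Symmetrically, on the $B^2(s_{j+2},\points_v)$ branch one gets $s_{j+2}\in[x'-\epsilon,x')$ for the corresponding $x'=\min\{y\in\bit(i-1):y>s_{j+2}\}$.

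Finally I would upgrade the half-open containment $\in[x'-\epsilon,x')$ to the open interval $(x'-\epsilon,x')$ claimed. Equality $s_{j+1}=x'-\epsilon$ forces $z=x-\epsilon$, and in that case $s_{j+1}=x'-\epsilon=(2c'-1)\epsilon\in\bit(i)$ already; I would rule this out using the \emph{same-direction} restriction in the definition of $B^2(\cdot,\points_v)$ together with the facts that $s_j\in(x-\epsilon,x)$ and $\strat$ is a PNE — the only remaining possibility would be a boundary sitting strictly between $x-\epsilon$ and $s_j$, which is handled by the $\min$ already. (Should this edge case resist a clean argument, it is in any event harmless: $x-\epsilon\in\bit(i)$, so the downstream uses of the invariant in Lemma~\ref{lm:algo4-correctness-1} and Theorem~\ref{thm:correctness} go through verbatim with the half-open interval.)

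I expect the only real obstacle to be this last step — cleanly excluding or absorbing the boundary case $z=x-\epsilon$ — since the rest is the routine but careful bookkeeping of aligning the $\bit(i-1)$ grid (spacing $2\epsilon$) with the $\bit(i)$ grid (spacing $\epsilon$) after an even-integer translation, for which the key leverage is the identity $f_r(f_r(s,p),q)=s+2(q-p)$ established at the outset.
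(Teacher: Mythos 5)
Your proof is correct and follows essentially the same route as the paper's: both hinge on the identity $f_r(f_r(s,p),q)=s+2(q-p)$, so the witness is just the even-integer translate of $x$ that carries $z$ back next to $s_{j+1}$ (the paper writes it directly as $x'=f_r(f_r(x,q),p)$), with $x'\in\bit(i-1)$ because adding an even integer preserves membership, and $s_{j+1}\in(x'-\epsilon,x')$ because $x'-s_{j+1}=x-z\in(0,\epsilon)$. You are in fact more careful than the paper on two points: you verify that this $x'$ coincides with the minimum the algorithm actually assigns on lines 13/16 (which is what Lemma~\ref{lm:algo4-invariant} ultimately needs, and which requires the dyadic-lattice reading of $\bit(i-1)$ that the paper's literal ``denominator at most $2^{i-1}$'' definition does not quite give but clearly intends), and you notice that the algorithm's definition of $Z$ uses $y\ge x-\epsilon$, so the boundary case $z=x-\epsilon$ yields only the half-open containment $s_{j+1}\in[x'-\epsilon,x')$ --- the paper's proof silently assumes $z\in(x-\epsilon,s_j)$ open, so your explicit fallback (harmless downstream, since $x'-\epsilon\in\bit(i)$) is an honest account of what is actually provable.
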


    \begin{proof}
        As $Z\neq Null$, there exists $p,q\in\points_v$, such that either $f_r(f_r(s_{j+1},p),q)\in(x-\epsilon,s_j)$ or $f_r(f_r(s_{j+2},p),q)\in(x-\epsilon,s_j)$. Suppose $f_r(f_r(s_{j+1},p),q)\in(x-\epsilon,s_j)$  and $x' = f_r(f_r(x,q),p)$. Note that $s_{j+1}\in(x'-\epsilon,x')$ and $x'\in \bit(i-1)$ since $\points_v\subseteq \bit(0)$ and $x\in \bit(i-1)$. Similarly, it is true if $f_r(f_r(s_{j+2},p),q)\in(x-\epsilon,s_j)$.
    \end{proof}
    From Claim \ref{clm:algo4-invariant-1}, we conclude that line 13 and line 16 of Algorithm \ref{algo:2} correctly finds a $x'$ such that either $s_{j+1}\in(x'-\epsilon,x')$ or $s_{j+2}\in(x'-\epsilon,x')$ and $(x'-\epsilon)\in \bit(i)$.
\end{proof}

\algofourcorrectnessone*

\begin{proof}
    From Lemma \ref{lm:algo4-invariant}, we know that $s_j\in(x-\epsilon,x)$. If $z= Null$, then the set $\{y\in B^2(s_{j+1},\points_v)\cup B^2(s_{j+2},\points_v)~|~s_{j}\ge y\ge(x-\epsilon)\}$ is empty. From Lemma \ref{lm:BB2}, we conclude that the set $\{y\in B(s_{j+1},\points_v)\cup B^2(s_{j+1},\points_v)\cup B^2(s_{j+2},\points_v)~|~s_{j}\ge y\ge(x-\epsilon)\}$ is also empty. If $(x-\epsilon)\le s_{j-1}$, then $s_{j-1}=(x-\epsilon)$; otherwise, $s_j=(x-\epsilon)$. In both cases, from Corollary \ref{cly:symmetry}, we conclude that the resulting configuration is an equilibrium.
\end{proof}

\subsection{Proof of~\Cref{thm:correctness}.}

\correctness*
 
Suppose $\strat_t$ denotes the set of strategies of $m$ candidates after $t$ iterations of the while loop of the Algorithm \ref{algo:1}.
\begin{clm}\label{clm:null}
    Suppose at iteration $t$ of the while loop of Algorithm \ref{algo:1}, $k=Null$ then, $\forall j\in[m]$ $s_j\in \bit(t-1)$.
\end{clm}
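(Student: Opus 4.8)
The plan is to unfold the definition of the variable $k$ in~\Cref{algo:1}; the claim is essentially a restatement of the loop's exit condition, so the argument is short. First I would record, by a one-line induction on the number of completed iterations of the while loop, that at the \emph{start} of iteration $t$ the loop variable $i$ equals $t$: it is initialized to $1$ in line~1, incremented by exactly $1$ at the end of each iteration, and by hypothesis the loop has not returned before iteration $t$, so no iteration is skipped. Hence at iteration $t$, line~3 computes $k = \min\bigl(\{j\in[m] : s_j \notin \bit(t-1)\}\bigr)$, where $\strat$ is the current profile, which is exactly $\strat_{t-1}$ since nothing has been modified yet in iteration $t$.

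Next I would observe that the minimum of a finite nonempty subset of $[m]$ is a well-defined element of $[m]$, so $k$ can equal Null (triggering the return in line~5) only when the set $\{j\in[m]:s_j\notin\bit(t-1)\}$ is empty. Therefore $k=\mathrm{Null}$ forces $s_j\in\bit(t-1)$ for every $j\in[m]$. Finally, since in this case the algorithm returns $\strat$ without altering it, $\strat_t=\strat_{t-1}$, so the conclusion $s_j\in\bit(t-1)$ for all $j\in[m]$ is indeed a statement about $\strat_t$, as required.

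There is no genuine obstacle here. The only points needing (minor) care are the identification of the loop counter $i$ with the iteration index $t$, and the remark that the profile returned on the terminating iteration coincides with the profile produced by the previous iteration, so that the indexing $\strat_t$ in the claim is consistent. I would therefore keep the written proof to two or three sentences, citing line numbers of~\Cref{algo:1} rather than re-deriving anything.
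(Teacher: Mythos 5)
Your proposal is correct and matches the paper's own argument: the paper simply notes (by contradiction) that if some $s_{j'}\notin\bit(t-1)$ then the set in line~3 is nonempty, so $k$ would be its minimum rather than Null, which is the contrapositive of exactly the unfolding you describe. Your extra remarks about the loop counter and about $\strat_t=\strat_{t-1}$ on the terminating iteration are harmless bookkeeping that the paper leaves implicit.
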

\begin{proof}
    For the sake of contradiction suppose there is a candidate $j'\in[m]$ such that, $s_{j'}\notin \bit(t-1)$. Therefore, at the iteration $t$ of the while loop of the Algorithm \ref{algo:1}, $s_{j'}\in A = \{j\in[m]~|~s_j\notin \bit(t-1)\}$. Note that $k=min(A)$, which implies $k\neq Null$.
\end{proof}
\begin{clm}\label{clm:subset}
    Suppose at iteration $t$ of the while loop of Algorithm \ref{algo:1}, $k\neq Null$ then, $B(s_{k-1},\points_v)\cup B^2(s_{k-1},\points_v)\cup B^2(s_{k-2},\points_v)\cup \points_v\subseteq \bit(t-1)$.
\end{clm}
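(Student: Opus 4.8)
The plan is to reduce the claim to a single closure property of $\bit(\cdot)$ under the reflection map $f_r(\cdot,p)$ for integer $p$, and then to note that $s_{k-1}$ and $s_{k-2}$ already lie in $\bit(t-1)$ by the very definition of $k$.

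First I would record the elementary arithmetic fact: \emph{if $x\in\bit(t-1)$ and $p\in\points_v$, then $f_r(x,p)\in\bit(t-1)$} (where, as throughout, reflections that fall outside $[0,R]$ are immaterial — they can never be a candidate's position nor a boundary inside the segment, so one simply discards them, or equivalently reads $\bit$ as the corresponding set of rationals). Indeed, write $x=a/b$ with $b\le 2^{t-1}$; since $\points_v\subseteq\integers_+$, we have $f_r(x,p)=2p-x=(2pb-a)/b$, a nonnegative rational whose denominator is still $b\le 2^{t-1}$, hence in $\bit(t-1)$. The crucial point is that reflecting in an \emph{integer} point adds an integer to $2p$-times nothing — it does not enlarge the denominator, so the bit complexity does not grow. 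Iterating this fact once more gives the same conclusion for a double reflection $f_r(f_r(x,p),q)$ with $p,q\in\points_v$.

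Next I would pin down that $s_{k-1},s_{k-2}\in\bit(t-1)$. At iteration $t$ the algorithm sets $k=\min\{j\in[m]:s_j\notin\bit(t-1)\}$, so by definition \emph{every} candidate with index smaller than $k$ sits in $\bit(t-1)$; in particular $s_{k-1}\in\bit(t-1)$ when $k\ge 2$ and $s_{k-2}\in\bit(t-1)$ when $k\ge 3$. In the degenerate cases $k=1$ or $k=2$, the relevant left neighbor is the dummy candidate $s_0=-\infty$ (and $s_{-1}$ is taken to be empty), whose reflections all lie at $+\infty\notin[0,R]$; thus $B(s_{k-1},\points_v)$, $B^2(s_{k-1},\points_v)$, resp.\ $B^2(s_{k-2},\points_v)$ contain no point of $[0,R]$, and the claim for those terms is vacuous.

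Finally I would assemble the pieces: $\points_v\subseteq\integers_+\cap[0,R]=\bit(0)\subseteq\bit(t-1)$ since $t\ge 1$; applying the closure property once gives $B(s_{k-1},\points_v)\subseteq\bit(t-1)$; applying it twice gives $B^2(s_{k-1},\points_v)\subseteq\bit(t-1)$ and $B^2(s_{k-2},\points_v)\subseteq\bit(t-1)$; and a union of subsets of $\bit(t-1)$ is again a subset of $\bit(t-1)$, which is exactly the claim. I do not expect a genuine obstacle here — this is a supporting lemma — the only things to handle carefully are (i) that reflection by an integer preserves the denominator bound (so bit complexity does not increase), and (ii) the boundary indices $k\in\{1,2\}$ where a neighbor is a dummy candidate.
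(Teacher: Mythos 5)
Your proof is correct and follows essentially the same route as the paper's: both arguments reduce the claim to the facts that $s_{k'}\in\bit(t-1)$ for all $k'<k$ by the definition of $k$, that $\points_v\subseteq\bit(0)\subseteq\bit(t-1)$, and that reflection in integer points does not increase the denominator. You are merely more explicit than the paper about the arithmetic closure property ($f_r(x,p)=(2pb-a)/b$ preserves the denominator) and about the degenerate cases $k\in\{1,2\}$ involving the dummy candidate, both of which the paper leaves implicit.
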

\begin{proof}
    Note that by the definition of $k$, all the candidates $k'<k$ are positioned at locations that are in $\bit(t-1)$. Therefore, for all $k'<k$, $s_{k'}\in \bit(t-1)$. Since all the points in $\points_v$ are in $\bit(0)$, $\points_v\subseteq bits(t-1)$. Now from the definitions of $B()$ and $B^2()$, we conclude that for all $k'<k$, $B(s_{k'},\points_v)\subseteq \bit(t-1)$ and $B^2(s_{k'},\points_v)\subseteq \bit(t-1)$ which concludes the proof. 
\end{proof}
\begin{clm}\label{clm:x}
    Suppose $\strat_t$ is a PNE. Then, at the $(t+1)$th iteration of the while loop of the Algorithm \ref{algo:1}, $x\in [s_k,s_{k+2}]$.
\end{clm}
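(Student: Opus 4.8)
The plan is to prove the two inequalities $s_k\le x$ and $x\le s_{k+2}$ separately, writing $i=t+1$ so that the quantities computed in this iteration are $k=\min\{j\in[m]:s_j\notin\bit(t)\}$ and $x=\min\{y\in\bit(t):y>s_k\}$, with $\strat_t=\{s_1<\dots<s_m\}$ a PNE. The inequality $s_k\le x$ is immediate from the definition of $x$. When $k\ge m-1$ I read $s_{k+2}$ (and $s_{k+1}$ when $k=m$) as $+\infty$, making the claim vacuous; so I may assume $k\le m-2$, hence $k+1$ and $k+2$ are genuine candidates and $s_k<s_{k+1}<s_{k+2}$. The content is thus $x\le s_{k+2}$, which I prove by contradiction.

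Suppose $x>s_{k+2}$, so $(s_k,s_{k+2}]$ contains no point of $\bit(t)$. Every integer in $[0,R]$ lies in $\bit(0)\subseteq\bit(t)$, so $(s_k,s_{k+2}]$ contains no integer; since $s_k\notin\bit(t)$ is itself non-integral, writing $j=\lfloor s_k\rfloor$ forces $j<s_k<s_{k+1}<s_{k+2}<j+1$, i.e.\ the three consecutive candidates all lie strictly inside one unit cell. Then the vote region of candidate $k+1$ --- the open interval between $\mu(s_k,s_{k+1})$ and $\mu(s_{k+1},s_{k+2})$ --- is contained in $(j,j+1)$ and has non-integral endpoints, so it contains no voter position and no boundary voter is split (\Cref{clm:nosharing} also gives the latter); hence $\util(s_{k+1},\strat_t)=0$.

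I would then contradict the equilibrium property. In this squeezed configuration candidate $k$ can draw votes only from its left (its right region $(s_k,\mu(s_k,s_{k+1}))\subseteq(j,j+1)$ is voter-free) and likewise candidate $k+2$ only from its right, while candidate $k+1$ draws nothing. The natural deviation for candidate $k+1$ is therefore to move just past candidate $k$ to $s_k-\eta$, capturing (for $\eta$ small) essentially all of candidate $k$'s current votes, or symmetrically just past candidate $k+2$ to $s_{k+2}+\eta$; if either neighbour currently has positive utility this is a strict improvement, contradicting that $\strat_t$ is a PNE. If neither neighbour has positive utility one chases this along the line: since $N=\sum_{p\in\points_v}F(p)\ge1$ some candidate $\ell$ has $\util(s_\ell,\strat_t)>0$, and candidate $k+1$ can relocate to distance $\eta$ from $s_\ell$ on the side from which $s_\ell$ draws more votes, capturing at least $\tfrac12\util(s_\ell,\strat_t)>0$. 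Either way we contradict the PNE property and conclude $x\le s_{k+2}$.

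I expect this last step to be the main obstacle. The subtle point is ensuring the relocated candidate genuinely captures positive voter mass: if all of $s_\ell$'s votes came from voters lying exactly at $s_\ell$ (necessarily an integer), then an undercut at distance $\eta$ fails against a candidate sitting at distance $0$, so one must argue that some positive-mass candidate does not lie on a voter position --- or else handle the degenerate case in which every voter position is occupied by a candidate --- and one must also remember that candidate $k+1$ may have to step over intermediate candidates en route, which is legitimate because an equilibrium rules out all unilateral deviations, not only order-preserving ones. By comparison, the reduction to $x\le s_{k+2}$ and the unit-cell bookkeeping in the second paragraph are routine.
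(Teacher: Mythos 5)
Your argument is, up to contraposition, the same as the paper's. The paper's proof is three lines: it asserts $\util(s_{k+1},\strat_t)>0$ (``otherwise $\strat_t$ is not a PNE''), deduces that some voter position $p$ with $F(p)>0$ lies in candidate $(k+1)$'s vote region, which is contained in $[s_k,s_{k+2}]$, and concludes from $\points_v\subseteq\bit(0)\subseteq\bit(t)$ that $x\le p\le s_{k+2}$. Your unit-cell bookkeeping in the second paragraph is exactly the contrapositive of this chain, and it is correct; so is reading $s_{k+2}$ as $+\infty$ when $k\ge m-1$, which is the only sensible interpretation since the paper defines no dummy beyond $s_{m+1}$.

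The step you flag as the main obstacle is a genuine gap, and you are right not to wave it away --- but it is precisely the gap the paper itself leaves open, since the positivity of $\util(s_{k+1},\strat_t)$ is asserted there with no argument. Your ``chasing'' fallback cannot close it: if every voter position is occupied by a candidate, a PNE may contain zero-utility candidates, and the claim as stated can then fail. Concretely, take $\points_v=\{0,10\}$ with $F(0)=F(10)=5$ and $m=5$ candidates at $0$, $3.3$, $3.35$, $3.4$, $10$. This is a PNE: the three middle candidates get zero votes, and no unilateral deviation captures anything because both voters sit at distance $0$ from a candidate; the two outer candidates cannot gain either. Yet at the first iteration $k=2$, $x=4$ and $s_{k+2}=3.4$, so $x\notin[s_k,s_{k+2}]$. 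What survives is exactly your Case-A undercutting argument: if some voter position is \emph{not} occupied by a candidate (in particular whenever $m\le|\points_v|$), a zero-utility candidate can move within the nearest-candidate distance of that voter and strictly gain, so every candidate has positive utility and the claim follows. In short, your proof is as complete as the paper's; both need an explicit hypothesis excluding the fully-occupied degenerate configuration, and neither supplies one.
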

\begin{proof}
    Note that $\util(s_{k+1},\strat_t)>0$ otherwise $\strat_t$ is not a PNE. Therefore there exists $p\in\points_v$ such that $F(p)>0$ and $p\in[s_k,s_{k+2}]$. We know that $\points_v\subseteq \bit(0)$, therefore by the definition of $x$, $x\in[s_k,s_{k+2}]$.
\end{proof}
\begin{clm}\label{clm:e}
   Consider the iteration $t$ of the while loop Algorithm \ref{algo:1}. Then, $\bit(t-1)\cap[x-\epsilon,x)=\emptyset$.
\end{clm}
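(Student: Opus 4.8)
The plan is to treat this as a purely arithmetic statement about the dyadic grids $\bit(\cdot)$, once we recall that $\bit(i)$ is exactly the set of integer multiples of $2^{-i}$ lying in $[0,R]$ (the rationals in $[0,R]$ whose fractional part is representable with $i$ bits — the same convention used throughout Algorithm~\ref{algo:1} and in the definition of $\hat{\points}_c$). At the $t$-th iteration of the while loop of Algorithm~\ref{algo:1} we may assume $k \neq \text{Null}$, so that $x$ and $\epsilon$ are defined; then $\epsilon = 2^{-t}$ and $x = \min\{\, y \in \bit(t-1) : y > s_k \,\}$, so in particular $x \in \bit(t-1)$ and hence $x = a/2^{t-1}$ for some nonnegative integer $a$. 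First I would rewrite both endpoints of the target interval over the common denominator $2^t$: $x = 2a/2^t$ and $x - \epsilon = a/2^{t-1} - 1/2^t = (2a-1)/2^t$, so that $[x-\epsilon,\, x) = \bigl[(2a-1)/2^t,\ 2a/2^t\bigr)$.

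Next I would argue by contradiction: suppose some $y \in \bit(t-1)$ lies in this interval. Writing $y = b/2^{t-1} = 2b/2^t$ with $b$ a nonnegative integer, the containment $(2a-1)/2^t \le 2b/2^t < 2a/2^t$ is equivalent to $2a-1 \le 2b < 2a$. The right inequality forces $b < a$, hence $b \le a-1$; the left inequality forces $2b \ge 2a-1 > 2(a-1)$, hence $b > a-1$. These are incompatible, so no such $y$ exists, i.e.\ $\bit(t-1) \cap [x-\epsilon,\, x) = \emptyset$.

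I expect no serious obstacle here; the heart of it is a one-line parity observation after clearing denominators. The only points needing a little care are: (i) using the right reading of $\bit(i)$ — the multiples of $2^{-i}$ in $[0,R]$ — which is what makes ``$x \in \bit(t-1)$'' say ``$x$ is a multiple of $2^{-(t-1)}$''; (ii) the fact that the interval is \emph{half-open} on the right, which is precisely why the endpoint $x$, itself an element of $\bit(t-1)$, is not a counterexample; and (iii) checking that $x$ is genuinely well-defined at this iteration, which holds because $k \neq \text{Null}$ forces $s_k \neq R$ and hence $s_k < R$ (as $R \in \integers \subseteq \bit(0) \subseteq \bit(t-1)$ while $s_k \notin \bit(t-1)$), so $\{\, y \in \bit(t-1) : y > s_k \,\}$ is a nonempty finite set whose minimum $x$ exists.
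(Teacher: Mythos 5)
Your proof is correct and is essentially the paper's argument made explicit: the paper's own proof simply asserts that since $x\in\bit(t-1)$ and consecutive points of $\bit(t-1)$ are $2\epsilon$ apart, the half-open interval $[x-\epsilon,x)$ contains no point of $\bit(t-1)$, which is exactly your parity computation after clearing denominators. You are also right to insist on reading $\bit(i)$ as the multiples of $2^{-i}$ (the convention the algorithms and $\hat{\points}_c$ actually use) rather than the paper's literal definition as all rationals with denominator at most $2^i$, under which the claim would fail.
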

\begin{proof}
     Note that $x\in \bit(t-1)$. Recall that, $\epsilon=2^{-t}$ and $\epsilon$ is the smallest number that is in $\bit(t)$. Therefore, any points in the interval $[x-\epsilon,x)$ is not in $\bit(t-1)$. 
\end{proof}
\begin{clm}\label{clm:e-small}
    Suppose $\strat_t=\{s_1,...,s_m\}$ is a PNE. Then, at the $(t+1)$th iteration of the while loop of the Algorithm \ref{algo:1}, if $(x-\epsilon)\ge s_{k+1}$, $\strat_t' = \{s_1,...,s_k, x-\frac{\epsilon}{2},s_{k+2},...,s_m\}$ is a PNE.
\end{clm}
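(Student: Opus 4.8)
The plan is to realise the relocation $s_{k+1}\mapsto x-\epsilon/2$ as a single rightward shift of candidate $k+1$ and then quote~\Cref{lm:3}. First I would dispose of the trivial structural points. Since $(x-\epsilon)\ge s_{k+1}$ the position $s_{k+1}$ is finite, so $k\le m-1$ and candidate $k+1$ is a genuine candidate; and setting $x':=x-\epsilon/2$, the chain $s_k<s_{k+1}\le x-\epsilon<x'<x\le s_{k+2}$ — the last inequality being~\Cref{clm:x} — shows that moving candidate $k+1$ to $x'$ preserves the left-to-right ordering of the candidates and produces exactly $\strat_t'$. Thus it suffices to verify the hypothesis of~\Cref{lm:3} for $i:=k+1$ and new location $x'$, namely that $A\cap[s_{k+1},x']=\emptyset$, where $A=B(s_k,\points_v)\cup B^2(s_k,\points_v)\cup B^2(s_{k-1},\points_v)\cup\points_v$ is the barrier set that lemma attaches to candidate $k+1$.

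The observation driving everything is that, by the definition of $x$ as the smallest point of $\bit(i-1)=\bit(t)$ exceeding $s_k$, we have $\bit(t)\cap(s_k,x)=\emptyset$, while $[s_{k+1},x']\subseteq(s_k,x)$. This immediately kills $\points_v\cap[s_{k+1},x']$ (since $\points_v\subseteq\bit(0)\subseteq\bit(t)$) and $B^2(s_{k-1},\points_v)\cap[s_{k+1},x']$ (since $s_{k-1}$ is a candidate position to the left of $k$, hence in $\bit(t)$, so its double reflections in the integer voter positions remain in $\bit(t)$ — cf.~\Cref{clm:subset}). For $B(s_k,\points_v)=\{2p-s_k : p\in\points_v\}$: a point $2p-s_k\in[s_{k+1},x']$ would satisfy $2p-s_k>s_k$, hence $p>s_k$, and $2p-s_k<x$ together with $s_k<x$, hence $p<x$; then the voter $p$ would lie in $(s_k,x)$, a contradiction.

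The step I expect to be delicate is ruling out the double reflections $B^2(s_k,\points_v)$, precisely because $s_k\notin\bit(t)$ and so a reflection of $s_k$ need not be a ``nice'' point; here the ``same-direction'' clause in the definition of $B^2$ is exactly what is needed. I would write a candidate point of $B^2(s_k,\points_v)$ as $z=f_r(w,q)=2q-w$ with $w=f_r(s_k,p)=2p-s_k$, and split on the direction of the first reflection. If it is rightward ($s_k<p$, $w<q$), then $w>s_k$ and $z=2q-w>q>w$, so $z\le x'<x$ forces $q<x$ while $q>w>s_k$, putting $q\in(s_k,x)\cap\points_v=\emptyset$ — contradiction. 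If it is leftward ($s_k>p$, $w>q$), then $w<s_k$ and $z=2q-w<q<w<s_k<s_{k+1}$, contradicting $z\ge s_{k+1}$. Having established $A\cap[s_{k+1},x']=\emptyset$,~\Cref{lm:3} (for which we are in the regime $m>2$; the cases $m\le2$ are trivial for bit-complexity purposes) applied to candidate $k+1$ yields that $\strat_t'=\{s_1,\dots,s_k,x-\tfrac{\epsilon}{2},s_{k+2},\dots,s_m\}$ is a PNE, as desired.
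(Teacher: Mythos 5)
Your proposal is correct and follows essentially the same route as the paper's proof: both reduce the claim to an application of \Cref{lm:3} to candidate $k+1$ moving to $x-\epsilon/2$, and both verify that the barrier set $A$ misses $[s_{k+1},x-\epsilon/2]$ by combining the minimality of $x$ in $\bit(t)$ (so that $\points_v$ and, via \Cref{clm:subset}, $B^2(s_{k-1},\points_v)$ avoid $(s_k,x)$) with the absence of voters in $(s_k,x)$ to rule out $B(s_k,\points_v)$ and $B^2(s_k,\points_v)$. Your case analysis on the direction of the reflections merely spells out a step the paper states without detail.
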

\begin{proof}
    If $(x-\epsilon)\ge s_{k+1}$, then $x\in(s_{k+1},s_{k+2}]$ (Claim \ref{clm:x}). We conclude that there is no voter's location in the interval $[s_k,x)$, otherwise it contradicts the definition of $x$. Therefore, $(B(s_{k},\points_v)\cup B^2(s_{k},\points_v) \cup\points_v)\cap [s_{k+1},x)=\emptyset$. Also from Claim \ref{clm:subset} and the definition of $x$, we conclude that $B^2(s_{k-1},\points_v)\cap [s_{k+1},x)=\emptyset$. Therefore, from Lemma \ref{lm:3}, we conclude that, $\strat_t' = \{s_1,...,s_k, x-\frac{\epsilon}{2},s_{k+2},...,s_m\}$ is a PNE.
\end{proof}
\begin{clm}\label{clm:e-exact}
    Suppose $\strat_t=\{s_1,...,s_m\}$ is a PNE. Then, at the $(t+1)$th iteration of the while loop of the Algorithm \ref{algo:1}, if $s_{k+1}>(x-\epsilon)\ge s_k$, $\strat_{t+1} = \{s_1,...,s_{k-1}, x-\epsilon,s_{k+1},...,s_m\}$ is a PNE.
\end{clm}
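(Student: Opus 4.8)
The plan is to reduce Claim~\ref{clm:e-exact} to a single application of Lemma~\ref{lm:3}, with candidate $k$ playing the role of candidate $i$ there and the algorithm's $x-\epsilon$ playing the role of the new location. Since $\strat_t$ is assumed to be a PNE, if I can check that the hypotheses of Lemma~\ref{lm:3} hold for the move of candidate $k$ from $s_k$ to $x-\epsilon$, the conclusion that $\strat_{t+1}$ is a PNE follows immediately. (I will assume $m>2$, as elsewhere in this section; for $m\le 2$ the statement is trivial.)

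First I would settle the index bookkeeping: at the $(t+1)$th pass of the while loop the loop variable is $i=t+1$, so $\epsilon = 2^{-(t+1)}$, $x = \min\{y\in\bit(t): y>s_k\}$, and $k$ is the leftmost candidate with $s_k\notin\bit(t)$; in particular $s_j\in\bit(t)$ for all $j<k$, and of course $k\neq\mathrm{Null}$ (otherwise the hypothesis on $s_{k+1}$ is vacuous). Set $A = B(s_{k-1},\points_v)\cup B^2(s_{k-1},\points_v)\cup B^2(s_{k-2},\points_v)\cup\points_v$. Applying Claim~\ref{clm:subset} at iteration $t+1$ gives $A\subseteq\bit(t)$. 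From this I draw two consequences: (i) $s_k\notin A$, because $s_k\notin\bit(t)$; and (ii) since $x$ is by definition the smallest element of $\bit(t)$ strictly above $s_k$, the interval $(s_k,x)$ meets no point of $\bit(t)$, hence none of $A$ (equivalently, one may invoke Claim~\ref{clm:e} for the sub-interval $[x-\epsilon,x)$). Because $\epsilon>0$ forces $x-\epsilon<x$, combining (i) and (ii) yields $A\cap[s_k,x-\epsilon]=\emptyset$.

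Now I would finish: if $s_k = x-\epsilon$, then $\strat_{t+1}=\strat_t$ and there is nothing to prove. Otherwise the hypothesis $s_{k+1}>(x-\epsilon)\ge s_k$ gives $s_k < x-\epsilon < s_{k+1}$, and together with $s_k\notin A$ and $A\cap[s_k,x-\epsilon]=\emptyset$ these are exactly the hypotheses of Lemma~\ref{lm:3} for candidate $k$ with target location $x-\epsilon$; hence $\strat_{t+1}=\{s_1,\dots,s_{k-1},x-\epsilon,s_{k+1},\dots,s_m\}$ is a PNE. I expect the only delicate points to be purely clerical: correctly translating the ``iteration $t$'' statements of Claims~\ref{clm:subset} and~\ref{clm:e} to iteration $t+1$ (so that $\bit(i-1)=\bit(t)$), and checking that $s_k < x-\epsilon$ holds strictly so that Lemma~\ref{lm:3}'s requirement $s_i < (\text{new location}) < s_{i+1}$ is met, with the boundary case $s_k = x-\epsilon$ handled separately as above. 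Everything else is elementary set containment and requires no computation.
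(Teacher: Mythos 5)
Your proposal is correct and follows essentially the same route as the paper's proof: both use Claim~\ref{clm:subset} and the minimality of $x$ in $\bit(t)$ to show that the barrier set $A$ misses $[s_k,x)$, and then invoke Lemma~\ref{lm:3} to move candidate $k$ to $x-\epsilon$. If anything, you are slightly more careful than the paper, since you explicitly verify $s_k\notin A$ and separately handle the boundary case $s_k=x-\epsilon$, both of which the paper's one-line argument leaves implicit.
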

\begin{proof}
    We conclude the following from the definition of $x$ and Claim \ref{clm:subset}.
    \[
    \left(B(s_{k-1},\points_v)\cup B^2(s_{k-1},\points_v)\cup B^2(s_{k-2},\points_v)\cup \points_v \right)\cap [s_k,x)=\emptyset
    \]
    Note that $[s_k,x-\epsilon)\subseteq[s_k,x)$ and $(x-\epsilon)<s_{k+1}$. Therefore, from Lemma \ref{lm:3}, we conclude that $\strat_{t+1} = \{s_1,...,s_{k-1}, x-\epsilon,s_{k+1},...,s_m\}$ is a PNE.
\end{proof}

Now we prove Theorem \ref{thm:correctness} by induction on the iterations of Algorithm \ref{algo:1}.
\begin{proof}
\hphantom{1cm}
\paragraph{Base Case:} Note that for $i=0$ (iteration 0), Theorem \ref{thm:correctness} holds trivially.
\paragraph{Inductive Case:} Suppose at iteration $i=t+1$, $\strat_t$ is a PNE where at least $t$ candidates are on some locations which are in $\bit(t)$.
\begin{enumerate}
    \item Now if the $t+1$-th iteration of Algorithm \ref{algo:1} terminates on line 5, then by Claim \ref{clm:null}, all the candidates are on some locations which are in $\bit(t)$.
    \item If the $t+1$-th iteration of Algorithm \ref{algo:1} terminates on line 13, then by Claim \ref{clm:e-exact}, $s_k$ is shifted to a location which is in $\bit(t+1)$ and the corresponding configuration is a PNE. Therefore at least $t+1$ are on some locations which are in $\bit(t+1)$.
    \item Suppose Algorithm \ref{algo:1} calls the subroutine of Algorithm \ref{algo:2} at line 15. Note that Algorithm \ref{algo:2} always terminates as either $Z=Null$ or the value of $j$ increases and we know that for $j=m$, $z=Null$. From Lemma \ref{lm:algo4-correctness-1} and \ref{lm:algo4-invariant}, we can conclude that when Algorithm \ref{algo:2} terminates, some candidate is shifted (Or found) to a location which is in $\bit(t+1)$ and the resulting configuration is a PNE. Hence the inductive hypothesis is true for iteration $t+1$, which concludes our proof.
\end{enumerate}
\end{proof}

\subsection{Proof of~\Cref{thm:Bplusm}.}
We give the instance and equilibrium for the proof of~\Cref{thm:Bplusm}. The intuition behind this construction is based on Lemma \ref{lm:3}. Suppose $\strat=\{s_1,\dots,s_m\}$ is a strategy profile where $s_i\notin A_r$ where $ A_r=B(s_{i-1},\points_{[s_{i-1},s_i]})\cup B^2(s_{i-1},\points_{[s_{i-1},s_i]})\cup B^2(s_{i-2},\points_{[s_{i-2},s_i]})\cup \points$. Let $x$ be a location such that, $s_{i}<x<s_{i+1}$ and $A_r\cap [s_i,x]=\emptyset$. Then by Lemma \ref{lm:3}, $(x,\strat_{-i})$ is also a PNE. Therefore, to restrict the movement of candidate $i$ to the right, it has to be at some position in $A_r$. Similarly to restrict the position of candidate $i$ to the left it has to be at some position in $A_l$ where $A_l=B(s_{i+1},\points_{[s_{i+1},s_i]})\cup B^2(s_{i+1},\points_{[s_{i+1},s_i]})\cup B^2(s_{i+2},\points_{[s_{i+2},s_i]})\cup \points$. 

As illustrated in Figure~\ref{fig:high-comp-1} (check Table \ref{tabone}), we observe that $s_3 \in B^2(s_5,\{4,5\})$, and $s_3 \in B^2(s_1, \{2,3\})$, thereby constraining its movement both to the left and the right. Furthermore, note that $s_1\in B^2(s_3,\{2,3\})$, which restricts its movement to the left. To restrict its movement to the right, we use the fact that there is no position between $s_1$ and $s_2$ that can be described using fewer bits than to represent $s_1$. We will leverage these insights to prove the theorem.

    \begin{figure}[H]
    \centering
    \begin{tikzpicture}[line cap=round,line join=round,>=triangle 45,x=1cm,y=1cm,scale=0.6]
    \clip(-2,-3) rectangle (27,4);
\draw [line width=1pt] (25,-0.3) -- (25,0.3);
\draw [line width=1pt] (-1,-0.3) -- (-1,0.3);
\draw [line width=1pt] (-1,0) -- (15,0);
\draw [line width=1pt, dashed] (15,0) -- (17,0);
\draw [line width=1pt] (17,0) -- (25,0);

\draw [line width=1pt,color=blue,dashed] (2.75,-0.4) -- (2.75,0.7);
\draw [line width=1pt,color=blue, dashed] (6.7,-0.4) -- (6.7,0.7);
\draw [line width=1pt,color=green, dashed] (10.5,-0.4) -- (10.5,0.7);
\draw [line width=1pt,color=blue, dashed] (10.7,-0.4) -- (10.7,0.4);
\draw [line width=1pt,color=green, dashed] (14.5,-0.7) -- (14.5,0.7);
\draw [line width=1pt,color=orange, dashed] (18,-0.7) -- (18,0.7);
\draw [line width=1pt,color=orange, dashed] (22,-0.7) -- (22,0.7);
\begin{footnotesize}
\draw [fill=red] (1,0) circle (4pt);
\draw [fill=red] (3,0) circle (4pt);
\draw [fill=red] (5,0) circle (4pt);
\draw [fill=red] (7,0) circle (4pt);
\draw [fill=red] (9,0) circle (4pt);
\draw [fill=red] (11,0) circle (4pt);
\draw [fill=red] (13,0) circle (4pt);
\draw [fill=red] (15,0) circle (4pt);
\draw [fill=red] (17,0) circle (4pt);
\draw [fill=red] (19,0) circle (4pt);
\draw [fill=red] (21,0) circle (4pt);
\draw [fill=red] (23,0) circle (4pt);

\node[isosceles triangle,
    draw,
    rotate=90,
    fill=black,
    minimum size =0.05cm, scale=0.4] (T1)at (2.7,1){};

\node[isosceles triangle,
    draw,
    rotate=90,
    fill=black,
    minimum size =0.05cm, scale=0.4] (T2)at (3,0.4){};
\node[isosceles triangle,
    draw,
    rotate=90,
    fill=black,
    minimum size =0.05cm, scale=0.4] (T3)at (6.7,1){};
\node[isosceles triangle,
    draw,
    rotate=90,
    fill=black,
    minimum size =0.05cm, scale=0.4] (T4)at (10.5,1){};
\node[isosceles triangle,
    draw,
    rotate=90,
    fill=black,
    minimum size =0.05cm, scale=0.4] (T5)at (10.8,0.7){};
\node[isosceles triangle,
    draw,
    rotate=90,
    fill=black,
    minimum size =0.05cm, scale=0.4] (T6)at (14.5,1){};
\node[isosceles triangle,
    draw,
    rotate=90,
    fill=black,
    minimum size =0.05cm, scale=0.4] (T7)at (21,1){};
\node[isosceles triangle,
    draw,
    rotate=90,
    fill=black,
    minimum size =0.05cm, scale=0.4] (T8)at (18,1){};
\node[isosceles triangle,
    draw,
    rotate=90,
    fill=black,
    minimum size =0.05cm, scale=0.4] (T8)at (22,1){};

\draw[color=black] (-1,-0.8) node {$0$};

\end{footnotesize}
\draw[color=red] (1,-0.8) node {$1$};
\draw[color=red] (3,-0.8) node {$2$};
\draw[color=red] (5,-0.8) node {$3$};
\draw[color=red] (7,-0.8) node {$4$};
\draw[color=red] (9,-0.8) node {$5$};
\draw[color=red] (11,-0.8) node {$6$};
\draw[color=red] (13,-0.8) node {$7$};
\draw[color=red] (15,-0.8) node {$8$};
\begin{scriptsize}
\draw[color=red] (17,-0.8) node {$(n-3)$};
\draw[color=red] (19,-0.8) node {$(n-2)$};
\draw[color=red] (21,-0.8) node {$(n-1)$};
\draw[color=red] (23,-0.8) node {$n$};
\draw[color=black] (25,-0.8) node {$(n+1)$};
\end{scriptsize}

\draw[color=black] (2.7,1.7) node {$s_1$};
\draw[color=black] (3.4,0.8) node {$s_2$};
\draw[color=black] (6.7,1.7) node {$s_3$};
\draw[color=black] (10.5,1.7) node {$s_4$};
\draw[color=black] (11.2,0.9) node {$s_5$};
\draw[color=black] (14.5,1.7) node {$s_6$};
\draw[color=black] (18,1.7) node {$s_{m-2}$};
\draw[color=black] (20.7,1.7) node {$s_{m-1}$};
\draw[color=black] (22,1.7) node {$s_m$};

\end{tikzpicture}
    \caption{An equilibrium with $m$ candidates and $n$ equidistant voters with consecutive distance $1$.}
    \label{fig:high-comp-2}
\end{figure}
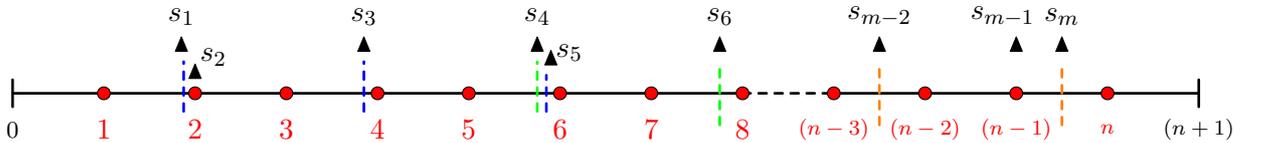

\Bplusm*

\begin{proof}
Figure~\ref{fig:high-comp-2} depicts the instance and the equilibrium. For some $k \in \integers_+$,  there are $n=4k+2$ equidistant voters (shown with red circles) with a consecutive distance of $1$ at positions $1,2,...,n$ and $m=3k+2$ candidates (shown with black triangles) at positions $s_1,s_2,...,s_m$. It is helpful to think of candidates as triples: for all $1\le l\le k$, the candidate positions $s_{3l-2},s_{3l},s_{3l+2}$ are each at $\frac{1}{2^{(k-l+1)}}$ to the left of $4l-2, 4l, 4l+2$ respectively. Also, $s_{2}=2$, and $s_{m-1}=n-2$. Thus $s_1$, $s_3$, $s_5$ are $\frac{1}{2^{(k)}}$ to the left of $2, 4, 6$ respectively, $s_4$, $s_6$, $s_8$ are $\frac{1}{2^{(k-1)}}$ to the left of $6, 8, 10$ respectively, etc.

Thus, positions $s_{3l-2},s_{3l},s_{3l+2}$ for all $1\le l\le k$  are in $\bit(k-l+1) \setminus \bit(k-l).$ Candidates at position $s_{3l}$ for all $1\le l\le k$ are getting two votes while the rest of the candidates are getting one vote each. It is easy to check that the configuration of Figure \ref{fig:high-comp-2} is an equilibrium. Now we show that moving any one of the candidates to a different location while preserving the order breaks the equilibrium.

\begin{enumerate}
        \item  If we move candidate 1 to the left by $\epsilon>0$, then candidate 2 can increase her utility by deviating to $2+\frac{1}{2^k}+\frac{\epsilon}{2}$ (both voters at $2$ and $3$ will then vote for her). There is no position in the interval $(s_1,s_2)$ in $\bit(k-1)$. Moving candidate 1 to a location in the interval $(s_2,1]$ changes the order of the candidates' locations. Similarly, moving the candidate at $s_{m}$ to a different location of lower bit complexity without changing the order of the candidates, disrupts the equilibrium.
        \item For all $1\le l\le k$, if we move candidate $3l$ to the left by $\epsilon>0$, then candidate $3l+1$ can increase her utility by deviating to $4l+\frac{1}{2^{(k-l+1)}}+\frac{\epsilon}{2}$ (both voters at will then $4l$ and $4l+1$ vote for her). If we move candidate $3l$ to the right by $\epsilon>0$, then candidate $3l-1$ can increase her utility by deviating to $4l-2+\frac{1}{2^{(k-l+1)}}-\frac{\epsilon}{2}$ (both voters at $p_{4l-2}$ and $p_{4l-1}$ will then vote for her). Note that if we move candidate $3l$ to any position in the interval $[0,s_{3l-1})$ and $(s_{3l+1},1]$, it changes the order of the candidates.
        \item For all $2\le l\le k$, if we move candidate $3l-2$ to the left by $\epsilon>0$, then candidate $3l-1$ can increase her utility by deviating to $4l-2+\frac{1}{2^{(k-l+1)}}+\frac{\epsilon}{2}$ (both voters at $4l-2$ and $4l-1$ will then vote for her). There is no position in the interval $(s_{3l-2},s_{3l-1})$ in $\bit(k-l)$. Note that if we move candidate $3l-2$ to any position in the interval $[0,s_{3(l-1)})$ and $(s_{3l-1},1]$, it changes the order of the candidates.
        \item For all $2\le l\le k$, if we move candidate $3l-1$ to the right by $\epsilon>0$, then candidate $3l-2$ can increase her utility by deviating to $4(l-1)+\frac{1}{2^{(k-l+2)}}-\frac{\epsilon}{2}$ (both voters at $4l-4$ and $4l-3$ will then vote for her). There is no position in the interval $(s_{3l-2},s_{3l-1})$ in $\bit(k-l+1)$ bits. Note that if we move candidate $3l-2$ to any position in the interval $[0,s_{3(l-1)})$, it changes the order of the candidates.
        \item If we move candidate 2 to any location in the interval $(3,s_3)$ in $\bit(1)$, candidate 1 can increase her utility by moving towards 3 and getting the vote from 3. Suppose we move candidate 2 to a location in the interval $(s_2,3]$ in $\bit(1)$. Therefore, the voter at 3 will vote for candidate 2 which implies candidate 3 can increase her utility by moving towards position 5 and getting the vote from the voter at 5. Similarly moving candidate $m-1$ to a different position disrupts the equilibrium.
    \end{enumerate}

\end{proof}

We finally also give a figure showing the example for the case $k=3$, with 11 candidates and 14 voters.

\begin{figure}[ht]
    \centering
    \begin{tikzpicture}[line cap=round,line join=round,>=triangle 45,x=1cm,y=1cm,scale=0.6]
    \clip(-2,-3) rectangle (27,4);
\draw [line width=1pt] (-1,0) -- (25,0);

\draw [line width=1pt,color=blue,dashed] (0.75,-0.4) -- (0.75,0.7);
\draw [line width=1pt,color=blue, dashed] (4.7,-0.4) -- (4.7,0.7);
\draw [line width=1pt,color=green, dashed] (8.5,-0.4) -- (8.5,0.7);
\draw [line width=1pt,color=blue, dashed] (8.7,-0.4) -- (8.7,0.4);
\draw [line width=1pt,color=green, dashed] (12.5,-0.7) -- (12.5,0.7);
\draw [line width=1pt,color=orange, dashed] (16,-0.7) -- (16,0.7);
\draw [line width=1pt,color=green, dashed] (16.5,-0.7) -- (16.5,0.7);
\draw [line width=1pt,color=orange, dashed] (20,-0.7) -- (20,0.7);
\draw [line width=1pt,color=orange, dashed] (24,-0.7) -- (24,0.7);
\begin{footnotesize}
\draw [fill=red] (-1,0) circle (4pt);
\draw [fill=red] (1,0) circle (4pt);
\draw [fill=red] (3,0) circle (4pt);
\draw [fill=red] (5,0) circle (4pt);
\draw [fill=red] (7,0) circle (4pt);
\draw [fill=red] (9,0) circle (4pt);
\draw [fill=red] (11,0) circle (4pt);
\draw [fill=red] (13,0) circle (4pt);
\draw [fill=red] (15,0) circle (4pt);
\draw [fill=red] (17,0) circle (4pt);
\draw [fill=red] (19,0) circle (4pt);
\draw [fill=red] (21,0) circle (4pt);
\draw [fill=red] (23,0) circle (4pt);
\draw [fill=red] (25,0) circle (4pt);

\node[isosceles triangle,
    draw,
    rotate=90,
    fill=black,
    minimum size =0.05cm, scale=0.4] (T1)at (0.7,1){};

\node[isosceles triangle,
    draw,
    rotate=90,
    fill=black,
    minimum size =0.05cm, scale=0.4] (T2)at (1,0.4){};
\node[isosceles triangle,
    draw,
    rotate=90,
    fill=black,
    minimum size =0.05cm, scale=0.4] (T3)at (4.7,1){};
\node[isosceles triangle,
    draw,
    rotate=90,
    fill=black,
    minimum size =0.05cm, scale=0.4] (T4)at (8.5,1){};
\node[isosceles triangle,
    draw,
    rotate=90,
    fill=black,
    minimum size =0.05cm, scale=0.4] (T5)at (8.8,0.7){};
\node[isosceles triangle,
    draw,
    rotate=90,
    fill=black,
    minimum size =0.05cm, scale=0.4] (T6)at (12.5,1){};
\node[isosceles triangle,
    draw,
    rotate=90,
    fill=black,
    minimum size =0.05cm, scale=0.4] (T7)at (23,1){};
\node[isosceles triangle,
    draw,
    rotate=90,
    fill=black,
    minimum size =0.05cm, scale=0.4] (T8)at (16,1){};
\node[isosceles triangle,
    draw,
    rotate=90,
    fill=black,
    minimum size =0.05cm, scale=0.4] (T8)at (16.5,1){};
\node[isosceles triangle,
    draw,
    rotate=90,
    fill=black,
    minimum size =0.05cm, scale=0.4] (T8)at (20,1){};
\node[isosceles triangle,
    draw,
    rotate=90,
    fill=black,
    minimum size =0.05cm, scale=0.4] (T8)at (24,1){};


\end{footnotesize}
\draw[color=red] (-1,-0.8) node {$1$};
\draw[color=red] (1,-0.8) node {$2$};
\draw[color=red] (3,-0.8) node {$3$};
\draw[color=red] (5,-0.8) node {$4$};
\draw[color=red] (7,-0.8) node {$5$};
\draw[color=red] (9,-0.8) node {$6$};
\draw[color=red] (11,-0.8) node {$7$};
\draw[color=red] (13,-0.8) node {$8$};
\draw[color=red] (15,-0.8) node {$9$};
\draw[color=red] (17,-0.8) node {$10$};
\draw[color=red] (19,-0.8) node {$11$};
\draw[color=red] (21,-0.8) node {$12$};
\draw[color=red] (23,-0.8) node {$13$};
\draw[color=red] (25,-0.8) node {$14$};

\draw[color=black] (0.7,1.7) node {$s_1$};
\draw[color=black] (1.4,0.8) node {$s_2$};
\draw[color=black] (4.7,1.7) node {$s_3$};
\draw[color=black] (8.5,1.7) node {$s_4$};
\draw[color=black] (9.2,0.9) node {$s_5$};
\draw[color=black] (12.5,1.7) node {$s_6$};
\draw[color=black] (16,1.7) node {$s_{7}$};
\draw[color=black] (16.6,1.7) node {$s_{8}$};
\draw[color=black] (23,1.7) node {$s_{10}$};
\draw[color=black] (24,1.7) node {$s_{11}$};
\draw[color=black] (20,1.7) node {$s_9$};

\end{tikzpicture}
    \caption{An equilibrium with $11$ candidates and $14$ equidistant voters with consecutive distance $1$.}
    \label{fig:high-comp-3}
\end{figure}
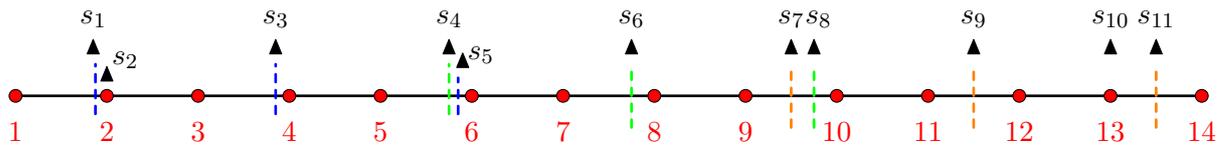

\end{document}